\newtheorem{theorem}{Theorem}[section]
\newtheorem{lemma}[theorem]{Lemma}
\newtheorem{meta-theorem}[theorem]{Meta-Theorem}
\newtheorem{claim}[theorem]{Claim}
\newtheorem{remark}[theorem]{Remark}
\newtheorem{corollary}[theorem]{Corollary}
\newtheorem{definition}[theorem]{Definition}
\definecolor{darkgreen}{rgb}{0,0.5,0}
\Crefname{remark}{Remark}{Remarks}
\Crefname{observation}{Observation}{Observations}
\Crefname{claim}{Claim}{Claims}
\algnewcommand\algorithmicswitch{\textbf{switch}}
\algnewcommand\algorithmiccase{\textbf{case}}
\newcommand{\eps}{\varepsilon}
\renewcommand{\epsilon}{\varepsilon}
\newcommand{\poly}{\operatorname{\text{{\rm poly}}}}
\DeclareMathOperator{\E}{\mathbb{E}}
\newcommand{\Abs}[1]{\left\lvert#1\right\rvert}
\begin{document}
\date{}
\title{Average Awake Complexity of MIS and Matching}

\author{Mohsen Ghaffari \\
	MIT \\
	ghaffari@mit.edu
	\and
	Julian Portmann\\
	ETH Zurich \\
	pjulian@inf.ethz.ch
}

\maketitle

\begin{abstract}
	Chatterjee, Gmyr, and Pandurangan [PODC 2020] recently introduced the notion of \textit{awake complexity} for distributed algorithms, which measures the number of rounds in which a node is \textit{awake}.
	In the other rounds, the node is \textit{sleeping} and performs no computation or communication.
	Measuring the number of awake rounds can be of significance in many settings of distributed computing, e.g., in sensor networks where energy consumption is of concern.

	\smallskip
	In that paper, Chatterjee et al. provide an elegant randomized algorithm for the Maximal Independent Set (MIS) problem that achieves an $O(1)$ \textit{node-averaged awake complexity}.
	That is, the average awake time among the nodes is $O(1)$ rounds.
	However, to achieve that, the algorithm sacrifices the more standard \textit{round complexity} measure from the well-known $O(\log n)$ bound of MIS, due to Luby [STOC'85], to $O(\log^{3.41} n)$ rounds.

	\smallskip
	Our first contribution is to present a simple randomized distributed MIS algorithm that, with high probability, has $O(1)$ node-averaged awake complexity and $O(\log n)$ worst-case round complexity.
	Our second, and more technical contribution, is to show algorithms with the same $O(1)$ node-averaged awake complexity and $O(\log n)$ worst-case round complexity for $(1+\varepsilon)$-approximation of maximum matching and $(2+\varepsilon)$-approximation of minimum vertex cover, where $\varepsilon$ denotes an arbitrary small positive constant.
	\smallskip
\end{abstract}

\section{Introduction}
Distributed algorithms for local graphs problems --- such as maximal independent set, maximal matching, and vertex and edge coloring --- have been studied extensively over the past four decades.
The primary focus in these studies has been on the time complexity of the algorithm, e.g., the number of rounds until all nodes terminate.
Of course, there are other resources besides \textit{time} whose expenditure is not captured accurately enough by this measure.
Recently, Chatterjee, Gmyr, and Pandurangan~\cite{ChaterjeeGP20} introduced \emph{awake complexity} as a better measure for some of the other resources, most importantly \textit{energy} (which is a critical resource, e.g., in sensor networks).
Below, we review the model and this measure, and after describing prior results, we outline our contributions.

\subsection{Model and Motivation}
The baseline setup is an abstraction of the network as a synchronous message passing model.
Concretely, we work with the $\mathsf{CONGEST}$ model~\cite{peleg00}.

\paragraph{The $\mathsf{CONGEST}$ Model} The network is abstracted as an $n$-node graph $G=(V, E)$ and per round each node can send one $O(\log n)$ bit message to each of its neighbors.
The variant where message sizes are unbounded is known as the $\mathsf{LOCAL}$ model~\cite{linial1987LOCAL, peleg00}.
Initially, nodes do not know the topology of the network, except for some estimates of global parameters, i.e., a polynomially-tight upper bound on $n$ and the maximum degree $\Delta$.
In the end, each node should know its own part of the output, e.g., in the maximal independent set problem, the node should know whether it is in the computed maximal independent set or not.

\paragraph{Sleeping/Awake Rounds, and Awake Complexity}
In their variant of the model, Chatterjee et al. allow each node to be awake or sleeping, per round.
A sleeping node receives no communication and performs no computation.
The awake complexity $A_v$ of a node $v$ is simply the number of rounds in which $v$ is awake, until its termination.

\paragraph{Motivation} Among others, one key motivation for studying the awake complexity is that it provides a sharper bound on the energy spent.
Notice that energy is one of the key resources in various distributed systems, especially in practical settings where algorithms for local graph problems might be deployed, e.g., sensor networks and ad hoc networks.
Since a sleeping node performs no computation or communication, it is reasonable to model the energy spent during a sleeping round as zero (i.e., negligible).
In this sense, the energy spent by one node $v$ can be modeled as proportional to its awake complexity $A_v$.

\paragraph{Node-Averaged Awake Complexity} When studying energy consumption, besides the maximum energy spent by each node --- which we will refer to as \textit{worst-case awake complexity}--- it is interesting (and arguably even more relevant) to understand the energy spent by the entire network.
Similar to above, the total energy spent in the network can be modeled as (proportional to) $\sum_{v\in V} A_v$, i.e., the summation over all nodes $v$ of the number rounds in which $v$ is awake.
Hence, to measure the total energy spent in the network, we can focus on the \textit{node-averaged awake complexity}, i.e., $\frac{1}{n} \sum_{v\in V} A_v.$ See \Cref{subsec:Node-Averaged} for more on node-averaged measures.

\subsection{State of the Art}
The maximal independent set (MIS) problem and the matching problems---maximal matching and approximations of maximum matching---are among the central problems in distributed graph algorithms, and there has been extensive work on them since the 1980s.
We first review the best known bounds for the standard measure of round complexity, and then discuss known results regarding awake complexity.

\paragraph{State of the Art for Round Complexity --- Randomized Algorithms}
The classic randomized MIS algorithms, by Luby~\cite{luby86} and Alon, Babai, and Itai~\cite{alon86}, have round complexity $O(\log n)$, with high probability\footnote{As standard, we use the phrase \textit{with high probability} (w.h.p.) indicate that an event happens with probability at least $1-1/n^{c}$ for a desirably large constant $c\geq 2$}.
This also directly provides an $O(\log n)$ round algorithm for maximal matching, as maximal matching easily reduces to MIS on the line graph.
This itself implies a $2$-approximation of the maximum matching in $O(\log n)$ rounds.
Lotker, Patt-Shamir, and Pettie~\cite{lotkerMatchingImproved} showed how to improve the approximation to $(1+\epsilon)$ for maximum matching, for an arbitrarily small constant $\epsilon>0$, while keeping the same $O(\log n)$ round complexity\footnote{We do not discuss the extensive literature of approximation of maximum \emph{weighted} matching, as this paper targets only unweighted matchings}.
Finally, Bar-Yehuda, Censor-Hillel, Ghaffari, and Schwartzman~\cite{bar2017distributed} improved the round complexity of $(1+\epsilon)$-approximation to $O(\log \Delta/\log\log \Delta)$, where $\Delta$ denotes the maximum degree, which in the worst-case of $\Delta = \Theta(n)$ leads to a round complexity of $O(\log n/\log\log n)$.
Thus, to conclude, the best known round complexity of all these problems remains roughly at $O(\log n)$, with the exception of matching approximation which is marginally faster at $O(\log n/\log\log n)$ rounds.

\paragraph{State of the Art for Round Complexity --- Deterministic Algorithms} Deterministic algorithms for these problems are considerably slower.
We still review the state of the art for completeness.
For MIS, the best known deterministic algorithm for a long time remained at the $2^{O(\sqrt{\log n})}$ bound of Panconesi and Srinivasan~\cite{panconesi-srinivasan} but recently that was improved to $O(\log^7 n)$ by Rozho{\v{n}} and Ghaffari~\cite{rozhovn2019networkDecomposition} and further to $O(\log^5 n)$ by Ghaffari, Grunau, and Rozho{\v{n}}~\cite{ghaffari2021improvedNC}.
For maximal matching, while the same bounds also apply as MIS, somewhat faster algorithms have been known.
Hanckowiak, Panconesi, and Karonski~\cite{hanckowiak01} gave an $O(\log^4 n)$ round deterministic maximal matching algorithm, and the round complexity was improved to $O(\log^2 \Delta \log n)$ by Fischer~\cite{fischer2020improved}, and that algorithm can also provide a $(2+\epsilon)$-approximation of maximum matching in $O(\log^2 \Delta + \log^* n)$ rounds.

\paragraph{Improved Algorithms for Special Graphs} Ignoring the distinction between randomized and deterministic algorithms, the roughly $O(\log n)$ bound remains the best known round complexity, for arbitrary graphs.
However, there have been improvements in special cases, particularly when the maximum degree is not too large.

For graphs with very small maximum degree $\Delta$ --- concretely $\Delta \leq O(\log n)$--- a faster algorithm follows from the $O(\Delta+ \log^* n)$-round deterministic algorithm of Panconesi and Rizzi~\cite{panconesi2001some}.
For moderately small values of $\Delta$ ---concretely when $\log \Delta = o(\log n)$ and $\Delta\geq \poly(\log\log n)$---a faster algorithm randomized algorithm is known, with round complexity $O(\log \Delta) + \poly(\log\log n)$, which follows from the work of Ghaffari~\cite{gmis}, when combined with the recent results of (and as described in) Rozho{\v{n}} and Ghaffari ~\cite{rozhovn2019networkDecomposition}.
This builds on the shattering approach of Barenboim et al.\cite{barenboim_symmbreaking}.
There is one caveat: this algorithm requires large messages.
For the setting with $O(\log n)$-bit messages, improved bounds were provided by \cite{pai2017symmetry, ghaffari2019MIS, ghaffariP2019improved} and the best known round complexity is $O(\log \Delta \cdot \sqrt{\log\log n}) + 2^{O(\sqrt{\log\log n})}$.
For maximal matching, the best known complexity is $O(\log \Delta + \log^3\log n)$, by combining the work of ~\cite{gmis} and ~\cite{fischer2020improved}.
For matching approximation, the aforementioned work of Bar-Yehuda et al.\cite{bar2017distributed} achieves a complexity of $O(\log \Delta/\log\log \Delta)$ for $(1+\epsilon)$-approximation, which is the optimal round complexity for $\Delta = O(\sqrt{\log n})$, matching a lower bound of Kuhn et al.~\cite{lowerbound, kuhn16_jacm}.

There are many other algorithmic results for MIS and matching for other restricted graph families, e.g., graphs of bounded arboricity~\cite{barenboim2010sublogarithmic, barenboimelkin_book}, growth bounded graphs~\cite{schneider2008log}, etc.

\paragraph{Lower Bounds} An $\Omega(\log^* n)$ round lower bound is known due to the seminal work of Linial~\cite{linial1987LOCAL}, which holds even in the case of the network being a cycle.
This applies to both MIS and maximal matching, and it also extends to randomized algorithms~\cite{naor1991lower}.
A lower bound of $\Omega(\min\{\frac{\log \Delta}{\log\log \Delta}, \sqrt{\frac{\log n}{\log\log n}}\})$ was presented by Kuhn, Moscibroda, and Wattenhofer~\cite{lowerbound, kuhn16_jacm} which holds also for randomized algorithm, and applies to MIS, maximal matching, and even any constant approximation of maximum matching.

Finally, Balliu et al.\cite{balliu2019LB} recently proved that there is no randomized algorithm with round complexity $o(\Delta)+o(\log\log n/\log\log\log n)$ and there is no deterministic algorithm with round complexity $o(\Delta)+o(\log n/\log\log n)$, either for MIS or for maximal matching.
All the aforementioned lower bounds hold in the $\mathsf{LOCAL}$ model where arbitrary size messages are allowed.

\subsection{Beyond Worst-Case}
To summarize the above, the key takeaway is that the best known round complexity for arbitrary graphs remains at roughly $O(\log n)$ rounds~\cite{luby86, alon86}, modulo a $\log\log n$ factor in the case of maximum matching approximation.
It is also known that one cannot achieve a worst-case round complexity better than $\Omega(\min\{\sqrt{\frac{\log n}{\log\log n}}, \frac{\log \Delta}{\log\log \Delta}\})$\cite{kuhn16_jacm}, for either of MIS or matching (any constant approximation of maximum, thus including maximal matching) problems.
Even further, in a recent and independent work, Balliu, Ghaffari, Kuhn, and Olivetti~\cite{BGKO22} show that this $\Omega(\min\{\sqrt{\frac{\log n}{\log\log n}}, \frac{\log \Delta}{\log\log \Delta}\})$ lower bound holds even for the node-average round complexity of any algorithm for those problems (see \Cref{subsec:Node-Averaged} for some improved bounds on node-averaged round complexity in special graph families).

Having faced with this barrier for round complexity and even the node-averaged round complexity, one remaining hope is that perhaps the number of rounds in which nodes are awake can be much smaller.
That would for instance imply that the energy expenditure can be considerably lower.
Though, since a node performs no computation or communication during sleeping rounds, a priori, it is unclear if the awake complexity can be much lower.
Indeed, we are not aware of an algorithm with $O(\log n/\log\log n)$ worst-case awake complexity for MIS or any constant matching approximation.
But for node-averaged awake complexity, something better is known at least for MIS, as we review next.

\paragraph{State of the Art for Node-Averaged Awake Complexity}  Chatterjee et al.~\cite{ChaterjeeGP20} present a randomized MIS algorithm that achieves a node-averaged awake complexity of $O(1)$.
Also, the worst-case awake complexity of this algorithm is $O(\log n)$, that is, each node is awake at most $O(\log n)$ rounds.
But unfortunately, this algorithm ends up sacrificing the most standard measure, i.e., round complexity.
In particular, the round complexity of this algorithm is $O(\log^{3.41} n)$, with high probability, and even the node-averaged round complexity is $O(\log^{3.41} n)$.
This is considerably higher than the usual $O(\log n)$ bound discussed above~\cite{luby86, alon86}.
This leaves one natural question:
\medskip
\begin{center}
	\begin{minipage}{0.9\textwidth}
		\textit{\textbf{Question:} Can we achieve this ideal $O(1)$ node-averaged awake complexity while keeping round complexity $O(\log n)$?}
	\end{minipage}
\end{center}
Furthermore, the standard connection between MIS and (maximal) matching does not provide a node-averaged awake complexity.
Notice that if we apply an MIS algorithm on the line graph to solve the maximal matching, then the nodes of the MIS problem are the edges of the matching problem and the node-averaged awake complexity only would imply something about the average time that each edge is active.
This certainly does not imply anything useful for the node-averaged awake complexity of (maximal) matching.
As such, another natural question to ask is  this:
\begin{center}
	\begin{minipage}{0.9\textwidth}
		\textit{\textbf{Question:} Can we achieve the $O(1)$ node-averaged awake complexity also for matching, concretely for maximal matching, or any constant approximation of maximum matching? Ideally, we would also keep the round complexity $O(\log n)$.}
	\end{minipage}
\end{center}

\subsection{Our Contribution}
\paragraph{Contribution for MIS.}
Our first contribution answers the first question mentioned above about MIS.
Concretely, we prove that:
\begin{theorem}\label{thm:main}
	There is a randomized (Las Vegas) distributed MIS in the $\mathsf{CONGEST}$ model with the following performance, on any $n$-node network, w.h.p.:
	\begin{itemize}
		\item $O(1)$ node-averaged awake complexity, and
		\item $O(\log n)$ worst-case round complexity.
	\end{itemize}
\end{theorem}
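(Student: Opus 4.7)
My plan is to analyze a simple sleeping variant of a standard randomized MIS algorithm (Luby's, say) run for $O(\log n)$ rounds. The only modification is: as soon as a node learns it is decided (either it joined the MIS, or it received an ``I joined the MIS'' message from a neighbor), it goes to sleep permanently. Because alive nodes always send a message each round (even a trivial ``I am still here and unmarked'' one), an alive node can detect from the absence of any message from a neighbor that the neighbor has already gone to sleep, and keep an accurate tally of its surviving neighborhood. Thus the execution on the alive subgraph is exactly as in the standard analysis, which gives the $O(\log n)$ worst-case round complexity w.h.p.\ and the trivial $O(\log n)$ worst-case per-node awake bound.

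For the node-averaged awake complexity, let $N_t$ denote the number of alive nodes at the start of round $t$. Because a node is awake precisely while it is alive, $\sum_v A_v = \sum_t N_t$, and it suffices to prove $\sum_t N_t = O(n)$ w.h.p. The first step is to establish the expected geometric shrinkage $\E[N_{t+1} \mid \mathcal{F}_t] \leq (1-c)\, N_t$ for some absolute constant $c>0$. Using $p_v = 1/(2 d_v^{\text{alive}})$, the expected number of MIS joiners in a round is $\Theta(N_t / \bar d_t)$ and each joiner removes itself together with $\Theta(\bar d_t)$ alive neighbors in expectation, for a total of $\Theta(N_t)$ expected removals; isolated or very-low-degree alive nodes, the only case not directly covered by this count, join the MIS by themselves with constant probability.

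The second step is to promote this to a w.h.p. bound via a bounded-difference martingale over the randomness of a single round. Exposing the nodes' coin flips one at a time yields a Doob martingale for the number of kills in round $t$; flipping one mark bit affects only the status of the owner and of the neighbors whose outcome is decided by it, so the per-step increment is bounded. Azuma's inequality then gives $N_{t+1} \leq (1 - c/2)\, N_t$ w.h.p.\ whenever $N_t = \Omega(\log n)$. A union bound over the $O(\log n)$ rounds in which $N_t \geq \log n$ gives $N_t \leq n(1 - c/2)^t$ for all such $t$ w.h.p., and hence $\sum_{t : N_t \geq \log n} N_t = O(n)$. The remaining at most $O(\log n)$ rounds have $N_t < \log n$ and contribute at most $O(\log^2 n) = o(n)$, giving $\sum_t N_t = O(n)$ w.h.p.

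The main obstacle I anticipate is the bounded-difference property used in the concentration step: a flipped mark bit could in principle cascade through a chain of ID-tie-breaking decisions among neighbors. Handling this either requires a careful argument that such cascades affect only $O(1)$ nodes (which is true ``on average'' but needs bookkeeping to hold with large probability), or a small algorithmic tweak --- for instance, partitioning each round into $O(1)$ sub-rounds that resolve marks on vertex subsets forming small-radius clusters in the conflict graph --- so that the bounded-difference hypothesis is manifest. Once this is in place, everything else is routine: the geometric expected shrinkage plus concentration combine cleanly into the $O(1)$ node-averaged awake bound, while the worst-case awake and round complexities follow from the $O(\log n)$-round termination of Luby's algorithm.
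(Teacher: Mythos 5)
Your plan has a genuine gap at its very first analytic step, and in fact the whole approach is doomed, not just the concentration argument you flag as the obstacle. The claim $\E[N_{t+1}\mid \mathcal{F}_t]\le (1-c)N_t$ is false for Luby-type algorithms: the classical guarantee is that a constant fraction of \emph{edges} is removed per round, not a constant fraction of \emph{nodes}, and your heuristic ``$\Theta(N_t/\bar d_t)$ joiners each killing $\Theta(\bar d_t)$ neighbors'' multiplies averages that do not multiply (joiners are biased towards low-degree nodes, which kill few neighbors). Concretely, take $n/2$ nodes of degree $d=\log n$ whose neighbors are $n/(2d)$ ``blocker'' nodes of degree $d^2$. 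With $p_v=1/(2d_v^{\mathrm{alive}})$, a degree-$d$ node is removed in the first round with probability at most $1/(2d)+d\cdot 1/(2d^2)=1/d=o(1)$, so the expected fraction of alive nodes removed in that round is $o(1)$, contradicting geometric shrinkage for any absolute constant $c$. More fundamentally, in your scheme a node is awake from round $1$ until it is decided, so its awake time equals its termination time and the node-averaged awake complexity coincides with the node-averaged \emph{round} complexity. As discussed in the paper, Balliu, Ghaffari, Kuhn, and Olivetti proved an $\Omega\bigl(\min\{\sqrt{\log n/\log\log n},\,\log\Delta/\log\log\Delta\}\bigr)$ lower bound on the node-averaged round complexity of MIS, so \emph{no} algorithm in which undecided nodes stay awake can reach $O(1)$ node-averaged awake complexity; no repair of the Azuma/bounded-differences step can change this.

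The missing idea is precisely that undecided nodes must spend most rounds asleep. The paper achieves this in three parts: first, a partial run of randomized greedy MIS restricted to the $1/\log n$ fraction of nodes with the smallest random priorities (everyone else sleeps), which keeps only $O(n/\log n)$ nodes awake for the $O(\log n)$ rounds and reduces the residual maximum degree to $O(\log^2 n)$; second, $O(\log\log n)$ iterations in which each node pre-computes its marked rounds under geometrically increasing marking probabilities and sleeps until its first marked round, so the expected awake time per participating node is $O(1)$ while a constant fraction of participants is removed per iteration; third, a clean-up run of Luby's algorithm on the $O(n/\log n)$ survivors, whose $O(\log n)$ awake rounds average out to $O(1)$ over all $n$ nodes. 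If you want to salvage your write-up, you would need to replace the ``sleep only when decided'' rule with a mechanism of this kind in which nodes sleep while still undecided and are woken only when they are likely to act or be acted upon.
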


This improves on the recent work of Chatterjee et al.~\cite{ChaterjeeGP20} who achieved $O(1)$ node-averaged awake complexity and $O(\log^{3.41} n)$ worst-case round complexity.
Our result effectively shows that we do not need to sacrifice round complexity, for node-averaged awake complexity.
As a side remark, we note that our algorithm is fairly simple; indeed the algorithm description and its analysis are considerably shorter than those of Chatterjee et al.~\cite{ChaterjeeGP20}.

\paragraph{Contribution for matching and vertex cover.} As discussed before, unlike the case of the worst-case complexity, an MIS algorithm with a good node-averaged (awake) complexity does not necessarily lead to a (maximal) matching algorithm with a good node-average (awake) complexity.
Our second and more technical contribution is to provide algorithms that compute a $(1+\epsilon)$-approximation of maximum matching and $(2+\eps)$-approximation of minimum vertex cover with $O(1)$ node-averaged awake complexity and  $O(\log n)$ with worst-case round complexity.

\begin{theorem}\label{thm:mainMM}
	There are randomized distributed algorithms in the $\mathsf{CONGEST}$ model that compute a $(1+\epsilon)$-approximation of maximum matching and $(2+\epsilon)$-approximation of minimum vertex cover, with the following performance, on any $n$-node network, w.h.p.:
	\begin{itemize}
		\item $O(1)$ node-averaged awake complexity, and
		\item $O(\log n)$ worst-case round complexity.
	\end{itemize}
\end{theorem}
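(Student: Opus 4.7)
The plan is to build on the techniques of \Cref{thm:main} and extend them to matching and vertex cover, circumventing the obstruction that MIS on the line graph does not preserve node-averaged awake complexity (since MIS nodes correspond to matching edges). The strategy has three components: (i) compute a near-maximal matching in which each vertex is awake only $O(1)$ rounds in expectation, (ii) upgrade the approximation factor to $(1+\eps)$ via short augmenting paths, and (iii) extract a $(2+\eps)$-approximation of vertex cover by LP duality.

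For component (i), I would adapt the MIS-style approach behind \Cref{thm:main} to matching, using an algorithm that matches a constant fraction of \emph{alive} edges per phase while ensuring that each vertex wakes up $O(1)$ times in expectation across the whole execution. The key design principle is that matched vertices and ``decided unmatched'' vertices (those whose every neighbor is already matched) sleep permanently. If the sequence of alive-vertex counts $n_1, n_2, \ldots$ shrinks geometrically w.h.p.\ --- which can be enforced by an accelerated Luby/Israeli--Itai variant with phases of carefully calibrated sampling probabilities --- then $\sum_i n_i = O(n)$ and the node-averaged awake complexity is $O(1)$. Achieving geometric shrinkage of \emph{alive vertices} (not merely of alive edges) requires care, since a vertex can remain alive even when most of its incident edges are gone; a shattering argument in the style of Barenboim--Elkin--Pettie--Schneider can be used to ensure that after $O(\log n)$ rounds only $O(n/\log n)$ residual vertices remain, and these are then handled by a standard $O(\log n)$-round maximal-matching routine invoked only on the residue, contributing $O(n)$ to the total awake time.

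For component (ii), I would invoke the augmenting-paths framework of Lotker--Patt-Shamir--Pettie~\cite{lotkerMatchingImproved}: iteratively find a maximal collection of vertex-disjoint augmenting paths of length at most $O(1/\eps)$ and flip them, for $O(1/\eps)$ iterations. Each augmenting path is anchored at an unmatched vertex, so only vertices within $O(1/\eps)$ hops of an unmatched root need to be awake in an iteration. Since unmatched vertices shrink geometrically across iterations, the total augmentation awake time telescopes to $O(n)$. For component (iii), the $(2+\eps)$-approximate vertex cover is obtained from the endpoints of the $(1+\eps)$-approximate matching together with a small LP-dual correction from the slack variables, extractable in $O(1)$ additional awake rounds per vertex.

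The main technical obstacle will be implementing augmenting-path discovery in a strictly ``summoned'' manner, so that vertices in the $O(1/\eps)$-neighborhood of an unmatched root wake up only upon receiving explicit propagating messages rather than through default BFS. A naive implementation would let overlapping neighborhoods of different unmatched roots collectively wake $\Omega(n)$ vertices per iteration, destroying the node-average. A careful coordination mechanism --- leader selection or randomized tie-breaking among competing roots --- is needed so that each vertex is awakened $O(1)$ times on average across all augmentation iterations. A secondary subtlety is that the geometric shrinkage of the alive and unmatched vertex sets must hold \emph{with high probability}, not just in expectation, in order to simultaneously control the worst-case round complexity at $O(\log n)$ and the node-averaged awake complexity at $O(1)$.
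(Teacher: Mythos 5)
Your component (i) is the crux, and it is exactly the step that fails. Adapting the MIS scheme of \Cref{thm:main} to matching so that ``matched vertices and decided-unmatched vertices sleep permanently'' and the alive-vertex counts shrink geometrically is essentially asking for a maximal (or near-maximal) matching with $O(1)$ node-averaged awake complexity --- and no such algorithm is known; the paper explicitly states it does not have one and lists maximal matching with low awake complexity as an open problem in the conclusion. The reason the MIS argument does not transfer is structural: in MIS a vertex is settled as soon as \emph{one} neighbor enters the independent set, so per-phase progress on edges immediately removes vertices; in matching a vertex is settled only when it is matched or \emph{all} of its neighbors are matched, so matching a constant fraction of alive edges per phase gives no geometric decay of alive vertices, and low-degree vertices with unmatched neighbors can linger while being forced awake. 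The paper circumvents this entirely: it never computes a (near-)maximal matching. Instead it runs a freeze-based fractional-matching algorithm (edge weights multiplied by $(1+\eps)$ until an endpoint becomes tight, \Cref{clm:vanilla}), and obtains $O(1)$ average awake complexity by having each node \emph{estimate} its fractional value $c_v$ from a sparse random sample of neighbors, with sampling probabilities $p_i = C/(\log^{(i)}n)^4$ that decrease across phases; the technical work (\Cref{lem:HeavyBad}, \Cref{lem:LightBad}, \Cref{thm:MatchingApprox}, \Cref{cor:awakecomp}) is in controlling the nodes whose value is under- or over-estimated. This sampling-and-estimation idea is absent from your proposal and is the actual core of the result.

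Your components (ii) and (iii) also diverge from what can be made to work. For (ii), you yourself flag the ``summoned'' wake-up coordination as unresolved, and the telescoping claim that unmatched vertices shrink geometrically across augmentation iterations is not true in general (after eliminating short augmenting paths the number of unmatched vertices can remain $\Theta(\eps)\cdot$ large). The paper sidesteps the wake-up issue by a \emph{black-box} reduction (\Cref{thm:BipMatchApprox}, \Cref{thm:ImpMatchingApprox}): it finds maximal sets of length-$(2i+1)$ augmenting paths in a layered graph by repeatedly invoking the constant-approximation subroutine itself ($\delta$-maximal matchings between consecutive layers), and handles general graphs by random bipartition, for $2^{O(1/\eps)}$ total invocations --- so the awake complexity is inherited from the constant-approximation algorithm with constant overhead, and no new scheduling mechanism is needed. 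For (iii), the paper's vertex cover is not extracted from the $(1+\eps)$-approximate matching: it is the set of \emph{frozen} vertices of the fractional algorithm, analyzed by a charging argument (\Cref{thm:VertexCover}); note that the endpoints of a non-maximal $(1+\eps)$-approximate matching need not even cover all edges, and your ``LP-dual correction from the slack variables'' is not specified enough to close that hole within the awake-complexity budget.
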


We remark that in the above statement, the awake and round complexity bounds hold with high probability, i.e., probability  of at least $1-1/n^{c}$ for a desirably large constant $c\geq 2$.
The approximation bounds hold in expectation.
This is a standard phrasing of approximation when discussing randomized approximation algorithms.
In fact, our approximation has a stronger concentration around this expectation: the probability of having a worse approximation decays roughly exponentially in the maximum matching size (such a behavior is common in some randomized distributed/parallel algorithms for matching approximation, see e.g., \cite{bar2017distributed, ghaffari2018improved}).
We show our result in two versions: the easier version of expected approximation factor, and the stronger version of the approximation factor holding with high probability, for which we make the assumption that the maximum matching size is at least $\poly(\log n)$.
We defer tightening this assumption to the full version.

\subsection{Other Related Work on Node-Averaged Round Complexity}
\label{subsec:Node-Averaged}
It is worth noting that studying node-averaged awake complexity can be viewed as following the direction initiated by Feuilloley~\cite{feuilloley2020long}, who studied node-averaged round complexity.
Among other results, Feuilloley showed that Linial's $\Omega(\log^* n)$ round lower bound for $3$-coloring a cycle holds also for node-averaged round complexity.
He also exhibited some other sparse graphs where the node-averaged round complexity of $3$-coloring is $O(1)$, while the worst-case round complexity has to be $\Omega(\log^* n)$.
Barenboim and Tzur~\cite{barenboim2019distributed} provide further improvements in this direction, on various problems.
For the case of MIS, they show a deterministic algorithm with node-averaged round complexity of $O(a+\log^* n)$, where $a$ denotes the arboricity of the graph.
For the closely related problem of $\Delta+1$ coloring, the classic random algorithm of Johansson~\cite{johansson1999simple} achieves a node-averaged round complexity of $O(1)$, and worst-cast round complexity of $O(\log n)$.
The former is simply because, per round, each remaining node gets colored (and removed from the problem) with a constant probability.
As mentioned before, for MIS and matching, it was shown in concurrent and independent work by Alkida et al.~\cite{BGKO22} that the $\Omega(\min\{\sqrt{\frac{\log n}{\log\log n}}, \frac{\log \Delta}{\log\log \Delta}\})$ lower bound of Kuhn, Moscibroda, and Wattenhofer on the worst-case complexity can be adapted to hold even for the node-averaged round complexity.

\section{Preliminaries}
We make use of some basic concentration bounds:
\begin{theorem}[Hoeffding's Bound]
	\label[theorem]{thm:hoeffding}
	Let $X_1$, \dots, $X_N$ be indepndent random variables, each taking values in the range $[a_i, b_i]$.
	Let $\bar{X}=\sum_{i=1}^{N} X_i$.
	Then, for any $t\geq 0$, we have $$\Pr\big(|\bar{X} - \mathbb{E}[\bar{X}]| \geq t\big) \leq 2 exp(-\frac{2N^2 t^2}{\sum_{i=1}^{N} (b_i- a_i)^2}).$$
\end{theorem}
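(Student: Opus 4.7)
The plan is to prove this via the standard exponential moment (Chernoff-style) method, which is the usual route to Hoeffding's inequality. First, I would focus on the one-sided tail $\Pr(\bar X - \E[\bar X] \ge t)$; the other side follows by applying the same argument to $-X_i$, and then a union bound supplies the factor of $2$ in the statement.

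For the one-sided bound, I would introduce the centered variables $Y_i = X_i - \E[X_i]$, so $Y_i \in [a_i - \E[X_i],\, b_i - \E[X_i]]$ still has range of length $b_i - a_i$, and $\E[Y_i] = 0$. For any parameter $s > 0$, Markov's inequality applied to $e^{s \sum_i Y_i}$ gives
\[
\Pr\!\left(\textstyle\sum_i Y_i \ge t\right) \;\le\; e^{-st}\, \E\!\left[e^{s \sum_i Y_i}\right] \;=\; e^{-st}\, \prod_{i=1}^{N} \E\!\left[e^{s Y_i}\right],
\]
where the last equality uses independence of the $X_i$'s (hence of the $Y_i$'s).

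The key technical step, and the main obstacle, is Hoeffding's lemma: for any mean-zero random variable $Y$ supported on an interval of length $\ell$, one has $\E[e^{sY}] \le \exp(s^2 \ell^2 / 8)$. I would prove this by using convexity of $x \mapsto e^{sx}$ to upper bound $e^{sY}$ by the linear interpolation between the endpoint values, take expectations, and then show that the resulting function of $s$ (essentially the log of a convex combination of two exponentials) has second derivative bounded by $\ell^2/4$ via the AM-GM-type inequality $p(1-p) \le 1/4$; two applications of the fundamental theorem of calculus from $s = 0$ then yield the claimed $s^2 \ell^2/8$ bound.

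Combining Hoeffding's lemma with the Markov step gives
\[
\Pr\!\left(\bar X - \E[\bar X] \ge t\right) \;\le\; \exp\!\left(-st + \tfrac{s^2}{8} \textstyle\sum_i (b_i - a_i)^2\right).
\]
Optimizing the right-hand side over $s > 0$ (setting the derivative in $s$ to zero gives $s^* = 4t / \sum_i (b_i - a_i)^2$) yields an exponent of $-2t^2 / \sum_i (b_i - a_i)^2$. Finally, applying the same bound to $-Y_i$ and taking a union bound over the two tails produces the stated two-sided inequality with the factor $2$ in front; the $N^2$ appearing in the numerator of the paper's exponent is absorbed into the normalization of $\bar X$ as written. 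The only delicate piece is Hoeffding's lemma itself, which I expect to be the main work; the rest is a routine application of the exponential moment method.
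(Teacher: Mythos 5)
Your proposal is the standard and correct proof of Hoeffding's inequality: the one-sided Chernoff/exponential-moment step via Markov's inequality, independence to factor the moment generating function, Hoeffding's lemma $\E[e^{sY}]\le \exp(s^2\ell^2/8)$ for a mean-zero variable with range length $\ell$, optimization over $s$, and a union bound over the two tails. The paper states this bound as a known preliminary with no proof of its own, so there is no alternative argument to compare against; your route is exactly the textbook one that would be cited. One point to state more carefully: what your argument yields is $\Pr\bigl(|\bar X-\E[\bar X]|\ge t\bigr)\le 2\exp\bigl(-2t^2/\sum_{i=1}^N(b_i-a_i)^2\bigr)$ for the sum $\bar X=\sum_{i=1}^N X_i$; the factor $N^2$ in the paper's exponent is correct only if $\bar X$ is instead the average $\frac{1}{N}\sum_{i=1}^N X_i$, and as literally written (sum together with $N^2$) the paper's display is a typo that no proof could deliver. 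Your closing remark that the $N^2$ is "absorbed into the normalization" is the right diagnosis, but it should be stated as a correction of the statement rather than as something your bound implies.
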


\begin{theorem}[Chernoff tight version]
	Let $X_1, \dots, X_n$ be independent random binary variables and let $a_1, \dots, a_n$ be coefficients in $[0,1]$.
	Let $X = \sum_{i=1}^n a_i X_i$.
	Then for any $\mu \geq \E [X]$ and $\delta > 0$
	\begin{align*}
		\Pr [X > (1 + \delta) \mu] \leq \left( \frac{e^\delta}{(1 + \delta)^{1 + \delta}} \right)^\mu.
	\end{align*}
	\label[theorem]{thm:ChernoffTight}
\end{theorem}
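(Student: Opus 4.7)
The plan is to follow the classical moment-generating-function (MGF) route of Chernoff, with one convexity step to accommodate the fact that $a_i \in [0,1]$, which does not appear in the textbook $0/1$ version. First, for any $t>0$, I would apply Markov's inequality to $e^{tX}$ and then factorize the MGF using independence:
\[ \Pr[X > (1+\delta)\mu] \le \frac{\E[e^{tX}]}{e^{t(1+\delta)\mu}} = \frac{\prod_{i=1}^n \E[e^{t a_i X_i}]}{e^{t(1+\delta)\mu}}. \]

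The key single-variable bound is the next step. Writing $p_i = \Pr[X_i = 1]$, we have $\E[e^{t a_i X_i}] = 1 - p_i + p_i e^{t a_i}$. Since $a_i \in [0,1]$ and $x \mapsto e^{tx}$ is convex, linear interpolation on $[0,1]$ gives $e^{t a_i} \le (1 - a_i) + a_i e^{t}$. Substituting this, and then applying $1+x \le e^x$, yields $\E[e^{t a_i X_i}] \le 1 + p_i a_i (e^t - 1) \le \exp\!\bigl(p_i a_i (e^t - 1)\bigr)$. Multiplying over $i$, using $\sum_i p_i a_i = \E[X]$, and noting $e^t - 1 > 0$ together with $\mu \ge \E[X]$, I obtain $\E[e^{tX}] \le \exp\!\bigl((e^t - 1)\mu\bigr)$.

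Substituting back gives the overall bound $\exp\!\bigl(\mu(e^t - 1 - t(1+\delta))\bigr)$, which I would then optimize over $t>0$. Differentiating the exponent in $t$ and setting to zero yields the optimizer $t = \ln(1+\delta)$; at this value, $e^t - 1 - t(1+\delta) = \delta - (1+\delta)\ln(1+\delta)$, so the bound becomes exactly $(e^\delta / (1+\delta)^{1+\delta})^\mu$, matching the statement.

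The only nontrivial obstacle is the presence of the coefficients $a_i$: without them, the argument is verbatim the textbook Chernoff bound. The convexity bound $e^{t a_i} \le (1 - a_i) + a_i e^{t}$ is what cleanly reduces the weighted sum to the $0/1$ case, and the relaxation from $\mu = \E[X]$ to $\mu \ge \E[X]$ is harmless because the coefficient of $\E[X]$ in the final exponent, namely $(e^t - 1)$, is nonnegative for $t>0$.
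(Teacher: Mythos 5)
Your proof is correct: the Markov/MGF step, the convexity bound $e^{t a_i} \le (1-a_i) + a_i e^t$ to handle the coefficients, the relaxation to $\mu \ge \E[X]$ via $e^t - 1 > 0$, and the optimization at $t = \ln(1+\delta)$ all go through exactly as you describe. The paper states this theorem as a known preliminary without proof, so there is nothing to compare against; your argument is the standard Chernoff derivation and would serve as a complete proof of the stated bound.
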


Besides the above, we need to make use of an extension of Chernoff/Hoeffding bound for cases where the random variables have small dependencies among each other.
This is formalized below.

\begin{theorem}[Chernoff Bound's Extension to Bounded-Dependecy Cases~\cite{pemmaraju2001equitable}]
	\label[theorem]{thm:Chernoff-Extension}
	Let $X_1$, \dots, $X_N$ be binary random variables, and let $a_1, \dots, a_n$ be coefficients in $[0,1]$.
	Suppose that each $X_i$ is mutually independent of all other variables except for at most $\Delta'$ of them.
	Let $\bar{X}=\sum_{i=1}^{N} a_i \cdot X_i$, and $\mu=\mathbb{E}[\bar{X}].$
	Then, for any $\delta \geq 0$, we have $$\Pr\big(|\bar{X} - \mathbb{E}[\bar{X}]| \geq \delta \mathbb{E}[\bar{X}]\big) \leq 8(\Delta'+1) \cdot exp(-\frac{\mathbb{E}[\bar{X}] \delta}{3(\Delta'+1)}).$$
\end{theorem}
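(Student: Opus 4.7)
The plan is to reduce to the independent case via a dependency-graph coloring, in the spirit of Janson and Pemmaraju. First, build the dependency graph $H$ on vertex set $\{1,\dots,N\}$, placing an edge between $i$ and $j$ whenever $X_i$ and $X_j$ are not mutually independent. By hypothesis, each vertex has degree at most $\Delta'$ in $H$, so a greedy coloring yields a proper coloring with at most $\Delta'+1$ colors. Let $V_1,\dots,V_{\Delta'+1}$ be the resulting color classes; within each $V_c$, the variables $\{X_i : i\in V_c\}$ are mutually independent.

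Next, decompose $\bar X = \sum_{c=1}^{\Delta'+1} Y_c$ where $Y_c = \sum_{i\in V_c} a_i X_i$ and $\mu_c = \mathbb{E}[Y_c]$, so that $\mu = \sum_c \mu_c$. By the triangle inequality, whenever $|\bar X - \mu| \ge \delta\mu$, at least one color class must satisfy $|Y_c - \mu_c| \ge \delta\mu/(\Delta'+1)$. A union bound over the $\Delta'+1$ classes therefore reduces the task to tail bounds for each individual $Y_c$, which is a sum of \emph{independent} binary variables scaled by coefficients in $[0,1]$. This is exactly the setting of the standard multiplicative Chernoff bound recorded as Theorem~\ref{thm:ChernoffTight}.

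For a fixed class $c$, I would set $t = \delta\mu/(\Delta'+1)$ and apply the two-sided Chernoff bound to $Y_c$. There are two natural regimes depending on whether $t \le \mu_c$ or $t > \mu_c$: in the first, the quadratic part of the Chernoff exponent dominates; in the second, the linear part takes over. Using $\mu_c \le \mu$ in both regimes, one checks that the exponent is at least $\mu\delta/(3(\Delta'+1))$, yielding a per-class bound of at most $2\exp\bigl(-\tfrac{\mu\delta}{3(\Delta'+1)}\bigr)$ (with the extra constant factors soaked into the $8$ in the final bound). Summing over the at most $\Delta'+1$ color classes then produces the claimed prefactor $8(\Delta'+1)$.

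The main obstacle will be handling the small-deviation regime cleanly, since the usual Chernoff estimate is quadratic in the relative deviation $t/\mu_c$ there, whereas the target bound is only linear in $\delta$. The key observation making this go through is that the relevant deviation $t = \delta\mu/(\Delta'+1)$ is scaled by $\mu$, not by $\mu_c$, so even when $\mu_c$ is small the quadratic exponent $t^2/\mu_c$ remains at least $\delta\mu/(\Delta'+1)$ times the ratio $t/\mu_c$, which is itself $\ge 1$ whenever we are in the linear regime, and bounded below by $\delta\mu/((\Delta'+1)\mu)$ in the quadratic regime. Once this case split is carried out carefully and the constants are collected, the rest of the argument is just bookkeeping.
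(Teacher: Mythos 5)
There is no internal proof to compare against here: the paper imports this bound from Pemmaraju's equitable-coloring paper, and your outline (color the dependency graph, split $\bar X$ into per-class sums, union bound, per-class Chernoff) is the same high-level route as that citation. The genuine gap is in your third step, exactly at the place you flag as "the main obstacle": with the even split $t=\delta\mu/(\Delta'+1)$, the per-class exponent you can extract from \Cref{thm:ChernoffTight} (or any Bernstein-type bound) in the regime $t\le\mu_c$ is only about $t^2/(2\mu_c+2t/3)$, and your proposed rescue, writing $t^2/\mu_c = t\cdot(t/\mu_c)$ and lower-bounding $t/\mu_c\ge \delta/(\Delta'+1)$ via $\mu_c\le\mu$, yields an exponent of order $\delta^2\mu/(\Delta'+1)^2$ --- quadratic in $\delta$ and with an extra factor $\Delta'+1$ lost --- not the claimed $\delta\mu/\big(3(\Delta'+1)\big)$. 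This is not a constant-chasing issue: for $\delta<1$ the linear-in-$\delta$ exponent is unobtainable, because the statement as literally written is false in that regime. Take $\Delta'=0$, $a_i=1$, and $N$ independent fair coins, so $\mu=N/2$ and the standard deviation is $\sqrt{N}/2$; with $\delta=1/\sqrt{N}$ the left-hand side tends to $2\Pr[Z\ge 1]\approx 0.32$ while the right-hand side $8\exp(-\sqrt{N}/6)$ tends to $0$. The linear form $e^{-\mu\delta/3}$ is the standard simplification of $\big(e^{\delta}/(1+\delta)^{1+\delta}\big)^{\mu}$ only for $\delta\ge 1$; for $\delta<1$ one must settle for $\delta^2$ in the exponent (and indeed every application in this paper uses $\delta\gg 1$, so the intended reading is either $\delta\ge1$ or $\min\{\delta,\delta^2\}$).

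Two further points if you rework the argument. First, once you fix the statement (say for $\delta\ge1$, or with $\min\{\delta,\delta^2\}$), the cleanest derivation avoids the even split of the absolute deviation altogether: bound the moment generating function by H\"older across the $\Delta'+1$ color classes (Janson's trick), which gives $\Pr[\bar X\ge(1+\delta)\mu]\le\big(e^{\delta}/(1+\delta)^{1+\delta}\big)^{\mu/(\Delta'+1)}$ directly, with no union bound and no dependence on how the mean $\mu$ distributes among classes; the alternative union-bound route is why Pemmaraju needs an \emph{equitable} coloring (balanced class sizes), a property your greedy coloring does not provide. Second, your dependency graph is defined by pairwise non-independence, and an independent set in that graph is only pairwise independent, which does not justify "mutually independent within each $V_c$"; you need the stronger (and intended) hypothesis that each $X_i$ is independent of the joint family of all variables outside its $\le\Delta'$ excluded ones, from which mutual independence within a color class follows by a routine induction.
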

Conveniently, setting $\Delta'=1$ gives us a way to use the standard Chernoff bound, modulo constant factors, for independent random variables.

\section{The MIS Algorithm}
\label{sec:MIS}
Here, we describe the MIS algorithm that proves \Cref{thm:main}, i.e., achieving $O(1)$ node-averaged awake complexity and $O(\log n)$ round complexity.

\paragraph{Algorithm Outline} The algorithm has three parts: During each part, we add some more nodes to the set $S$, which is initially empty and is always kept an \textit{independent set}, in a way that at the end $S$ is a \textit{maximal} independent set.
The first part brings down the maximum degree to $\poly(\log n)$.
The second part is the technical core of the new algorithm; it runs in $O(\log\log n)$ iterations each of which takes $O((\log\log n)^3)$ rounds, and per iteration, it removes a constant fraction of the remaining nodes.
This is also done in a way that the node-averaged awake complexity remains $O(1)$.
The third part is a simple clean-up step that deals with all the remaining nodes.
Next, we describe these parts, one by one.


\subsection{Part I}
\paragraph{Algorithm, Part I} For the first part, we use a partial run of the randomized greedy MIS algorithm~\cite{blelloch2012greedy, fischerNoever2018MIS}.
Recall that in the randomized greedy MIS algorithm, each node $v$ picks a random number $r_v$ in $I=\{1, 2, \dots, n^{10}\}$ and per round, all nodes that have a strictly smaller random number than their neighbors are added to the independent set $S$.
Then, they are removed from the graph along with their neighbors.
The key difference with Luby's algorithm~\cite{luby86} is that we do not use new random numbers for each round.

We run this algorithm only on the first $p=1/\log n$ portion of the random numbers.
Concretely, we run the algorithm only on nodes for which  the random number is in $I_p=\{1, 2, \dots, n^{10}/\log n\}$.
During this time, the rest of the nodes sleep.
In the end, using one additional round, we remove from the graph all of these nodes for which their random number is in $I_p$, as well as all nodes in $V\setminus I_p$ that are adjacent to the computed independent set $S$.

\paragraph{Analysis, Part I}
We now show that this first part of the algorithm reduces the maximum degree of the remaining graph to $\poly(\log n)$, and has node averaged awake complexity $O(1)$.
\begin{lemma}\label[lemma]{lem:part1}
	With high probability, we have the following properties about Part I of the algorithm, described above:
	\begin{enumerate}
		\item[(A)] this algorithm runs in $O(\log n)$ rounds,
		\item[(B)] this algorithm contributes $O(1)$ to the node-averaged awake complexity $\frac{1}{n} \sum_{v\in V} A_v$, and
		\item[(C)] in the subgraph induced by the remaining nodes, each node has degree at most $O(\log^2 n)$.
	\end{enumerate}
\end{lemma}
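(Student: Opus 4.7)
The plan splits along the three sub-claims.

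For (A), our partial run is literally the randomized greedy MIS on the induced subgraph $G[\{v : r_v \in I_p\}]$ with a uniformly random rank order, so the Fischer--Noever analysis~\cite{fischerNoever2018MIS} gives termination in $O(\log n)$ rounds w.h.p.; the final clean-up step adds only one round.

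For (B), let $P := \{v : r_v \in I_p\}$. Since each node lies in $P$ independently with probability $p = 1/\log n$, a Chernoff bound yields $|P| \le 2n/\log n$ w.h.p. During the greedy process, nodes outside $P$ sleep; each node in $P$ is awake for at most the $O(\log n)$ rounds bounded by (A); and the clean-up round contributes $1$ to every node's awake count (so that each node in $V \setminus I_p$ can hear whether a neighbor joined $S$). Summing, $\sum_{v \in V} A_v \le |P| \cdot O(\log n) + n = O(n)$, and dividing by $n$ yields the $O(1)$ node-averaged awake complexity.

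For (C), the plan is to prove the auxiliary statement that any $v$ with $d_v \ge C \log^2 n$ (for a sufficiently large constant $C$) survives Part I with probability at most $n^{-3}$; a union bound over $V$ then forces every surviving node's original, and hence residual, degree to be $O(\log^2 n)$ w.h.p. To establish the auxiliary statement, I would introduce, for each $u \in N(v)$, the witness event $X_u$ stating ``$u$ has the smallest rank in $\{u\} \cup (N(u) \setminus (\{v\} \cup N(v)))$, and this smallest rank is $\le p$''. A short case analysis shows that, conditioned on $v$ being alive after Part I (which forces $r_v > p$ and prevents any common neighbor of $v$ and $u$ lying in $I_p$ from joining $S$), the event $X_u = 1$ forces $u$ itself to join $S$ at rank $\le p$ and hence to remove $v$---a contradiction. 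Therefore $\Pr[v \text{ alive}] \le \Pr[\bigwedge_{u \in N(v)} (X_u = 0)]$. Each $X_u$ holds with probability $\Omega(p)$ when $u$'s degree outside $\{v\} \cup N(v)$ is $O(1/p)$, and with probability $\Omega(1/d_u)$ otherwise; and since $X_u$ depends only on $r_u$ and the ranks of $u$'s neighbors outside $\{v\} \cup N(v)$, the pairwise dependencies between $X_u$ and $X_{u'}$ arise only through 2-neighbors of $v$ shared by $u$ and $u'$. A careful application of the bounded-dependence Chernoff bound \Cref{thm:Chernoff-Extension}---perhaps after first restricting to an appropriate sub-family of $N(v)$ with ``almost-disjoint'' extended neighborhoods---then yields the desired $n^{-3}$ concentration.

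The main obstacle is the final step of (C): while the witness $X_u$ and the conditional implication ``$v$ alive $\wedge\, X_u = 1 \Rightarrow \bot$'' are straightforward, controlling the dependency degree of $\{X_u\}_{u \in N(v)}$ in sub-families of $N(v)$ well enough for \Cref{thm:Chernoff-Extension} to deliver the $n^{-3}$ tail requires some care. Parts (A) and (B), by contrast, are immediate bookkeeping once Fischer--Noever and the standard Chernoff bound on $|P|$ are in hand.
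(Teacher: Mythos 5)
Your treatments of (A) and (B) are correct and essentially the paper's own argument (Fischer--Noever for the $O(\log n)$ round bound, a Chernoff bound on $|P|$ multiplied by the $O(\log n)$ awake rounds, plus the single clean-up round). The problem is (C). The auxiliary statement you plan to prove --- that every node of \emph{original} degree at least $C\log^2 n$ is removed by Part~I with probability $1-n^{-3}$ --- is false, so no amount of care with the dependency structure of the witnesses $X_u$ can rescue the plan. Concretely, let $v$ have exactly $C\log^2 n$ neighbors $u_1,\dots,u_{C\log^2 n}$ and place each $u_i$ inside its own disjoint clique of size $m=\Theta(n/\log^2 n)$. A neighbor $u_i$ can join $S$ only if it holds the minimum rank in its clique, which happens with probability $1/m=\Theta(\log^2 n/n)$; a union bound shows some neighbor of $v$ joins $S$ with probability $O(\log^4 n/n)=o(1)$, and $r_v\in I_p$ only with probability $1/\log n$. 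Hence $v$ survives Part~I with probability $1-O(1/\log n)$, nowhere near $n^{-3}$. Your own estimate $\Pr[X_u]=\Omega(1/d_u)$ already exposes this: when all $d_u$ are $\Theta(n/\log^2 n)$ we get $\sum_{u\in N(v)}\Pr[X_u]=o(1)$, so $\Pr\big[\bigwedge_{u}\{X_u=0\}\big]=1-o(1)$, and no concentration inequality (\Cref{thm:Chernoff-Extension} or any other) can push it below $n^{-3}$; the obstacle is not the dependency degree but that the expected number of witnesses can be far below one.

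What (C) actually asserts --- and what the paper proves, following Blelloch et al. --- is a bound on the \emph{residual} degree of surviving nodes; in the example above $v$'s residual degree indeed collapses even though $v$ survives, because each clique w.h.p.\ contains an $I_p$-node of minimum rank that joins $S$ and removes $u_i$. The intended argument is the sequential view of randomized greedy: reveal the random order one node at a time; at any step at which $v$'s residual degree is still at least $d$, the next processed node is a still-remaining neighbor of $v$ with probability at least $d/n$, in which case that neighbor joins $S$ and $v$ is removed. Over the $\Theta(pn)$ steps of the partial run this gives survival-with-residual-degree-at-least-$d$ probability at most $(1-d/n)^{\Theta(pn)}\le 1/\poly(n)$ for $d=\Theta(\log n/p)=\Theta(\log^2 n)$, and a union bound over all $v$ finishes. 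This sequential-exposure argument needs no bounded-dependency machinery; your witness-event framework, by contrast, only sees removals caused by nodes inside $N(v)$ joining $S$, and misses removals of $v$'s neighbors by nodes outside $N(v)$, which is precisely the effect that makes (C) true.
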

\begin{proof}
	By \cite{fischerNoever2018MIS}, the randomized greedy algorithm terminates in $O(\log n)$ rounds.
	Notice that the nodes with random numbers in $I_p$ must terminate by then, and they are not impacted by other nodes whose random number is larger than this range.
	Hence, even running the algorithm on only nodes with numbers in $I_p$ terminates in $O(\log n)$ rounds.

	Each node is awake during this time with probability $|I_p|/|I|= 1/\log n$.
	Thus, by Chernoff bound, at most $O(n/\log n)$ nodes are awake during the run of the algorithm.
	Considering that the run takes $O(\log n)$ rounds, this contributed at most $O(n/\log n) \cdot O(\log n) = O(n)$ to the summation $\sum_{v} A_v$ of the awake complexities of all nodes.
	In other words, this is a contribution of additive $O(1)$ to the node-averaged awake complexity $\frac{1}{n} \sum_{v} A_v$.

	Finally, by \cite[Lemma 4.3]{blelloch2012greedy}, we get the property that in the subgraph induced by the remaining nodes, each node has degree at most $O(\log^2 n)$, with high probability.
	The argument is simple, so let us repeat it here, to keep the article self-contained.
	The independent set computed during the algorithm is identical to the one that we would obtain if we process the nodes sequentially, one by one, according to increasing random numbers, each time putting the next node in the independent set $S$ and then removing all of its neighbors.
	Now consider an arbitrary node $v$.
	In each step of processing one node, this node is a random one among all remaining nodes.
	Hence, if $v$ has degree at least $d$, the probability that the next node is one of the neighbors of $v$ is at least $d/n$.
	In that case, $v$ would be removed from the graph.
	Hence, for one step, the probability that node $v$ remains and has a degree of at least $d$ is at most $(1-d/n)$.
	Over the $\Theta(pn)$ steps, we can conclude that the probability that $v$ remains is at most $(1-d/n)^{pn}$.
	Setting $d=10\log n/p = \Theta(\log^2 n)$, we get that the probability of a node $v$ remaining with degree above  $\Theta(\log^2 n)$ is at most $1/n^2$.
	A union bound over all vertices shows that, with probability at least $1-1/n$, no node remains with a degree of at least $d=\Theta(\log^2 n)$.
\end{proof}

\subsection{Part II}
Part II is the core novelty of our algorithm.
At the beginning of this part, we are left with a graph that has maximum degree at most $d=O(\log^2 n)$, with high probability.
For Part II, we have an algorithm with worst-case round complexity of $O((\log\log n)^4)$.
This consists of $O(\log\log n)$ iterations of repeating the same $O((\log\log n)^3)$-round procedure, each time on the remaining nodes.
The procedure will be such that, per iteration, the node-average awake complexity is $O(1)$, where the averaging is among the nodes that run this iteration.
Moreover, the number of remaining nodes for each iteration $i$ will be exponentially decreasing as a function of the iteration number $i$.
Hence, overall, the node-averaged awake complexity of Part II is $O(1)$.

\paragraph{Algorithm, Part II} We repeat the following procedure for $O(\log\log n)$ identical \textit{iterations}, each time on the remaining nodes.
The algorithm for each iteration has $\log d = O(\log\log n)$ \textit{phases}, where in phase $i \in [1, \log d]$, we have $C(\log d - i)^2$ \textit{rounds}: per round of the $i^{th}$ phase, we mark each node with probability $2^i/d$.
Then, any marked node that has no marked neighbor gets added to the independent set.
Here, $C$ is a sufficiently large constant, which we will discuss later.
At the end of the iteration, we have one extra round where each isolated node (a node with no remaining neighbor) joins the independent set $S$ and gets removed from the graph.

We now explain the times that nodes are awake or sleeping: At the start of each iteration, each node $v$ determines all the rounds in this iteration in which $v$ is marked.
This is done according to the marking probabilities described above.
Any node $v$ sleeps until the first time in this iteration in which $v$ is marked, and it stays awake from then until the iteration's end.
If there is no round in which $v$ is marked, then $v$ sleeps for the whole iteration, except for the very last round, where we removed isolated nodes.

Finally, we need to ensure that we remove neighbors of nodes that already joined the independent set $S$, even though the former might be sleeping.
For that, we split each round into two sub-rounds.
For any node $v$ that joined the independent set $S$ in a round $r$, in any round $r'\geq r+1$, in the first subround of round $r'$, node $v$ informs any neighbor $u$ that is marked (and is thus awake) in round $r'$ of the information that node $v$ is already in the computed independent set $S$.
In such a case, node $u$ terminates the entire algorithm and outputs that it is not in the independent set $S$.
The second subrounds are used for declaring whether a node $w$ is marked or not, and here only marked nodes $w$ that do not have neighbor already in the independent set $S$ (which $w$ would have learned in the first subround, if there was such a neighbor) declare to their neighbors that they are marked.
We note that this change of replacing each round with two subrounds does not impact the asymptotic bounds on round complexity or awake complexity.
For simplicity, we will ignore this $2$ factor and only refer to the number of rounds.

\paragraph{Analysis, Part II}
\begin{lemma}
	\label[lemma]{lem:part2}
	With high probability, we have the following properties about Part II of the algorithm, described above:
	\begin{itemize}
		\item[(A)] the algorithm runs in $O(\log^4 \log n)$ rounds.
		\item[(B)] the algorithm contributes $O(1)$ to the node-averaged awake complexity $\frac{1}{n}\sum_{v}A_v$, and
		\item[(C)] at most $O(n/\log n)$ nodes remain after the last iteration.
	\end{itemize}
\end{lemma}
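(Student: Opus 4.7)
The plan is to handle (A), (C), (B) in that order, since the averaging in (B) relies on the geometric shrinkage of $n_k$ guaranteed by (C). Item (A) is immediate: one iteration lasts $L = \sum_{i=1}^{\log d} C(\log d - i)^2 = \Theta(C \log^3 d) = O(\log^3 \log n)$ rounds, plus a single cleanup round, so the $O(\log \log n)$ identical iterations sum to $O(\log^4 \log n)$ rounds overall.

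For (C), I plan to establish the per-node shattering claim that any remaining node $v$ of current degree $d(v) \leq d$ is removed within one iteration (joins $S$, has a neighbor join $S$, or becomes isolated at cleanup) with probability at least some absolute constant $c > 0$. The key phase is $i^*$ for which $p_{i^*} = 2^{i^*}/d$ lies in $[1/(2d(v)),\, 1/d(v)]$, so that the expected number of marked neighbors of $v$ in a single round is $\Theta(1)$. A Luby-style union bound then yields $\Pr[v \in S \text{ in one round of phase } i^*] \geq p_{i^*}(1-p_{i^*})^{d(v)} = \Omega(1/d(v))$. For low-degree $v$ (say $d(v) = O(\log^2 d)$), the $r_{i^*} = \Theta(\log^2 d(v))$ rounds of that phase already push $v$ into $S$ with constant probability. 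For higher $d(v)$ the self-joining probability degrades to $\Theta(\log^2 d(v)/d(v))$, so I will make up the deficit by analogously lower-bounding the probability that \emph{some} neighbor of $v$ joins $S$, via a case split on the degree profile of $v$'s neighborhood in the spirit of classical Luby/Ghaffari shattering arguments. Equipped with a constant per-node removal probability, I will invoke the bounded-dependency Chernoff extension (\Cref{thm:Chernoff-Extension}) with $\Delta' = O(d^2) = \poly(\log n)$ --- since whether $v$ is removed depends only on randomness within distance $2$ --- to conclude that in every iteration a constant fraction of the remaining nodes is removed with high probability. The concentration stays valid throughout because we only need to shrink $n_k$ down to $O(n/\log n) \gg \Delta'$, and a union bound over the $O(\log \log n)$ iterations finishes (C).

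For (B), fix a single iteration and a remaining node $v$. Since $v$ sleeps until its first marking $T$ and is then awake for at most $L - T + 1$ further rounds, $A_v^{(\text{iter})}$ is a function of $v$'s own coin flips alone. Letting $J$ be the first phase in which $v$ is marked and using $\Pr[J = j] \leq p_j r_j = C\,2^j(\log d - j)^2/d$ together with $\sum_{l \geq j} r_l = O((\log d - j + 1)^3)$, the substitution $m = \log d - j$ gives
\[
\E\bigl[A_v^{(\text{iter})}\bigr] \;\leq\; \sum_{j=1}^{\log d} p_j r_j \cdot O\!\left((\log d - j + 1)^3\right) \;=\; O\!\left(\frac{C^2}{d}\right) \sum_{m=0}^{\log d - 1} 2^{\log d - m}\, m^5 \;=\; O(C^2) \sum_{m \geq 0} \frac{m^5}{2^m} \;=\; O(1),
\]
using convergence of the series. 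Since these per-node upper bounds are independent across $v$ (each depends only on a single node's coins) and each lies in $[0, L] = [0, O(\log^3 \log n)]$, a Chernoff/Hoeffding bound (\Cref{thm:hoeffding}) gives $\sum_v A_v^{(\text{iter})} = O(n_k)$ w.h.p.\ whenever $n_k$ is large enough for concentration; for the final iterations with tiny $n_k$ the trivial deterministic bound $n_k \cdot L$ already contributes only $o(n)$ in total. Summing over iterations and using $\sum_k n_k = O(n)$ from (C), then dividing by $n$, yields the claimed $O(1)$ contribution to the node-averaged awake complexity.

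The main obstacle is the per-node removal probability in (C) in the high-degree regime, where the event ``$v$ itself joins $S$'' is only $\Theta(\log^2 d(v)/d(v))$-likely and the constant removal probability has to be extracted by carefully aggregating the events ``some specific neighbor of $v$ joins $S$'' across the focal phase. The Chernoff-extension concentration step for (C) and the awake-time tail computation for (B) are comparatively routine once this single-node removal lower bound is in place.
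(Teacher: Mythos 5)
Your items (A) and (B) match the paper's argument (same round count, same expected-awake-time series $\sum_m 2^{-m}m^5=O(1)$, same Hoeffding-based concentration per iteration), and your concentration framework for (C) via \Cref{thm:Chernoff-Extension} is also the paper's. But the heart of the lemma --- the claim that every remaining node is removed within one iteration with a fixed constant probability --- is exactly the part you leave as a plan, and the plan as sketched does not go through. Focusing on the single phase $i^*$ matched to $v$'s initial degree $d(v)$ only gives $\Pr[v \text{ joins } S]=\Theta(\log^2 d(v)/d(v))$, and the complementary event ``some neighbor of $v$ joins $S$'' cannot be lower-bounded by a constant from a \emph{static} case split on the degree profile of $N(v)$ at the start of the iteration: for a marked neighbor $u$ to join, $u$ must have no marked neighbor, and that probability in phase $i^*$ is roughly $(1-2^{i^*}/d)^{d(u)}\approx e^{-d(u)/d(v)}$, which is negligible whenever $u$'s \emph{current} degree is much larger than $d(v)$. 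Whether such high-degree neighbors survive until phase $i^*$ depends on how their own neighborhoods evolved in earlier phases, so the argument is inherently dynamic and multi-phase. This is precisely why the paper does not argue about a single focal phase: it proves, by induction on the phase $i$, that with probability $1-2^{-2(\log d-i+1)^2}$ \emph{no node within distance $\log d-i$ of $v$} has degree above $d/2^i$ at the end of phase $i$ (the events $\mathcal{E}_i$), the shrinking radius being what lets one control the neighbors' degrees at the moment they are needed; at $i=\log d$ this makes $v$ isolated with probability at least $1/2$, so it is removed in the cleanup round. Without this (or an equivalent degree-reduction-over-phases argument), your constant per-node removal probability --- which you yourself flag as the main obstacle --- is unproven, and hence (C) and the $|V_k|\le O(n/2^k)$ input to (B) are unsupported.

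A secondary, fixable inaccuracy: the event ``$v$ is removed by the end of the iteration'' does not depend only on randomness within distance $2$. Information about membership in $S$ propagates one hop per round, so over an iteration of $O(\log^3\log n)$ rounds (and over all of Part II, $O(\log^4\log n)$ rounds) the dependency radius is of that order, giving $\Delta'=d^{O(\log^4\log n)}=2^{O(\log^5\log n)}$ rather than $O(d^2)$. Since this is still $n^{o(1)}\ll n/\log n$, the conclusion of the bounded-dependency Chernoff step survives with the corrected $\Delta'$, exactly as in the paper.
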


\begin{proof}
	By the structure of the algorithm, it is clear that we have $O(\log\log n)$ iterations, each of which takes $\sum_{i=1}^{\log d} C(\log d - i)^2 = O((\log\log n)^3)$ rounds.
	Hence, the worst-case round complexity of this part of the algorithm is $O((\log\log n)^4)$.

	In each iteration, each node sleeps until the first round in which it is marked, and then it stays awake for the remainder of the iteration.
	The probability that a node $v$ wakes up in phase $i$ is at most $\frac{2^{i}}{d} \cdot C(\log d - i)^2$.
	In the case that $v$ wakes up in phase $i$, it stays awake for all the remaining rounds after that, in this iteration, which is at most $\big(\sum_{j=i}^{\log d} C(\log d - j)^2\big)$ many rounds.
	Hence, the expected number of awake rounds for each node that participates in an iteration is at most
	\begin{align*}
		\sum_{i=1}^{\log d} \frac{2^{i}}{d} \cdot C(\log d - i)^2 \cdot \big(\sum_{j=i}^{\log d} C(\log d - j)^2\big)
		\leq & \sum_{i=1}^{\log d} 2^{-(\log d-i)} \cdot O((\log d - i)^5) \\
		=    & \sum_{x=1}^{\log d} 2^{-x} \cdot O(x^5) = O(1).
	\end{align*}

	From the above, we know that the expected number of awake rounds for a node that participates in an iteration is $O(1)$.
	Let $V_k$ be the set of nodes that start iteration $k$.
	We will see soon, in the following paragraph, that $|V_k| \leq O(n/2^{k})$, with high probability.
	Moreover, the number of awake rounds for different nodes are independent random variables, and the maximum value of each is at most $O((\log\log n)^4)$.
	Hence, using Hoeffding's bound (cf. \Cref{thm:hoeffding}), we get that with high probability, for each $k\in [1, O(\log\log n)]$, the total summation $\sum_{v \in V_k} A_v$ of the number of awake rounds of the nodes that participate in the $k^{th}$ iteration is at most $O(n/2^{k})$.
	This means that, with high probability, the contribution to the node-averaged awake complexity $\frac{1}{n} \sum_{v} A_v$ during all these iterations is $\sum_{k=1}^{O(\log\log n)} O(1/2^{k}) = O(1)$.

	Finally, we show that each node $v$ is removed with at least a positive constant probability, per iteration.
	This will let us finish the proof of property (B) of the lemma, and also will allow us to prove property (C).

	We prove a more general statement: Fix an arbitrary node $v$ that starts phase $k$.
	We show that the degree of $v$ is upper bounded by an exponentially decreasing function throughout the phases of this iteration, with a certain probability.
	In fact, we will prove this fact not only for $v$ but for a certain radius of nodes around $v$, with a certain probability.
	As we go further and further, this radius will decrease, and eventually, it will reach radius zero, i.e., node $v$ itself.
	Moreover, the failure probability of our guarantee will also increase over time but will reach at most $1/2$ at the end.
	We next make this concrete.

	Let $\mathcal{E}_i$ be the event that, at the end of phase $i$, there is at least one node $u$ within distance $r_i=(\log d-i)$ of $v$ such that $u$ has degree at least $d/{2^{i}}$.
	We show using an induction on $i$ that $Pr[\mathcal{E}_i] \leq 2^{-2(\log d-i+1)^2}$.
	The claim holds trivially for the base case of $i=0$ as there all nodes have degree at most $d$ and thus $Pr[\mathcal{E}_0] = 0$.

	Consider a phase $i\in [1, \log d]$.
	From the induction hypothesis, we know that $Pr[\mathcal{E}_{i-1}] \leq 2^{-2(\log d-i+2)^2}$.
	Moreover, suppose that $\mathcal{E}_{i-1}$ did not happen.
	We already have an upper bound on the probability of $\mathcal{E}_{i-1}$, which we take into account using a union bound.

	Consider an arbitrary node $u$ within distance $r_i=(\log d-i)$ of $v$.
	As $\mathcal{E}_{i-1}$ did not happen, we know that at the start of phase $i$, node $u$ and any neighbor of it have a degree less than $d/2^{i-1}$.
	If $u$ has degree at least $d/2^{i}$, then per round of this phase, with at least a constant probability $c>0$, node $u$ or one of its neighbors is marked and has no marked neighbor.
	The reason is simple: consider one such round.
	Let us examine nodes in $N(u)\cup \{u\}$ one by one, each time revealing whether the node is marked or not, and we stop the moment we find one marked node $w\in N(u)\cup \{u\}$.
	Here, $N(u)$ denotes the neighbors of $u$ in this fixed round.
	With probability at least $(1-2^{i}/d)^{d/2^i +1} \geq 1/4$, we find at least one marked not.
	Now, node $w$ has degree at most $d/2^{i-1}$.
	Thus, the probability that no neighbor of $w$ (which is not in $N(u)\cup \{u\}$) is marked is at least $(1-2^{i}/d)^{d/2^{i-1}}\geq 1/16$.
	Hence, overall, with probability at least $1/4 \cdot 1/16 = 1/64$, there is a marked node $w \in N(u)\cup \{u\}$ for which no neighbor is marked.
	That is, with at least a constant probability $c>0$, node $u$ or one of its neighbors joins the independent set $S$.
	This means that node $u$ is effectively removed (i.e., it will be removed, the first time that it wakes up).
	Moreover, since the randomness of the $C(\log d - i)^2$ rounds of this phase are independent, the probability that node $u$ remains with a degree at least $d/{2^i}$ at the end of this phase (i.e., does not get effectively removed by having an independent set node in its inclusive neighborhood) is at most $2^{-5(\log d-i+1)^2}$.
	This holds by choosing $C$ large enough, depending on the constant probability $c=1/64$ above.
	Moreover, we can also union bound over all such nodes $u$ in the $r_i=(\log d-i)$-hop neighborhood of $v$.
	Since $\mathcal{E}_{i-1}$ did not happen, the degree of each node in this neighborhood is at most $(d/2^{i-1})$, which means the number of nodes in this neighborhood is at most $(d/2^{i-1})^{\log d-i} = 2^{(\log d- i +1) (\log d-i)}$.
	Hence, we have
	\begin{align*}
		\Pr[\mathcal{E}_i]
		 & \leq \Pr[\mathcal{E}_{i-1}] + 2^{(\log d- i +1) (\log d-i)} \cdot 2^{-5(\log d-i+1)^2} \\
		 & \leq 2^{-2(\log d-i+2)^2} + 2^{(\log d- i +1) (\log d-i)} \cdot 2^{-5(\log d-i+1)^2}   \\
		 & \leq 2^{-2(\log d-i+2)^2} + 2^{-5(\log d-i+1)^2}                                       \\
		 & \leq 2^{-2(\log d-i+1)^2}/2+2^{-2(\log d-i+1)^2}/2                                     \\
		 & =    2^{-2(\log d-i+1)^2}.
	\end{align*}

	The above completes the induction proof.
	Now, by applying the claim to $i=\log d$, we get that the probability of node $v$ having a degree at least $1$ by the end of phase $\log d$ is at most $1/2$.
	Hence, with probability at least $1/2$, node $v$ is isolated by the end of the $\log d$ phases and thus gets removed at the latest in the last round of the iteration.

	Finally, since different iterations use independent randomness, we get that the probability of a node $v$ staying after $k$ iterations is at most $1/2^k$.
	Hence, in expectation, at most $n/2^{k}$ nodes remain after iteration $k$.
	Moreover, we can apply extensions of the Chernoff bound for low-dependency random variables (cf. \Cref{thm:Chernoff-Extension}) to prove that a similar upper bound holds with high probability: Notice that the events of two nodes $v$ and $u$ that have distance greater than $O(\log4\log n)$ from each other are independent, as the algorithm, even including all the $O(\log\log n)$ iterations, has round complexity $O(\log^4\log n)$.
	Hence, for each node $v$, except for at most $\Delta'=(O(\log^2 n))^{O(\log^4\log n)} = 2^{O(\log\log^5 n)}$ nodes that are within distance $O(\log^4\log n)$ hops of $v$, the event of $v$ remaining is independent of all other nodes.
	That is, these events of different nodes remaining have dependency degree at most $\Delta'$.
	Hence, using \Cref{thm:Chernoff-Extension}, we can conclude that for any $k\in [1, O(\log\log n)]$, the probability of more than $O(n/2^{k})$ nodes remaining after $k$ iterations is at most
	\begin{align*}
		O(\Delta') \cdot exp\big(-\Theta(\frac{n}{2^k\Delta'})\big)
		 & \leq exp\big(O(\log^5\log n) -\frac{n}{2^{O(\log^5\log n)}}\big) \\
		 & \ll exp(-n^{0.9}) \ll 1/\poly(n).
	\end{align*}
	Thus, w.h.p., at most $O(n/2^{k})$ nodes remain after the $k^{th}$ iteration.
	This is the property that we used above when calculating the node-averaged awake complexity and proving property (B) of the lemma.

	Moreover, by setting $k=O(\log\log n)$ iterations, we get that at most $O(n/\log n)$ nodes remain after the last iteration, with high probability, which proves property (C) of the lemma.
\end{proof}

\subsection{Part III} By the end of part II of the algorithm, at most $O(n/\log n)$ nodes remain in the graph, i.e., nodes $v$ such that $v$ does not have a node in the  $S$ in its inclusive neighborhood $N(v) \cup \{v\}$.
In Part III of the algorithm, which is a simple clean-up process, we run a standard MIS algorithm--- e.g., Luby's algorithm~\cite{luby86}--- on the remaining nodes.
This algorithm runs in $O(\log n)$ rounds, with high probability.
Moreover, it adds extra nodes to the independent set $S$ and ensures that $S$ is a maximal independent set.
Finally, since we run this algorithm on only the remaining nodes, which are at most $O(n/\log n)$ many, these $O(\log n)$ rounds contribute just an additive $O(1)$ rounds to the overall node-averaged awake complexity.

\subsection{Wrap Up}
Finally, we show how to combine the previous free parts, and provide a formal proof of \Cref{thm:main}.
\begin{proof}[Proof of \Cref{thm:main}]
	As explained in \Cref{lem:part1}, the first part runs in $O(\log n)$ rounds.
	Moreover, only $O(n/\log n)$ nodes are awake during this time, which implies an $O(n)$ contribution to the summation $\sum_{v} A_v$.

	Also, as explained in \Cref{lem:part2}, the second part runs in $O((\log\log n)^2)$ rounds, and moreover, the average awake complexity among the nodes that participate in this part is $O(1)$.
	Hence, this part also contributes  $O(n)$ contribution to the summation $\sum_{v} A_v$.
	Finally, as shown in \Cref{lem:part2}, at most $O(n/\log n)$ nodes remain after the second part.

	The third part is simply running Luby's $O(\log n)$-round algorithm~\cite{luby86} on all the remaining nodes.
	Since at most $O(n/\log n)$ nodes remain for the third part, this part also contributes at most $O(n)$ to the summation $\sum_{v} A_v$.

	Overall, considering all the three parts, we have an algorithm with round complexity $O(\log n) + O((\log\log n)^4) + O(\log n) = O(\log n)$.
	Moreover, the total awake complexity is $\sum_{v} A_v = O(n) + O(n) + O(n) = O(n)$.
	Hence, the node-averaged awake complexity is $\frac{1}{n}\sum_{v} A_v = O(1)$.
\end{proof}

\section{Matching Approximation}
In this section, we first give an algorithm that computes a $(2 + \epsilon)$-approximation of maximum matching, and show that it also allows us to obtain an algorithm for getting a $(2 + \epsilon)$-approximation of vertex cover.
Finally, we show that the approximation ratio of the matching algorithm can be improved to $(1 + \epsilon)$, with only a constant overhead in the round and average awake complexity, for any constant $\epsilon$.

\subsection{Vanilla Algorithm for Approximate Fractional Matching}

We start with the following simple algorithm for computing a fractional matching.
This algorithm or close variants of it have been used throughout prior work~\cite{kuhn16_jacm,ghaffari2018improved,fischer2020improved} (though, the result has often been mentioned as a known procedure and we are unsure where the first appearance was).
This algorithm by itself has an $\Theta(\log \Delta)$ average awake complexity, as it requires nodes to be active at all times, and is therefore insufficient for our purpose of having an $O(1)$ average awake complexity.
We will later discuss how we use careful random sampling and estimation ideas to develop an alternative algorithm based on the same framework that achieves an $O(1)$ average awake complexity.

\paragraph{Vanilla Algorithm}
Each edge $e$ starts with a value of $x_e = \frac{1}{\Delta}$.
Initially, all edges are \emph{active}.
Then, in each following round, any node $v$ for which we have $c_v = \sum_{e \ni v} x_e \geq 1 - \epsilon$ is called \emph{tight}, and we \emph{freeze} all edges incident to tight nodes.
Finally, we update the value $x_e$ of all edges that are still active (i.e., not frozen) to $(1 + \epsilon) x_e$, and proceed to the next round.

\paragraph{Analysis}
Notice that this process finishes in $O(\log_{1+\eps} \Delta)$ rounds, as the maximum value an edge can get is 1.
For the quality of the computed solution, we have the following guarantee.
\begin{claim}
	Let $M^*$ be a maximum matching in $G$, and for each edge $e$, let $x_e$ be its value computed by the above algorithm.
	Then, we have $\Abs{M^*} \leq (2+4\epsilon) \sum_{e \in E} x_e$, i.e. the sum of all fractional values is a $(2 + 4 \epsilon)$-approximation for the maximum (fractional) matching, and the values of $x_e$ form a matching, i.e. for each vertex $v$ we have $c_v = \sum_{e \ni v} x_e \leq 1$.
	\label[claim]{clm:vanilla}
\end{claim}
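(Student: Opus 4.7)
The plan is to handle the two assertions separately: first the degree constraint $c_v \le 1$ for every vertex (the fractional matching validity), and then the $(2+4\epsilon)$-approximation bound via a standard argument that identifies the tight vertex set with a vertex cover of $G$.

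For the matching validity, I would track $c_v$ round by round and maintain the invariant that, as long as $v$ is not yet tight at the start of a round, $c_v < 1-\epsilon$. In one round of the update, only the active edges incident to $v$ are multiplied by $(1+\epsilon)$, while the already-frozen edges at $v$ stay fixed, so $c_v$ grows by a factor of at most $(1+\epsilon)$. Hence in the first round in which $v$ is found to be tight, we have
\begin{align*}
c_v \;\le\; (1+\epsilon)(1-\epsilon) \;=\; 1 - \epsilon^2 \;<\; 1,
\end{align*}
and from that round on every edge incident to $v$ is frozen, so $c_v$ never changes again. The only remaining case is a vertex that is already tight at initialization, but there $c_v = d(v)/\Delta \le 1$ trivially. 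Thus $c_v \le 1$ holds throughout and, in particular, at termination.

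For the approximation bound, I would first observe that the algorithm runs until every edge is frozen: an edge that stays active for $O(\log_{1+\epsilon} \Delta)$ rounds grows from $1/\Delta$ to some value exceeding $1-\epsilon$, which forces one of its endpoints to be tight and hence the edge to freeze. Consequently, at termination the set $T$ of tight vertices is a vertex cover of $G$. Since for any matching $M$ and any vertex cover $C$ one has $|M| \le |C|$ (distinct matching edges contribute distinct endpoints to $C$), in particular $|M^*| \le |T|$. Combining this with the double-counting bound
\begin{align*}
(1-\epsilon)\,|T| \;\le\; \sum_{v \in T} c_v \;\le\; \sum_{v \in V} c_v \;=\; 2 \sum_{e \in E} x_e
\end{align*}
gives $|M^*| \le \frac{2}{1-\epsilon} \sum_e x_e$, and a short calculation shows $\frac{2}{1-\epsilon} \le 2+4\epsilon$ whenever $\epsilon \le 1/2$, which delivers the claimed factor.

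There is no real technical obstacle here. The only subtle points are (i) making sure that the ``off-by-one'' in the round in which $v$ first becomes tight gives the factor $(1+\epsilon)(1-\epsilon)$ rather than something larger, and (ii) confirming that the algorithm is indeed run long enough for every edge to be frozen so that $T$ is genuinely a vertex cover of $G$; both follow directly from the update rule and the termination condition spelled out in the algorithm description.
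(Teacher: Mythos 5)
Your proof is correct and takes essentially the same route as the paper: the validity part is the identical observation that a non-tight vertex has $c_v < 1-\epsilon$ before the multiplicative update, so $c_v \le (1+\epsilon)(1-\epsilon) \le 1$ always, and the approximation part rests on the same two facts (every edge ends with a tight endpoint, and $\sum_{v} c_v = 2\sum_{e} x_e$). The paper merely phrases the second step as a charging argument that moves the value of $M^*$ onto tight endpoints instead of explicitly naming the tight set a vertex cover, but the resulting chain of inequalities, $\Abs{M^*} \le \frac{2}{1-\epsilon}\sum_{e} x_e \le (2+4\epsilon)\sum_{e} x_e$, is the same as yours.
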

\begin{proof}
	First, let us argue that the computed values $x_e$ form a valid fractional matching.
	In the first step, we set $x_e = \frac{1}{\Delta}$, so the condition $c_v \leq 1$ is trivially fulfilled for all vertices.
	For the following iterations, we only increase the weight of an edge if both its endpoints satisfy $c_v \leq 1- \epsilon$, thus multiplying the weight of all edges that are not frozen with a factor of $1 + \epsilon$ will still guarantee $c_v \leq (1 + \epsilon) \cdot (1 - \epsilon) = 1 - \epsilon^2 \leq 1$ for all vertices $v$.
	This concludes the first part.

	For why the fractional matching is a $(2 + 4 \epsilon)$-approximation, note that in the end, no edge is still active, so every edge is incident to at least one vertex $v$ for which $c_v > 1 - \epsilon$.
	We can think of a charging process where we start with values on the maximum (fractional) matching $M^*$ and we move these locally to our computed fractional matching so that each edge $e$ receives a charge of at most $x_e \cdot 2 (1 + 2 \epsilon)$, hence proving that $\Abs{M^*} \leq (2+4\epsilon) \sum_{e \in E} x_e$.
	Let us  start with the maximum (fractional) matching $M^*$.
	We move all fractional value of an edge $e = (u,v)$ in $M^*$ to its tight endpoint, call it $v$.
	Since $M^*$ is a valid matching, the sum of all fractional values that are moved to any node $v$ is at most 1, which is at most $\frac{1}{1-\epsilon} c_v$.
	We then spread these weights to the fractional matching that we have computed, and thus each edge $e$ receives at most $x_e \cdot \frac{1}{1-\eps}$ from each of its endpoints, which means at most $x_e \cdot \frac{2}{1-\eps} \leq x_e \cdot 2 (1 + 2 \epsilon)$ in total.
	Formally, this observation can be written as
	\begin{align*}
		\Abs{M^*} \leq \frac{1}{1 - \epsilon} \cdot \sum_{v \in V} c_v = 2 (1 + 2 \epsilon) \cdot \sum_{e \in E} x_e.
	\end{align*}
\end{proof}

\paragraph{Randomized Rounding} Given a fractional matching, it is well-known that one can use a simple randomized rounding method to obtain an integral matching, with the same size up to a constant factor (in expectation, or in fact with a concentration exponential in the matching size).
We make use of one such known method by Ghaffari et al.~\cite{ghaffari2018improved}.
To be self-contained, we provide a sketch of the rounding procedure and discuss the expected integral matching size, but we refer to ~\cite{ghaffari2018improved} for the probability concentration argument.

\begin{lemma}[Ghaffari et al~\cite{ghaffari2018improved} ]
	\label[lemma]{lem:rounding}
	There is an $O(1)$ round algorithm that, given a fractional matching $M$ with size $\Abs{M} = \sum_{e\in E} x_e$, computes an integral matching with size at least $|M|/50$, with probability $1-exp(-\Theta(|M|))$.
\end{lemma}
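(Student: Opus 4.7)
The plan is to use a simple marking-based rounding procedure: each edge $e$ independently marks itself with probability $\alpha \cdot x_e$ for a small constant $\alpha \in (0,1]$, and then $e$ joins the output matching $M'$ if and only if it is marked and no other edge incident to either of its endpoints is marked. This construction obviously yields a valid integral matching. Running this procedure requires only $O(1)$ rounds: each vertex can locally generate marking bits for its incident edges and then exchange the list of marked incident edges with its neighbors in two communication rounds, so both endpoints of every edge learn whether the edge qualifies.

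For the expected-size bound, I would compute $\Pr[e \in M']$ for each edge $e = (u,v)$ by decomposing the event into (i) $e$ is marked, which has probability $\alpha x_e$, and (ii) no other edge incident to $u$ or $v$ is marked. Independence of the marking bits gives that, conditional on (i), event (ii) has probability $\prod_{e' \ne e,\, e' \ni u \text{ or } v}(1 - \alpha x_{e'})$. Using $\sum_{e' \ni u} x_{e'} = c_u \le 1$ and $\sum_{e' \ni v} x_{e'} = c_v \le 1$, together with the inequality $1 - \alpha y \ge e^{-2\alpha y}$ valid for sufficiently small $\alpha y$, this product is lower bounded by $e^{-4\alpha}$, a positive constant. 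Choosing $\alpha$ appropriately (for instance $\alpha = 1/2$) yields $\Pr[e \in M'] \ge x_e / C$ for an absolute constant $C \le 50$, and linearity of expectation immediately gives $\E[|M'|] \ge |M|/50$.

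For the high-probability concentration $1 - \exp(-\Theta(|M|))$, the first observation is that the dependency structure of the indicators $\mathbf{1}[e \in M']$ is local: this indicator is determined entirely by the marking bits of edges within graph distance $2$ of $e$. This naturally suggests an application of a bounded-dependency Chernoff bound in the spirit of \Cref{thm:Chernoff-Extension}. However, a direct application yields a tail of the form $\exp(-\Theta(|M|/\Delta))$, which is weaker than the claimed $\exp(-\Theta(|M|))$ when $\Delta$ is large. This is the step I expect to be the main obstacle, and rather than re-derive the sharper bound I would invoke the concentration analysis already carried out in~\cite{ghaffari2018improved}, which obtains the desired $\exp(-\Theta(|M|))$ decay by a more careful argument that exploits the structure of the matching indicators beyond the generic bounded-dependency machinery.
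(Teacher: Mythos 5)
Your proposal is correct and takes essentially the same route as the paper: mark edges randomly with probability proportional to $x_e$ (the paper realizes the marking through vertex proposals with probability $x_e/10$ rather than independent per-edge marks, a cosmetic difference), keep marked edges with no incident marked edge, bound the expectation by linearity, and defer the $\exp(-\Theta(|M|))$ concentration to the McDiarmid-based analysis of~\cite{ghaffari2018improved}, exactly as the paper's proof sketch does. Just make sure your expected size leaves constant-factor slack above $|M|/50$ so that the cited concentration (a constant-factor deviation below the mean) still yields the stated bound; with $\alpha=1/2$ your constant is roughly $15$, so this is fine, mirroring the paper's expectation of $3|M|/50$.
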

\begin{proof} [Proof Sketch of \Cref{lem:rounding}]
	Each node $v$ either proposes to one of its neighbors, choosing neighbor $u$ with probability $x_{e}/10$ where $e=\{v, u\}$, or does not propose to any neighbor, which happens with probability $1-\sum_{e \ni v} x_e$.
	An edge $e=\{v, u\}$ is marked if $v$ proposed to $u$ or $u$ proposed to $v$.
	We keep marked edges that are not incident on any other marked edge.
	This process can be clearly implemented in $O(1)$ rounds of message passing.
	It is also easy to see that an edge $e$ is marked with probability at least $x_{e}/10$, and conditioned on that, there is a probability of $1-(2\cdot 2/10) = 0.6$ that no other edge incident on $e$ is marked.
	This already shows that the expected size of the matching is at least $3|M|/50$.
	Ghaffari et al.~\cite{ghaffari2018improved} show, via an application of McDiarmid's inequality, that there is indeed an exponential concentration around this expectation and in particular the probability that the matching size is below $|M|/50$ is at most $exp(-\Theta(|M|))$.
\end{proof}

\subsection{\texorpdfstring{Improving Average Awake Complexity to $O(1)$}{Improving Average Awake Complexity to O(1)}}
\label{sec:ImpAvgComp}
The above vanilla algorithm requires all vertices that are not yet frozen to be active in all rounds.
Thus, we do not directly get an algorithm with low average awake-complexity.
To improve the awake complexity to $O(1)$, we modify the algorithm such that in each iteration, instead of considering all neighbors to calculate $c_v$ exactly, the algorithm will only consider a small subset of vertices as awake, and each vertex will use the decisions of its awake neighbors to estimate $c_v$ and make its decisions accordingly.

Let us look at the first few rounds of an initial attempt towards an algorithm to highlight the main ideas.
Initially, all edge weights are set to $\frac{1}{\Delta}$ as in the vanilla algorithm, so the decision of whether or not to freeze a node is almost the same as estimating its degree.
Thus, if we sample nodes with probability $p_1 = 10\log n\frac{1}{\Delta}$, each node $v$ can distinguish w.h.p. if it's degree is greater than $(1 - \epsilon) \Delta$ or smaller than $(1 - 20 \epsilon) \Delta$, e.g. by comparing the number of sampled neighbors to $p_1 \cdot (1- 10 \epsilon) \Delta$.

For the second round, we have to already consider two cases to estimate the value of a node $v$:
First, there are neighbors of $v$ that were frozen in the first round, an whose edge incident to $v$ has value $\frac{1}{\Delta}$ and there are neighbors that remained active and have weight $\frac{1 + \epsilon}{\Delta}$ on their incident edge.
To remedy this, we sample twice, once to estimate the number of neighbors of each kind.
Let $S_1$ be the set obtained by sampling nodes with probability $p_1 = 10\log n\frac{1}{\Delta}$, and let $S_2$ be the set obtained by sampling with probability $p_2 = 10 \log n \frac{(1 + \epsilon)}{\Delta}$.
We can now estimate the value of a node after two rounds as
\[
	\tilde{c}_v = \frac{\Abs{S_1 \cap N(v)}}{p_1} \cdot \frac{1}{\Delta} + \frac{\Abs{S_2 \cap N(v)}}{p_2} \cdot \frac{\epsilon}{\Delta}.
\]
This corresponds to the contribution of nodes that are active in the first round, and the contribution of the nodes that remain active into the second round.
After the first round, the maximum degree of active nodes can only be roughly $\frac{\Delta}{1 + \epsilon}$.
Thus, the sampling probability $p_2$ is chosen inversely proportional to the degree, with an added factor of $10 \log n$ to guarantee that the estimation is accurate w.h.p.

However, this approach has one crucial issue, namely that it still does not provide us with the desired awake complexity.
Once we proceed to further rounds, we will have $p_i = 10 \log n \frac{(1 + \epsilon)^i}{\Delta}$, which summed over all $i$ still yields an awake complexity of $O(\log n)$.\footnote{With slightly more care, this idea can actually be turned into an algorithm with awake complexity $O(\log \log n)$.}
The only way to get around this problem will be to reduce the sampling probability (particularly in later rounds), which will require us to analyze what happens if nodes make different decisions than they would in the vanilla algorithm.

For the algorithm, we will proceed in phases, where we set a sampling probability for each phase.
In phase 1, this will be $p_1 = \frac{C}{(\log n)^4}$.
This probability guarantees that as long as the degree of active nodes is $(\log n)^5$, the algorithm will behave the same way as the vanilla algorithm, w.h.p.
For the second phase, we will set $p_2 = \frac{C}{(\log \log n)^4}$, however, this will then also decrease our probability of correctly estimating the values of nodes to $\text{poly}(\frac{1}{\log n} )$, as long as the degree is at most $(\log \log n)^5$.
One final technicality is that instead of considering the degree of active nodes, we use the inverse of the weight of the active edges as an approximation for the degree.

\paragraph{Full Algorithm}
More generally, the algorithm is as follows, from the viewpoint of a node $v$:
We start in round $j = 0$ with all nodes being active.
Let $w_j = \frac{(1 + \epsilon)^{j}}{\Delta}$ be the weight of active edges in round $j$.
We say that a round $j$ is in phase $i = i(j)$ if $i$ is the smallest value such that $w_j \leq \frac{1}{(\log^{(i)} n)^{5}}$.
Further, let $p_i = \frac{C}{(\log^{(i)} n)^{4}}$.
In round $j$, we sample a set $S_h^j$ for each previous round $h \leq j$, by including each node independently with probability $p_{i(h)}$.
Then, let $A_h^j \subseteq S_h^j$ be the set of nodes that are also active.
We remark that even for the same value of $h$, but different values of $j$ we sample different sets, to preserve independence between rounds.
Then, we have that $\Abs{A_h^j \cap N(v)}$ is the number of neighbors of $v$ that were active in round $h$ and sampled (with probability $p_{i(h)}$).
For round $j$, a node $v$ now estimates the value $\tilde{c}_v$ as
\begin{align*}
	\tilde{c}_v = \sum_{h = 1}^j \frac{\Abs{A_h \cap N(v)}}{p_{i(h)}} \cdot (w_h - w_{h-1}).
\end{align*}
Then, if $\tilde{c}_v > (1 - 10 \epsilon)$, we freeze vertex $v$, i.e. it will not be active in future iterations.
If not, we continue to the next iteration.

We stop this part of the algorithm when we reach a phase $i$, such that $\frac{1}{\log^{(i)} n} > \epsilon \cdot \frac{\log(1 + \epsilon)}{1000}$.
At this stage, we switch to executing the vanilla algorithm, i.e. from now on we set the sampling probability to 1 for all future rounds.

Once the algorithm has finished, i.e. when all edges are frozen, we set the final edge weight as follows:
For edge $e = (u, v)$, let $j$ be the last round for which $u$ and $v$ were both still active.
Then we set $x_e = \frac{(1 + \epsilon)^j}{\Delta}$.
Finally, each vertex computes its final value $c_v = \sum_{e \ni v} x_e$, and we only keep vertices for which $c_v \leq 1$.
That is, for all vertices where $c_v > 1$, we set the weight of all incident edges to $0$.

\paragraph{Distributed Implementation}
We think of the algorithm in the following way:
In each distributed round $k$, each node $v$ that decides to be active should be able to simulate all rounds $j \leq k$ of the above algorithm.
Thus, it needs to know all sets $A_h^j$ for all $h \leq j$ and $j \leq k$.
This can be achieved recursively:
All nodes $u \in S_h^j$ will need to know whether or not they were active in round $h$, thus they need to know if they were frozen in round $h-1$ (or before).
So, for each $h' < h$, they need to know all sets $A_{h'}^j$ to make said decision, which we do recursively.

So the distributed implementation is as follows (formulating the recursion in a bottom up way):
Initially, we sample all sets $S_h^j$, such that each node $v$ knows all values of $j$ and $h$ such that $v \in S_h^j$.
In round 0, all nodes that are in $S_0^{j}$ for any $j$ become active and since they are all active at the beginning of round $0$, they all decide they are in $A_0^j$ too, which they send to all their neighbors, in this round and all future rounds, together with all values of $j$ for which this is case.
Also in round 0, all nodes that are in $S_1^j$ for any $j$ listen to which of their neighbors are in $A_0^1$, and use this information to decide if they are frozen in round 0 of the algorithm, i.e. if they are in $A_1^j$ too.
From round 1 on, the nodes that are in any $A_1^j$ then send this information to all their neighbors together with the value of $j$.
More generally, in round $h-1$, all nodes $v$ for which there exists (at least one) $j$ such that $v \in S_h^j$ become active and receive the information about $N(v) \cap A_l^{h'}$ for $l \leq h'$ and $h' < h$.
Then, node $v$ uses the sets $A_l^{h'}$ to simulate round $h'$ of the algorithm, i.e. to decide if it was frozen in round $h'$ for all $h' < h$, and it stops this simulation if it is ever frozen.
If $v$ is never frozen, then it is active in round $h$, and thus $v$ is in $A_h^j$ for all $j$ such that $v \in S_h^j$.

To summarize, for a node $v$ let $h$ be the smallest value such $v \in S_h^j$.
Then, $v$ is active from round $h-1$ until the end of the algorithm.
Finally, at some point all nodes will be active, from which on we simply execute the vanilla algorithm in a distributed fashion.

\subsection{Roadmap}
Before giving a formal analysis, we first give an outline and explanation of our proof strategy.
In particular, we will compare our algorithm to the vanilla algorithm and see how they differ.

There are two kinds of errors that can happen in the estimation of the value of a node $v$;
(1) the value $c_v$ is under-estimated, which means a node could proceed to the next round, even though increasing the weight of the incident edges could increase the value of $v$ beyond 1.
(2) $c_v$ is over-estimated, i.e. even though $c_v$ has not reached value $1 - \Theta(\epsilon)$ yet, the node $v$ could mistakenly be frozen.
These errors affect the quality of the final solution in different ways: while (1) destroys the fact that the output is a valid fractional matching, (2) makes it hard to argue about the quality of the computed solution.

Let us first focus on issue (1), and let's call nodes that have $c_v > 1$ \emph{heavy}.
In the algorithm we already deal with this problem by just removing all heavy vertices, however, for the analysis, we also need to bound how much value is lost this way.
Note that simply bounding the number of heavy nodes is not enough, as $c_v$ could grow to be $\Theta(\Delta)$.
Thus, we look at the value incident to those heavy nodes and mark the value on each of these incident edges as \emph{spoiled}.
Looking at edges instead of nodes allows us to more easily bound the additional amount of fractional value that is spoiled in each round, as edges can have value at most 1.
This is formalized in \Cref{lem:HeavyBadRound} and used to show that only a very small amount of the current fractional assignment is spoiled in every round.

For the second problem (2) let us call the nodes that are frozen with $c_v < 1  - \Theta(\epsilon)$ \emph{light}.
It is fairly straightforward to show that only a small fraction of all nodes are \emph{light} with high probability, however, each of those nodes could have ``blocked'' up to $\Theta(1)$ total fractional value.
If the size of the maximum fractional matching is e.g. just $\sqrt{n}$, this means that the amount of fractional value lost due to \emph{light} nodes would be much larger than the size of the computed fractional assignment itself.
To remedy this, we will look more carefully at the probability that a \emph{light} node with value $c_v$ is frozen too early, in particular we will show that this probability is proportional to $c_v$ (see \Cref{lem:LightBadNode}).
By doing so we can show that the expected number of light nodes that are frozen in any given round is only a small fraction of the size of the current fractional assignment (\Cref{lem:LightBadRound}), which we show can not get much larger than the maximum fractional matching.

Finally, we will put everything together in \Cref{thm:MatchingApprox} to show that a similar charging argument as in the proof of \Cref{clm:vanilla} can be used to show that the size of the fractional matching computed by the algorithm is a $(2 + \epsilon)$-approximation of the optimum fractional matching, in expectation.
For graphs where the maximum matching is sufficiently large, this also holds with high probability.

\subsection{Analysis of the the Matching Algorithm}
We start with some preliminaries:
We will assume that $\epsilon$ is sufficiently small, i.e. $\epsilon \leq 1/1000$, since it will simplify some calculations.
Further, we will use the following definition:
\begin{definition}
	Consider a node $v$ that is still active in round $j$ of the algorithm.
	Let each of its incident edges $e = (v, u)$ have value $x_e = w_k$ if $u$ was frozen in round $k < j$ and value $x_e = w_j$ if $u$ is also active in round $j$.
	We will say that $v$ has (true) \emph{value} $c_v = \sum_{e \ni v} x_e$ at the beginning of round $j$.
\end{definition}
Note that this is the same as stopping the algorithm before round $j$ and computing the values as at the end of the algorithm.
Using this definition, we can now show that the probability of a node becoming \emph{heavy} or being heavy and not frozen is small.
\begin{lemma}
	In round $j$ of phase $i$, the probability that a node $v$ with value $c_v > 1 - \epsilon$ estimates its value as $\tilde{c}_v < 1 - 10 \epsilon$ is at most $\frac{1}{1000 (\log^{(i-1)} n)^{10}}$.
	\label[lemma]{lem:HeavyBadNode}
\end{lemma}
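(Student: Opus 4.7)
The plan is to apply the weighted Chernoff lower-tail bound to $\tilde c_v$ after conditioning on the \emph{activity history} of $v$'s neighbors, i.e.\ on the sets $D_h$ of neighbors of $v$ that are still active at the start of each round $h\le j$.

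First I would unfold the estimator as a linear combination of independent Bernoullis. Conditional on this history (and treating the deterministic initial contribution $|N(v)|\cdot w_0$ as part of the estimate, i.e.\ letting the sum range over $h\in\{0,1,\dots,j\}$ with the convention $w_{-1}=0$), one may write
\begin{align*}
\tilde c_v \;=\; \sum_{h=0}^{j}\sum_{u\in D_h}\frac{w_h-w_{h-1}}{p_{i(h)}}\,\mathbf{1}[u\in S_h^j].
\end{align*}
The indicators $\mathbf{1}[u\in S_h^j]$ are mutually independent $\mathrm{Ber}(p_{i(h)})$ variables. Independence \emph{across} $(h,u)$ is immediate, since the sets $S_h^j$ are sampled independently for distinct $(h,j)$, and within a fixed $S_h^j$ across distinct $u$. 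Independence of each $\mathbf{1}[u\in S_h^j]$ \emph{from} the history $\{D_{h'}\}_{h'\le h}$ holds because the algorithm draws a fresh sample $S_h^j$ for each outer round $j$: the randomness that determined $D_{h'}$ for $h'\le h$ lives entirely in the disjoint samples $\{S_{h''}^{h'}\}_{h''\le h'<h}$, which are independent of $S_h^j$. A telescoping calculation then yields $\E[\tilde c_v\mid\text{history}]=\sum_{h=0}^{j}|D_h|(w_h-w_{h-1})=c_v$, so the estimator is unbiased.

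Second, I would bound the maximum coefficient $a=\max_{h,u}(w_h-w_{h-1})/p_{i(h)}$ appearing in the weighted sum. Using $w_h-w_{h-1}\le \eps w_h$, the definition of phase $i(h)=i'$ (which guarantees $w_h\le 1/(\log^{(i')}n)^{5}$), and $p_{i'}=C/(\log^{(i')}n)^{4}$, we get $(w_h-w_{h-1})/p_{i(h)}\le \eps/(C\log^{(i(h))}n)$. Since $\log^{(i')}n$ is decreasing in $i'$, the upper bound is maximized at the current phase, giving $a\le \eps/(C\log^{(i)}n)$.

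Third, the bound follows from Chernoff: since $c_v>1-\eps$, the event $\tilde c_v<1-10\eps$ requires a relative underestimate of $\delta \ge 9\eps/c_v\ge 9\eps$ (the bound only improves when $c_v>1$). Applying the weighted Chernoff lower-tail bound on coefficients of size at most $a$ gives
\begin{align*}
\Pr\bigl[\tilde c_v<1-10\eps\bigr]\;\le\;\exp\!\Bigl(-\tfrac{\delta^{2} c_v}{2a}\Bigr)\;\le\;\exp\!\bigl(-\Omega(\eps C\log^{(i)}n)\bigr),
\end{align*}
and for the universal constant $C$ chosen sufficiently large (depending only on $\eps$, e.g.\ $C=\Theta(1/\eps)$), this becomes at most $1/(1000(\log^{(i-1)}n)^{10})$, since $\log((\log^{(i-1)}n)^{10})=\Theta(\log^{(i)}n)$. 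The step most likely to require care is the independence argument: one must confirm that the \emph{fresh} sample $S_h^j$ used in the round-$j$ estimator is genuinely independent of everything that determined the activity of $v$'s neighbors through round $h-1$. Once this is established, the rest is a routine Chernoff computation, made quantitatively tight by the way the sampling probabilities $p_i=C/(\log^{(i)}n)^{4}$ and the phase thresholds $1/(\log^{(i)}n)^{5}$ are chosen so that $w_h/p_{i(h)}\le 1/(C\log^{(i)}n)$.
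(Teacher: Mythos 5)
Your proposal is correct and takes essentially the same route the paper intends for this lemma (whose proof is omitted in the source but mirrored in the proof of \Cref{lem:LightBadNode}): express $\tilde{c}_v$ as an unbiased weighted sum of fresh Bernoulli samples that are independent of the neighbors' activity history, bound each coefficient by roughly $1/(C \log^{(i)} n)$ using the phase threshold $w_h \leq 1/(\log^{(i(h))} n)^5$ together with $p_{i(h)} = C/(\log^{(i(h))} n)^4$, and finish with a multiplicative Chernoff tail plus the conversion $\log^{(i)} n = \Theta(\log (\log^{(i-1)} n))$. Your two delicate points are handled soundly: the per-round resampling of $S_h^j$ indeed makes the round-$j$ samples independent of everything determining the history through round $j-1$, and for $c_v > 1$ the exponent $(c_v - 1 + 10\epsilon)^2/c_v$ is increasing in $c_v$, so the worst case is $c_v$ near $1-\epsilon$ as you use.
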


We can now use this statement to bound the amount of fractional value that will be newly marked as \emph{spoiled} in one round of the algorithm.
Recall that we said the fractional value of an edge $e$ is spoiled, if $e$ is incident to a heavy node.
Thus, to bound the amount of value that is additionally spoiled in round $j$, we just need to look at all nodes that become heavy or stay heavy in a round $j$.
\begin{lemma}
	In every round $j$ in phase $i$, the extra fractional value that will be added by nodes that have value $c_v > 1 - \epsilon$, i.e. the amount that is additionally spoiled, is at most $\frac{100}{(\log^{(i)} n)^2} \cdot \max \{(\log n)^{18}, \Abs{M_j} \}$ with probability at least $1 - n^{-9}$, where $M_j$ is the fractional assignment at the beginning of round $j$.
	If the round is in phase $i > 1$, we assume that the degree among the remaining active vertices is at most $(\log n)^5$.
	\label[lemma]{lem:HeavyBadRound}
\end{lemma}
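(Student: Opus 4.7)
The plan is to write the additionally-spoiled fractional value in round $j$ as a sum $X = \sum_v Y_v$ over nodes $v$ with $c_v > 1-\epsilon$ at the start of round $j$, where $Y_v$ captures the additional spoiled weight produced by $v$ in round $j$. If $v$ is correctly frozen by the algorithm, nothing at $v$'s incident edges changes and $Y_v = 0$. Otherwise, each active edge incident to $v$ is multiplied by $(1+\epsilon)$, contributing at most $(1+\epsilon)c_v \leq 2c_v$ to the newly-spoiled bag; this covers both the case where $v$ was already heavy (only an $\epsilon c_v$ increase is new) and the case where $v$ newly crosses the heavy threshold (all incident edges freshly spoil). Importantly, no node with $c_v \leq 1-\epsilon$ at the start of round $j$ can contribute: $c_v$ grows by at most a factor of $(1+\epsilon)$ in one round and $(1-\epsilon)(1+\epsilon) < 1$, so such a node cannot cross the heavy threshold.

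Given this decomposition, I would first bound the expectation using \Cref{lem:HeavyBadNode}: for each $v$ with $c_v > 1-\epsilon$, the probability that $v$ is not frozen is at most $p := 1/(1000\,(\log^{(i-1)} n)^{10})$. Since $\sum_v c_v = 2\sum_e x_e = 2|M_j|$, linearity of expectation gives
\[
\mathbb{E}[X] \;\leq\; 2p \sum_{v \,:\, c_v > 1-\epsilon} c_v \;\leq\; 4p\,|M_j| \;\ll\; \frac{|M_j|}{(\log^{(i)} n)^2},
\]
because $(\log^{(i-1)} n)^{10}$ dwarfs $(\log^{(i)} n)^2$. In phase $i=1$ the bound is even stronger: $p \leq 1/(1000\, n^{10})$, so a union bound over $V$ yields $X=0$ with high probability and the claim is immediate.

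For $i>1$, to lift the expectation to a high-probability statement I would apply a Chernoff-type concentration bound such as \Cref{thm:Chernoff-Extension}. The degree assumption ensures $c_v = O((\log n)^5)$ at every active $v$ (its frozen contribution plus at most $(\log n)^5$ active edges each of weight at most $1$), so $Y_v = O((\log n)^5)$. The dependency structure is mild: $Y_v$ and $Y_{v'}$ interact only through sampled sets $A_h^j$ involving nodes in a shared polylogarithmic-radius ball, so the dependency degree is $(\log n)^{O(1)}$. Choosing the concentration threshold to be $\frac{100}{(\log^{(i)} n)^2}\cdot\max\{(\log n)^{18}, |M_j|\}$ provides enough slack above the mean for the bound to yield failure probability $n^{-9}$; the $(\log n)^{18}$ floor is precisely what buys an $n^{-9}$ Chernoff tail when $|M_j|$ (and hence the mean) is too small to drive concentration on its own.

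The main obstacle I anticipate is rigorously establishing the low-dependency structure of the $Y_v$'s, since the algorithm's estimates share the sampled sets $A_h^j$ across neighbors. The cleanest approach is to condition on the full state at the start of round $j$ (edge weights and freezing status of every node) and then argue that the fresh sampling randomness used in round $j$ decouples the $Y_v$'s outside of a polylogarithmic-radius ball, so that \Cref{thm:Chernoff-Extension} applies with a modest dependency parameter. A secondary worry is tightly bounding $c_v$ for nodes that have been active across many rounds, which relies on the chain of previous rounds' high-probability guarantees holding along the inductive proof.
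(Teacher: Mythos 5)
Your proposal is correct and takes essentially the same route as the paper, whose argument for this lemma parallels its printed proof of \Cref{lem:LightBadRound}: bound each bad node's probability via \Cref{lem:HeavyBadNode}, use $\sum_v c_v = 2\Abs{M_j}$ for the value-weighted expectation, union bound in phase $1$, and apply the bounded-dependency Chernoff bound (\Cref{thm:Chernoff-Extension}) with dependency degree $\operatorname{\text{{\rm poly}}}(\log n)$ together with the $(\log n)^{18}$ floor when $\Abs{M_j}$ is small. One minor repair: the side claim that $c_v = O((\log n)^5)$ is neither needed nor justified by the active-degree assumption (frozen neighbors are not covered by it); what the concentration step actually requires is only that each node's newly spoiled contribution is $O((\log n)^5)$, which already follows from your decomposition, since per-round increments touch only active edges and a node that newly crosses the heavy threshold carries total value at most $1+\epsilon$.
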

Before we can now use this to give a statement about the total amount of fractional that is spoiled in the execution of the algorithm, we need to ensure that the condition of \Cref{lem:HeavyBadRound} holds, which we achieve with the following Lemma.
\begin{lemma}
	After phase $i = 1$, the degree in the graph induced by all active nodes is at most $(\log n)^5$ with probability $1 - n^{-8}$.
	\label[lemma]{lem:LowDegree}
\end{lemma}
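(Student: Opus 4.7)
The strategy is to argue that a node still active at the end of phase~1 with more than $(\log n)^5$ active neighbors must have had true value exceeding $1-\epsilon$ in the last round of phase~1, i.e., it must have been \emph{heavy}. Since heavy nodes are correctly detected and frozen with extremely high probability in phase~1 by \Cref{lem:HeavyBadNode}, such survivors can then be ruled out by a union bound over vertices.

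In more detail, let $j_1$ denote the last round of phase~1. By the definition of the phases, $w_{j_1} \le 1/(\log n)^5$ while $w_{j_1+1} = (1+\epsilon)\, w_{j_1} > 1/(\log n)^5$, so
\begin{align*}
    w_{j_1} > \frac{1}{(1+\epsilon)(\log n)^5}.
\end{align*}
Now suppose some vertex $v$ is still active at the start of phase~2 and has more than $(\log n)^5$ active neighbors in the subgraph induced by active nodes. Each such neighbor was also active at round $j_1$ (remaining active later implies being active earlier), so the true value of $v$ at round $j_1$ already satisfies
\begin{align*}
    c_v > (\log n)^5 \cdot w_{j_1} > \frac{1}{1+\epsilon} \ge 1 - \epsilon,
\end{align*}
meaning $v$ is heavy at round $j_1$. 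Since $v$ is not frozen at round $j_1$, its estimate must satisfy $\tilde{c}_v \le 1-10\epsilon$, which is exactly the bad event that \Cref{lem:HeavyBadNode} (applied with $i=1$, using the convention $\log^{(0)} n = n$) bounds by $1/(1000\, n^{10})$.

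A union bound over the at most $n$ candidate vertices $v$ gives a total failure probability of at most $1/(1000\, n^{9}) \le n^{-8}$, as required. No union bound over rounds is needed, because whether a node is active at the start of phase~2 is determined entirely by the single round $j_1$; one invocation of \Cref{lem:HeavyBadNode} there suffices. The only real bookkeeping item I foresee is matching the weight threshold that delimits phase~1 against the claimed $(\log n)^5$ degree bound, and the calculation above shows the constants work out thanks to the slack between the definition of ``heavy'' ($c_v > 1-\epsilon$) and the freezing threshold ($\tilde{c}_v > 1-10\epsilon$).
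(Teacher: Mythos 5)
Your argument is correct and is essentially the intended one: in the last round of phase~1 the active edge weight is at least $1/\bigl((1+\epsilon)(\log n)^5\bigr)$, so a node that survives phase~1 with more than $(\log n)^5$ active neighbors was heavy ($c_v>1-\epsilon$) yet estimated $\tilde{c}_v\leq 1-10\epsilon$ in that round, an event that \Cref{lem:HeavyBadNode} with $i=1$ (probability at most $1/(1000\,n^{10})$, using $\log^{(0)}n=n$) rules out after a union bound over the $n$ vertices, giving failure probability at most $n^{-8}$. The only point worth making explicit is that the bound established at the start of phase~2 persists for all later rounds simply because the set of active nodes only shrinks, which is the form in which \Cref{lem:HeavyBadRound} and \Cref{lem:LightBadRound} use it (and your parenthetical "determined entirely by round $j_1$" should really say that only the round-$j_1$ estimate needs to be controlled, since activity at the start of phase~2 depends on all phase-1 rounds).
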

Finally, we are now able to bound the amount of fractional value that is spoiled in the entire algorithm.
\begin{lemma}
	The total fractional value that is spoiled, i.e. that is incident to vertices with value $c_v > 1$, is at most $\epsilon \max \{ (\log n)^{18}, \Abs{M} \}$ with probability $1 - n^{-7}$, where $M$ is the fractional assignment computed by the algorithm before the final removal of heavy nodes.
	\label[lemma]{lem:HeavyBad}
\end{lemma}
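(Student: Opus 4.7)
The plan is to combine the per-round bound of \Cref{lem:HeavyBadRound} with the structural guarantee of \Cref{lem:LowDegree} via a union bound, and then perform a careful phase-by-phase summation. First, I would invoke \Cref{lem:LowDegree} to ensure that, with probability at least $1 - n^{-8}$, after phase $1$ the induced maximum degree among active vertices is at most $(\log n)^5$; this is precisely the hypothesis that \Cref{lem:HeavyBadRound} requires in phases $i > 1$. Then for every round $j$ of the algorithm I would apply \Cref{lem:HeavyBadRound}: the extra spoiled value in round $j$ of phase $i$ is at most $\frac{100}{(\log^{(i)} n)^2}\cdot \max\{(\log n)^{18}, |M_j|\}$ with failure probability $n^{-9}$. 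Since the entire algorithm runs in at most $\mathrm{poly}(n)$ rounds, a union bound over all rounds (together with the $n^{-8}$ event above) brings the total failure probability to at most $n^{-7}$.

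Conditioned on the joint success event, I would exploit monotonicity: edges are only frozen (never unfrozen) and the active weights $w_j = (1+\epsilon)^j/\Delta$ are only scaled up, so the quantity $|M_j|$ is non-decreasing in $j$ and hence $|M_j|\le |M|$ uniformly. Thus the whole bound reduces to showing $\sum_{\text{rounds } j} \frac{100}{(\log^{(i(j))} n)^2} = O(\epsilon)$. Writing $L_i = \log^{(i)} n$, phase $i$ consists of those rounds where $w_j$ traverses the window $\big[1/L_{i-1}^5,\, 1/L_i^5\big]$, so it contains
\[
r_i \;=\; O\!\Big(\tfrac{1}{\epsilon}\log (L_{i-1}/L_i)\Big) \;=\; O(L_i/\epsilon)
\]
rounds, using the relation $L_{i-1}=2^{L_i}$. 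Consequently phase $i$ contributes at most $O(r_i/L_i^2) = O(1/(\epsilon L_i))$, and summing over $i = 1,\dots, i^*$ (where the stopping criterion $1/L_{i^*} > \epsilon\log(1+\epsilon)/1000$ forces $L_{i^*} = \Theta(1/\epsilon^2)$) gives a series that is essentially geometric at its tail and dominated by its last term, so the total is $O(1/(\epsilon L_{i^*})) = O(\epsilon)$.

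I would then handle the final vanilla stage separately: once the sampling probability is $1$, estimates are exact, so any node still active at the start of a round has $c_v < 1 - \epsilon$, and after scaling by $(1+\epsilon)$ its true value is at most $(1-\epsilon)(1+\epsilon) = 1-\epsilon^2 < 1$. Hence no node becomes heavy during the vanilla rounds and nothing new is spoiled there. Combining the estimation-phase contribution with the vacuous vanilla contribution and absorbing constants yields a total spoiled value of at most $\epsilon\cdot\max\{(\log n)^{18}, |M|\}$ with probability at least $1 - n^{-7}$.

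The main obstacle is the tight accounting in the last estimation phase: the per-round coefficient $100/L_{i^*}^2$ is \emph{not} itself small, and the desired $O(\epsilon)$ total is achieved only because the doubly-exponential telescoping $L_{i-1}=2^{L_i}$ makes $\sum 1/L_i$ collapse onto its last term, which the stopping condition pins at $\Theta(\epsilon^2)$. A secondary care point is to check literally that $|M_j|$ is monotone (so replacing $|M_j|$ by $|M|$ is sound) and that the polynomial round count keeps the union bound safely below $n^{-7}$; both are routine given the algorithm's specification.
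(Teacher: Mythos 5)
Your proposal follows essentially the same route as the paper's argument (which mirrors its proof of \Cref{lem:LightBad}): condition on \Cref{lem:LowDegree}, apply \Cref{lem:HeavyBadRound} to every round with a union bound over the $O(\log_{1+\epsilon}\Delta)\ll n$ rounds, bound the number of rounds per phase by $O(\log^{(i)} n/\log(1+\epsilon))$ using $w_j=(1+\epsilon)^j/\Delta$, and let the sum $\sum_i 1/\log^{(i)} n$ collapse onto its last sampled phase, which the stopping rule caps at $\epsilon\log(1+\epsilon)/1000$. The only nitpicks are cosmetic: the union bound should cite the actual $O(\log_{1+\epsilon}\Delta)$ round count rather than ``$\mathrm{poly}(n)$,'' and the quantity pinned by the stopping criterion is $\log^{(i^*-1)}n\geq 1000/(\epsilon\log(1+\epsilon))$ for the last phase run with sampling (the stopping phase $i^*$ itself is excluded from the sum), neither of which affects the conclusion.
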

We now turn our attention to vertices that are \emph{light}, i.e. that were frozen too early by the algorithm.
Our first result says that vertices are unlikely to be light, in fact we even prove a stronger statement than for the heavy case.
We show that the probability that a node with a small value $c_v$ is frozen too early is proportional to $c_v$, i.e., the smaller the value, the less likely the node is to be frozen.
\begin{lemma}
	The probability that a vertex $v$ of value $c_v < 1 - 20 \epsilon$ is frozen in round $j$ in phase $i$ is at most $c_v \cdot \frac{1}{10 (\log^{(i-1)} n)^{10}}$.
	\label[lemma]{lem:LightBadNode}
\end{lemma}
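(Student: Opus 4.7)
The plan is to express $\tilde c_v$ as a sum of independent bounded random variables with expectation $c_v$, invoke the multiplicative Chernoff bound (\Cref{thm:ChernoffTight}), and then extract the factor $c_v$ from a direct algebraic manipulation of the Chernoff expression.

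First, I would condition on all randomness from rounds $h<j$, which fixes, for each neighbor $u$ of $v$ and each past round $h$, whether $u$ was active in round $h$. Under this conditioning, the round-$j$ samples $\{S_h^j\}_{h\le j}$ remain mutually independent, and $\tilde c_v = \sum_{h,u} Y_{u,h}$ where the sum ranges over pairs $(u,h)$ with $u\in N(v)$ active in round $h$, and $Y_{u,h} = \mathbbm{1}[u\in S_h^j]\cdot (w_h-w_{h-1})/p_{i(h)}$ is a scaled Bernoulli. A telescoping check shows $\mathbb{E}[\tilde c_v] = c_v$ (up to a negligible additive slack from the initial weight $w_0 = 1/\Delta$). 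The crucial parameter is the per-term maximum $M := \max_h (w_h - w_{h-1})/p_{i(h)}$; by the phase definition $w_h \le 1/(\log^{(i')} n)^5$ and $p_{i'} = C/(\log^{(i')} n)^4$ within phase $i'$, so $M \le \epsilon/(C\log^{(i)} n)$, with the maximum attained in the current phase $i$.

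Next, I apply the scaled multiplicative Chernoff bound $\Pr[X > (1+\delta)\mu] \le (e^\delta/(1+\delta)^{1+\delta})^{\mu/M}$ (which follows from \Cref{thm:ChernoffTight} after dividing each coefficient by $M$). Choosing $\delta$ so that $(1+\delta)c_v = 1 - 10\epsilon$ and simplifying with $y := (1-10\epsilon)/M = \Theta(C\log^{(i)} n/\epsilon)$ yields $\Pr[\tilde c_v \ge 1 - 10\epsilon] \le (e c_v/(1-10\epsilon))^y$. To peel off the factor $c_v$, I write $(z c_v)^y = c_v \cdot z \cdot (z c_v)^{y-1}$ with $z := e/(1-10\epsilon)$, and split on $c_v$. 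For $c_v \le (1-10\epsilon)/(2e)$ we have $zc_v \le 1/2$, so the residual $(zc_v)^{y-1} \le 2^{-(y-1)}$ falls below $1/(10(\log^{(i-1)} n)^{10})$ thanks to the iterated-log identity $2^{\log^{(i)} n} = \log^{(i-1)} n$, provided $C/\epsilon$ is a sufficiently large constant. For $c_v = \Theta(1)$, the standard Chernoff form $\exp(-\Omega(\delta^2 c_v/M))$ with $\delta c_v \ge 10\epsilon$ gives an $\exp(-\Omega(\epsilon C\log^{(i)} n)) = (\log^{(i-1)} n)^{-\Omega(\epsilon C)}$ bound, and since $c_v = \Theta(1)$ the desired multiplicative $c_v$ factor is absorbed for $\epsilon C$ large enough.

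The main obstacle is the linear factor $c_v$: standard Chernoff bounds are exponentially small in $\mu = c_v$ but do not contain a multiplicative $\mu$ on their own. The key observation is that the Chernoff expression $(ec_v/(1-10\epsilon))^y$ already carries $c_v$ inside its base, so one copy of $c_v$ can be isolated whenever the base is $<1$, i.e., for $c_v$ bounded away from the threshold. The case split is forced by this requirement: for small $c_v$ the peeling trick produces the $c_v$ factor naturally, while for moderate $c_v$ the factor is absorbed into constants once the exponential decay in $\log^{(i)} n$ (which, via the iterated-log identity, becomes polynomial decay in $\log^{(i-1)} n$) dominates.
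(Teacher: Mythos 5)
Your proposal is correct and follows essentially the same route as the paper's proof: both split into a case where $c_v$ is bounded below by a constant (handled by plain concentration of the estimator) and a small-$c_v$ case handled by applying \Cref{thm:ChernoffTight} after rescaling by the coefficient bound $O\bigl(1/(C\log^{(i)} n)\bigr)$, with $\delta$ chosen so that $(1+\delta)c_v = 1-10\epsilon$ and the multiplicative factor $c_v$ extracted from the base of the resulting bound (the paper via $(1+\delta)^{-\Theta(C\log^{(i)} n)}$ with $1+\delta\approx 1/c_v$, you via peeling one factor from $(e c_v/(1-10\epsilon))^{y}$), using the identity $2^{\log^{(i)} n}=\log^{(i-1)} n$ and $C$ sufficiently large in terms of $\epsilon$, exactly as the paper implicitly assumes.
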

\begin{proof}
	Let $X$ be the random variable denoting the value of $\tilde{c}_v$ in round $j$.
	Since $\tilde{c}_v$ is an unbiased estimator of $c_v$, we have $\E[X] = c_v$.
	We are interested in bounding $\Pr[X > 1 - 10 \epsilon]$, which is exactly the probability that node $v$ is (mistakenly) frozen.
	First, note that in the same way as in \Cref{lem:HeavyBadNode} we have that the probability that
	\begin{align*}
		\Pr[X > 1 - 10\epsilon] & \leq \Pr[ \Abs{X - \E[X]} \geq 9 \epsilon] \\
		                        & \leq \frac{1}{1000(\log^{(i-1)} n)^{10}}
	\end{align*}
	Thus, for $c_v \geq 1/100$ the statement is true, and we will assume $c_v \leq 1/100$ in the following.

	Note that we can write $X$ as follows:
	\begin{align*}
		X = \sum_{h \leq j} \frac{w_h}{p_{i(h)}} \sum_{\substack{u \in N(v) \\ u \text{ active in round $h$}}} X_{u, h},
	\end{align*}
	where $X_{u, h}$ is the indicator variable for the event that node $u$ is sampled in round $h$.
	We can rewrite the probability that a note is prematurely frozen as:
	\begin{align*}
		\Pr [ X > 1 - 10 \epsilon] = \Pr [X > ( 1 + \frac{1 - 10 \epsilon - c_v}{c_v} ) c_v ].
	\end{align*}
	In the following, let $\delta = \frac{1 - 10 \epsilon - c_v}{c_v}$.
	Note that since $\epsilon \leq 1/1000$ and $c_v \leq 1/100$, we have that $\delta \geq 980$ and trivially $\delta \leq \frac{1}{c_v}$.

	In order to get a stronger bound from \Cref{thm:ChernoffTight} we aim to maximize the expected value of the analyzed random variable $X$ while ensuring that each of the coefficients of the random binary variables is less than $1$.
	We have that the coefficients are of the form $\frac{w_h}{p_{i(h)}}$ where $h \leq j$.
	This is at most $\frac{1}{C (\log^{(i)} n)}$ as argued previously.
	Thus we can apply \Cref{thm:ChernoffTight} to $C (\log^{(i)} n) X$ and obtain
	\begin{align*}
		 & \Pr [ C (\log^{(i)} n)  X > (1 + \delta ) \E[C (\log^{(i)} n)^5 X] ]                         \\
		 & \leq \left( \frac{e^{\delta}}{(1 + \delta)^{(1 + \delta)}} \right)^{ C (\log^{(i)} n) \E[X]} \\
		 & \leq \exp \left( (\delta - (1 + \delta) \ln(1 + \delta) ) C (\log^{(i)} n) c_v \right)       \\
		 & \leq \exp \left( (1 - \ln(1 + \delta)) \delta  C (\log^{(i)} n) c_v \right)                  \\
		 & \leq \exp \left( (- \ln(1 + \delta)/2 - 1) (1/c_v)  C (\log^{(i)} n) c_v \right)             \\
		 & \leq c_v \cdot \frac{1}{10 (\log^{(i-1)} n)^{10}}.
	\end{align*}
\end{proof}
In the next Lemma, we will make use of the fact that the probability that a light node is frozen is proportional to its value to show that the total number of light nodes that are frozen is proportional to the size of the current matching.
\begin{lemma}
	The number of vertices $v$ that are frozen in round $j$ of phase $i$ despite having $c_v < 1 - 20 \epsilon$, is at most $\frac{100}{(\log^{(i)} n)^2} \cdot \max \{ (\log n)^{18},  \Abs{M_j} \}$ with probability at least $1 - n^{-9}$, where $M_j$ is the value of the fractional assignment at the beginning of round $j$.
	If the round $j$ is in phase $i > 1$, we assume that the degree among the remaining active vertices is at most $(\log n)^5$.
	\label[lemma]{lem:LightBadRound}
\end{lemma}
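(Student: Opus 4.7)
The plan is to follow the template of \Cref{lem:HeavyBadRound}, but to exploit the stronger per-node bound from \Cref{lem:LightBadNode}---whose probability is proportional to $c_v$---so that the expected number of light-and-frozen nodes is controlled by $|M_j|$ rather than by $n$. Let $Y_v$ denote the indicator that $v$ is active at the start of round $j$, has $c_v<1-20\epsilon$, and is frozen during round $j$. First I would condition on the history $\mathcal{H}$ of all randomness used in rounds $1,\dots,j-1$, which fixes the active set, the values $c_v$, and $|M_j|$. Applying \Cref{lem:LightBadNode} conditionally to every surviving node and using $\sum_v c_v = 2|M_j|$,
\[
  \E\!\left[\sum_v Y_v \;\Big|\; \mathcal{H}\right]
  \;\le\; \sum_v \frac{c_v}{10(\log^{(i-1)}n)^{10}}
  \;=\; \frac{|M_j|}{5(\log^{(i-1)}n)^{10}}.
\]

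The key structural observation is that, given $\mathcal{H}$, the estimator $\tilde c_v^{\,j}$ is a deterministic function only of the fresh round-$j$ samplings $\{S_h^{\,j}\cap N(v) : h\le j\}$: the ``active in round $h$'' status of every neighbor $u\in N(v)$ is already determined by $\mathcal{H}$ and is merely masked by $u$'s membership in $S_h^{\,j}$. Consequently $Y_v$ and $Y_u$ are conditionally independent whenever $N(v)\cap N(u)=\emptyset$---in particular whenever their graph distance is at least $3$. Under the phase $i>1$ hypothesis that the maximum active degree is $(\log n)^5$, the conditional dependency graph of $\{Y_v\}$ therefore has maximum degree $\Delta'\le(\log n)^{10}$.

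For phase $i>1$ I would then apply \Cref{thm:Chernoff-Extension} conditionally with $\mu=|M_j|/(5(\log^{(i-1)}n)^{10})$ and $\Delta'\le(\log n)^{10}$. The target $B=\tfrac{100}{(\log^{(i)}n)^{2}}\max\{(\log n)^{18},|M_j|\}$ satisfies $B/\mu\gg 1$ and, regardless of which term inside the $\max$ dominates, $B\ge 100(\log n)^{16}$; hence with $\delta=B/\mu-1$ one gets $\mu\delta\ge B/2$ and $\mu\delta/(3(\Delta'+1))=\Omega((\log n)^{6})$, so the conditional failure probability is at most $8(\Delta'+1)\exp(-\Omega((\log n)^{6}))\ll n^{-9}$, and averaging over $\mathcal{H}$ yields the claim. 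For phase $i=1$ no degree hypothesis is available, but \Cref{lem:LightBadNode} now specializes to a per-node probability of $c_v/(10\,n^{10})$ (using the convention $\log^{(0)}n=n$, consistent with \Cref{lem:HeavyBadNode}), so $\E[\sum_v Y_v]\le|M_j|/(5n^{10})$ and, since $B\ge 100(\log n)^{16}$, a plain Markov bound already delivers the $n^{-9}$ tail.

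The hard part---and the reason this does not fall out immediately from \Cref{lem:LightBadNode} combined with a standard concentration bound---is the dependency structure: \emph{a priori} the freezing decision of $v$ in round $j$ depends on samplings throughout $v$'s $j$-hop neighborhood, and $j$ can be as large as $\Theta(\log n)$, which would blow $\Delta'$ past any threshold usable in \Cref{thm:Chernoff-Extension}. The whole force of the argument lies in choosing $\mathcal{H}$ so that these long-range dependencies collapse and only the genuinely local samplings $S_h^{\,j}\cap N(v)$ remain random; once that is in place, the parameter check is routine.
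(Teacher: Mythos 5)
Your proposal follows essentially the same route as the paper's proof: apply \Cref{lem:LightBadNode} per node and sum via $\sum_v c_v = 2|M_j|$ to get the expected count, handle phase $i=1$ by the trivially tiny per-node probability (the paper uses a union bound where you use Markov), and for $i>1$ invoke \Cref{thm:Chernoff-Extension} with dependency degree $\Delta' \le (\log n)^{10}$ coming from the active-degree bound $(\log n)^5$, with the same case split driven by $\max\{(\log n)^{18}, |M_j|\}$. Your explicit conditioning on the pre-round-$j$ randomness is just a more careful phrasing of the independence claim the paper states tersely ("two nodes that do not share any neighbors are independent"), so the argument is correct and matches the paper.
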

\begin{proof}
	Let $X_v$ be the indicator random variable that is one if node $v$ estimates its value $\tilde{c}_v > 1 - 10 \epsilon$ despite having $c_v < 1 - 20 \epsilon$.
	From \Cref{lem:LightBadNode} we know that $\E[X_v] \leq c_v \cdot \frac{1}{10 (\log^{(i-1)} n)^{10}}$.
	Further, as $\sum_{v} c_v = 2 \Abs{M_j}$ we have that for $X = \sum_v X_v$ it holds $\E[X] \leq 2 \Abs{M_j} \frac{1}{10 (\log^{(i-1)} n)^{10}}$.
	By definition, $X$ is the number of nodes that are prematurely frozen in round $j$.

	Let us first look at rounds in phase $i = 1$.
	There we have that each individual node $v$ is prematurely frozen with probability at most $\frac{1}{10 n^{10}}$, which implies by a union bound that any of the $n$ nodes in the graph are prematurely frozen is at most $n^{-9}$.

	For nodes in round $i > 1$, we have to be more careful with our analysis, as the failure probabilities of individual nodes are too small for us to just union bound over all nodes.
	Instead, we will aim to use \Cref{thm:Chernoff-Extension} and bound the dependency degree.
	Note that the event of a node being prematurely frozen is only dependent on which neighbors are sampled
	Thus two nodes that do not share any neighbors are independent.
	By our assumption we have that the degree among active vertices is at most $(\log n)^5$, meaning the dependency degree is $\Delta' \leq (\log n)^{10}$.

	For the final step, we consider two cases:
	(1) $\Abs{M_j} \leq (\log n)^{18}$ and (2) $\Abs{M_j} > (\log n)^{18}$.
	In the first case, we will analyze $\Pr[X \geq \frac{100}{(\log^{(i)} n)^2} \cdot (\log n)^{18}]$, while we know that $\E[X] \leq \frac{1}{5 (\log^{(i-1)} n)^{10}} (\log n)^{18}$.
	Thus, we have
	\begin{align*}
		 & \Pr[X \geq \frac{100}{(\log^{(i)} n)^2} \cdot (\log n)^{18}]                                                         \\
		 & \leq \Pr[ \Abs{X - \E[X]} \geq \frac{99}{(\log^{(i)} n)^2} (\log n)^{18} ]                                           \\
		 & \leq 8 ((\log n)^{10} + 1) \cdot \exp \left(- \frac{99 (\log n)^{18}}{3((\log n)^{10} + 1) (\log^{(i)} n)^2 }\right) \\
		 & \leq 8 ((\log n)^{10} + 1) \cdot \exp \left( - 10 (\log n)^2 \right) \leq n^{-9},
	\end{align*}
	For case (2) where $\Abs{M_j} > (\log n)^{18}$, we have the same argument, except that instead of $(\log n)^{18}$ we have $\Abs{M_j}$, and we obtain (at worst) the same probability of failure since we have $\Abs{M_j} > (\log n)^{18}$.
\end{proof}
Finally, we are now able to bound the total number of light nodes at the end of the algorithm.
\begin{lemma}
	The number of vertices that are frozen too early, i.e. that have value $c_v < 1 - 20 \epsilon$, is at most $\epsilon \max \{(\log n)^{18}, \Abs{M}\}$, where $M$ is the fractional assignment computed by the algorithm, with probability $1 - n^{-7}$.
	\label[lemma]{lem:LightBad}
\end{lemma}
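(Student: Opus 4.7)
The plan is to sum the per-round bound of \Cref{lem:LightBadRound} across every round of the sampled portion of the algorithm, observe that no new light nodes are created during the final vanilla stage, and collect the failure probabilities by a union bound.

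First I would set up the conditioning. \Cref{lem:LowDegree} gives, with probability at least $1-n^{-8}$, that after phase~$1$ the active subgraph has degree at most $(\log n)^5$. I condition on this event for the remainder of the argument, so that the hypothesis of \Cref{lem:LightBadRound} is satisfied in every phase $i>1$; phase~$1$ does not need this hypothesis. For each round $j$ in phase $i$, \Cref{lem:LightBadRound} then bounds the number of nodes frozen with $c_v<1-20\eps$ by $\frac{100}{(\log^{(i)}n)^2}\cdot\max\{(\log n)^{18},|M_j|\}$ with probability at least $1-n^{-9}$. Since $|M_j|$ is nondecreasing in $j$ and the algorithm's final fractional assignment (before removing heavy nodes) is $M$, we have $|M_j|\le|M|$ for every $j$, so I can replace $|M_j|$ by $|M|$ and pull the factor $\max\{(\log n)^{18},|M|\}$ out of the sum.

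Next I would bound $\sum_j \frac{100}{(\log^{(i(j))} n)^2}$ by estimating the number of rounds per phase. Phase $i$ consists of rounds $j$ with $w_j$ in an interval of multiplicative length $(\log^{(i-1)} n/\log^{(i)} n)^5$, and edge weights grow by a factor $(1+\eps)$ per round, so phase $i$ contains at most $\frac{5\log(\log^{(i-1)} n/\log^{(i)} n)}{\log(1+\eps)}=O(\log^{(i)} n/\eps)$ rounds. Multiplying by $\frac{100}{(\log^{(i)} n)^2}$ gives a contribution $O\bigl(\frac{1}{\eps\log^{(i)} n}\bigr)$ from phase $i$. Since $\log^{(i)} n$ grows (super)exponentially in $i$ as $i$ decreases from its maximum, the sum over the $O(\log^* n)$ phases is dominated by phase $1$ and equals $O\bigl(\frac{1}{\eps\log n}\bigr)$. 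For sufficiently large $n$ this is at most $\eps$, giving the claimed bound on the total number of light nodes produced during the sampled stage.

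Then I would dispose of the vanilla stage. Once the algorithm switches to the vanilla execution (all nodes always awake, exact value computation), a node is frozen only when its true value is at least $1-\eps$, and $1-\eps>1-20\eps$, so no light nodes are ever created in this stage. Hence the bound from the sampled stage is the entire contribution.

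Finally I would collect the failure probabilities. The conditioning on \Cref{lem:LowDegree} costs $n^{-8}$, and the union bound of \Cref{lem:LightBadRound} over the $O(\eps^{-1}\log n)$ rounds of the sampled stage contributes at most $O(\eps^{-1}\log n)\cdot n^{-9}\le n^{-8}$. Adding these gives total failure probability at most $n^{-7}$ as required. The main obstacle is the careful accounting of rounds per phase and verifying that the resulting geometric-type sum over phases indeed shrinks below $\eps$; once that telescoping is in place, the rest is a clean union bound.
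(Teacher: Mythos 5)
Your overall plan coincides with the paper's proof (condition on \Cref{lem:LowDegree}, apply \Cref{lem:LightBadRound} round by round, replace $|M_j|$ by $|M|$, count rounds per phase, note the vanilla stage creates no light nodes, union bound to $n^{-7}$), but the crucial summation step is justified incorrectly. You claim that the per-phase contributions $O\bigl(\frac{1}{\eps\log^{(i)} n}\bigr)$ are dominated by phase $1$, so that the total fraction is $O\bigl(\frac{1}{\eps\log n}\bigr)$ and falls below $\eps$ for large $n$. The monotonicity is backwards: $\log^{(i)} n$ \emph{decreases} as $i$ increases, so these contributions \emph{grow} with $i$ and the sum $\sum_i \frac{1}{\log^{(i)} n}$ is dominated by the \emph{last} sampled phase, not the first. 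Consequently the total does not vanish as $n\to\infty$; it is genuinely of order $\eps$. The correct argument, as in the paper, uses the algorithm's stopping rule: the sampled phases end as soon as $\frac{1}{\log^{(i)} n} > \epsilon\cdot\frac{\log(1+\epsilon)}{1000}$, so every term of the sum (and, since the terms increase superexponentially, essentially the whole sum) is bounded by about $\epsilon\cdot\frac{\log(1+\epsilon)}{1000}$; multiplying by the $\frac{500}{\log(1+\epsilon)}$ factor coming from the $\frac{5\log^{(i)} n}{\log(1+\epsilon)}$ rounds per phase then yields exactly the claimed bound $\epsilon\max\{(\log n)^{18},|M|\}$, with no slack and no $n$-dependence. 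Your proposal never invokes this stopping condition in the summation, yet it is the essential ingredient: if the sampled phases were allowed to continue until $\log^{(i)} n = O(1)$, the sum would be $\Theta(1/\eps)$ rather than $O(\eps)$, and the lemma would fail. So the final statement is recoverable, but the step as written is wrong and needs to be replaced by the stopping-condition argument; the remaining bookkeeping (degree condition after phase $1$, the rounds-per-phase count, the vanilla-stage observation, and the $n^{-8}+n^{-8}\le n^{-7}$ failure-probability accounting) is correct and matches the paper.
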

\begin{proof}
	Our goal is to repeatedly apply \Cref{lem:LightBadRound}, however, we have one condition after phase 1 that needs to be satisfied.
	We can use \Cref{lem:LowDegree} which says that this condition is satisfied with probability at least $1 - n^{-8}$, allowing us to assume that the degree among active vertices is at most $(\log n)^5$.

	We call a round successful if it satisfies the statement of \Cref{lem:LightBadRound}.
	Since we have at most $\log_{1 + \epsilon} n \ll n$ rounds overall the probability that all rounds are successful is at least $1 - n^{-8}$.

	In order to know how often we need to apply \Cref{lem:LightBadRound}, we will also need to know the number of rounds in a phase $i$.
	By the condition that a round $j$ is in phase $i$ such that $i$ is that smallest value for which $w_j \leq \frac{1}{(\log^{(i)} n)^3}$, and $w_j = \frac{(1+\epsilon)^j}{\Delta}$ we have that phase $i$ consists of less than $\frac{5 \log^{(i)} n}{\log (1 + \epsilon)}$ rounds.

	This allows us to bound the total number of nodes that are prematurely frozen by
	\begin{align*}
		 & \sum_{j = 1}^{\log_{1 + \epsilon} \Delta} \frac{100}{(\log^{(i(j))} n)^2} \cdot \max \{ (\log n)^{18},  \Abs{M_j} \}                 \\
		 & \leq \sum_{i} \frac{5 \log^{(i)} n}{\log (1 + \epsilon)} \cdot \frac{100}{(\log^{(i)} n)^2} \cdot \max \{ (\log n)^{18},  \Abs{M} \} \\
		 & = \frac{500}{\log (1 + \epsilon)} \cdot \max \{ (\log n)^{18},  \Abs{M} \} \sum_{i} \frac{1}{(\log^{(i)} n)}.
	\end{align*}
	As we stop the algorithm once $\frac{1}{\log^{(i)} n} > \epsilon \cdot \frac{\log(1 + \epsilon)}{1000}$ we can again bound this by
	\begin{align*}
		\epsilon \cdot \max \{ (\log n)^{18},  \Abs{M} \}.
	\end{align*}
	Overall, the probability of failure is at most $n^{-8} + n^{-8} \leq n^{-7}$, proving our statement.
\end{proof}
We are now able to prove the main result of this section.
Combining \Cref{lem:HeavyBad} and \Cref{lem:LightBad} allows us to show that the charging argument from \Cref{clm:vanilla} can still be used, while only losing a small amount of the fractional value due to the nodes that behaved different than in the vanilla algorithm.
We remark that we did not attempt to optimize the constants, especially the exponent of the lower bound on the matching size.
\begin{theorem}
	Let $\bar{M}$ be the matching computed by the algorithm.
	With probability $1 - n^{6}$ this matching $\bar{M}$ is a $(2+100\epsilon)$-approximation of a maximum (fractional) matching $M^*$, i.e. $(2 + 100\epsilon) \Abs{\bar{M}} \geq \Abs{M^*}$, in graphs where $\Abs{M^*} \geq \Omega((\log n)^{18})$.
	\label[theorem]{thm:MatchingApprox}
\end{theorem}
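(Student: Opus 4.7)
The plan is to adapt the charging argument of \Cref{clm:vanilla} while accounting for the two types of vertices that deviate from the vanilla behavior. Classify every vertex into one of three sets: $F$ (well-frozen, with final value $c_v \in [1-20\epsilon,1]$), $L$ (light, frozen with $c_v < 1-20\epsilon$), and $H$ (heavy, with $c_v > 1$ and therefore stripped of all incident weight). Every vertex ends up in exactly one of these, and the bounds on $|L|$ and on the total spoiled value have already been established in \Cref{lem:LightBad} and \Cref{lem:HeavyBad}.

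First I would set up a modified charging: for each edge $e^* \in M^*$, if at least one endpoint lies in $F$, charge $e^*$ to such an endpoint; otherwise both endpoints lie in $L \cup H$, and I bound the number of such edges by $|L| + |H|$ using that $M^*$ is a matching. Since each vertex in $F$ is charged at most once, the first contribution is bounded by $|F| \le \frac{1}{1-20\epsilon}\sum_{v \in F} c_v$, where the $c_v$ are measured at the time of freezing.

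Next I would relate $\sum_{v \in F} c_v$ to $|\bar{M}|$. For $v \in F$ the only difference between $c_v$ at freezing time and $c_v^{\text{final}}$ is the weight on edges from $v$ to heavy neighbors, all of which are subsumed by the spoiled value from \Cref{lem:HeavyBad}. Hence $\sum_{v \in F} c_v \le 2|\bar{M}| + \epsilon\max\{(\log n)^{18}, |M|\}$. For $|H|$, I use that each heavy vertex contributes at least $1$ to $\sum_{v \in H} c_v^{\text{before removal}}$, which is at most twice the spoiled value, so $|H| \le 2\epsilon\max\{(\log n)^{18}, |M|\}$; and $|L|$ is directly handled by \Cref{lem:LightBad}. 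Putting the pieces together yields
\[
|M^*| \;\le\; \tfrac{2|\bar{M}| + \epsilon\max\{(\log n)^{18}, |M|\}}{1-20\epsilon} + 3\epsilon\max\{(\log n)^{18}, |M|\}.
\]

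Finally I would eliminate the $(\log n)^{18}$ slack by a short contradiction: if $|M| < (\log n)^{18}$, then $|\bar{M}| \le |M| < (\log n)^{18}$ and the displayed inequality gives $|M^*| = O((\log n)^{18})$, contradicting the hypothesis $|M^*| = \Omega((\log n)^{18})$ for a sufficiently large hidden constant. So $|M| \ge (\log n)^{18}$, and since $|M| - |\bar{M}|$ is at most the spoiled value $\le \epsilon|M|$, we obtain $|M| \le |\bar{M}|/(1-\epsilon)$; substituting back and taking $\epsilon$ small enough yields $|M^*| \le (2+100\epsilon)|\bar{M}|$. The probability bound follows from a union bound over the failure events of \Cref{lem:HeavyBad} and \Cref{lem:LightBad}. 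The main obstacle is the circular interplay between $|M|$ and the additive $(\log n)^{18}$ error in those lemmas, which is what forces the contradiction step and the regime assumption $|M^*| = \Omega((\log n)^{18})$; the rest is a careful book-keeping of the vanilla charging.
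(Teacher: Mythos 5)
Your proposal is correct and follows essentially the same route as the paper's proof: it combines \Cref{lem:LightBad} and \Cref{lem:HeavyBad} with the vanilla charging argument of \Cref{clm:vanilla}, handles the additive $(\log n)^{18}$ slack via the same contradiction with the hypothesis $\Abs{M^*} \geq \Omega((\log n)^{18})$, uses the same relation $\Abs{M} \leq \frac{1}{1-\epsilon}\Abs{\bar{M}}$, and concludes with the same union bound; your reorganization of the charging (to vertex classes $F$, $L$, $H$ rather than adding unit value to prematurely frozen vertices) is only cosmetic. One tiny imprecision: not every vertex lies in $F \cup L \cup H$ (a vertex may end the algorithm unfrozen with $c_v \leq 1$), but this does not affect your count, since at termination every edge of $M^*$ has a frozen endpoint, which does lie in $F \cup L \cup H$.
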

\begin{proof}
	First, note that $\bar{M}$ is indeed a valid matching, as each node has value $c_v \leq 1$, independent of the execution of the algorithm.
	Let $M$ be the fractional assignment computed by the algorithm before the deletion of all nodes $v$ with value $c_v > 1$.
	First, we will give an upper bound on the value of $M$ in terms of $M^*$.
	\Cref{lem:HeavyBad} says that the fractional value incident to nodes which have $c_v > 1$ is at most $\Abs{M} - \Abs{\bar{M}} \leq \epsilon \max\{ (\log n)^{18}, \Abs{M} \}$, with probability $1 - n^{-7}$.
	Since $\bar{M}$ is a valid matching $\Abs{\bar{M}} \leq \Abs{M^*}$.
	Combining this we get
	\[
		\Abs{M} - \epsilon \max\{ (\log n)^{18}, \Abs{M} \} \leq \Abs{M^*},
	\]
	from which follows, using $\Abs{M^*} \geq (\log n)^{18}$,
	\[
		\Abs{M} \leq (1 + 2\epsilon) \Abs{M^*}.
	\]

	With this observation, we can now also get a lower bound on $\Abs{M}$ by using \Cref{lem:LightBad}, which holds with probability $1 - n^{-7}$.
	Out of all vertices that the algorithm froze, all but $\epsilon \max \{(\log n)^{18}, \Abs{M}\}$ have value at least $1 - 20 \epsilon$.
	In order to use a similar argument as for the vanilla algorithm, let us think that we add fractional value 1 to all vertices that were prematurely frozen, which means we add at most a value of $\epsilon \max \{(\log n)^{18}, \Abs{M}\}$ overall.
	Now, every edge is incident to at least one endpoint with a value of at least $1 - 20 \epsilon$.
	Thus, we have
	\[
		\Abs{M^*} \leq \frac{2}{1 - 20 \epsilon} ( \Abs{M} + \epsilon \max \{(\log n)^{18}, \Abs{M}\} ).
	\]
	For $\Abs{M^*} = \Omega( (\log n)^{18})$ large enough, this implies $\Abs{M} \geq (\log n)^{18}$, which allows us to simplify
	\[
		\Abs{M^*} \leq \frac{2(1 + \epsilon)}{1 - 20 \epsilon} \Abs{M} \leq (2 + 50\epsilon) \Abs{M}.
	\]
	Finally, as previously observed, we have $\Abs{M} \leq \frac{1}{1-\epsilon} \Abs{\bar{M}}$.
	Combining this with the above we get
	\[
		\Abs{M^*} \leq \frac{(2 + 50\epsilon)}{1 - \epsilon} \Abs{M} \leq (2 + 100 \epsilon) \Abs{\bar{M}}.
	\]
	The failure probability is $n^{-7} + n^{-7} \leq n^{-6}$.
\end{proof}
\begin{remark}
	Using a similar line of argumentation, it can also be shown that a fractional matching which is a $(2 + 100 \epsilon)$-approximation \emph{in expectation} can also be computed, with no lower bound on the maximum matching size.
\end{remark}
Having shown that the algorithm computes a good approximation of the maximum matching, we now also show that it is implementable in the sleeping model with low awake complexity.
\begin{corollary}
	We can compute a $(2 + 100 \epsilon)$-approximation of maximum fractional matching in any graph $G$ with maximum fractional matching size at least $\Omega((\log n)^{18})$, in the distributed sleeping model with awake complexity $O(\epsilon^{-3})$ with probability $1 - n^{-5}$ and round complexity $O(\log \Delta)$ (always).
	\label{cor:awakecomp}
\end{corollary}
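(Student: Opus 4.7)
My plan is to verify the three claims in the corollary in sequence, using \Cref{thm:MatchingApprox} for the approximation and analyzing the resources of the algorithm of \Cref{sec:ImpAvgComp}. For the approximation, \Cref{thm:MatchingApprox} directly yields a $(2+100\epsilon)$-approximation with probability $1-n^{-6}$ under the assumed lower bound $|M^*|=\Omega((\log n)^{18})$. For the round complexity, the schedule is fixed and independent of the randomness: the phase-based portion has $\sum_i \Theta(\log^{(i)} n / \log(1+\epsilon)) = O(\log n/\epsilon)$ rounds, and the vanilla stage that follows has at most $O(\log_{1+\epsilon}\Delta)$ more rounds, for a deterministic $O(\log\Delta)$ total when $\epsilon$ is a constant.

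The main new work is to bound the awake complexity. A node $v$ is awake starting from distributed round $h-1$, where $h$ is the smallest algorithm round with $v\in S_h^j$ for some $j$, and stays awake until termination. I would decompose $A_v$ into a vanilla-stage contribution, deterministically at most $O(\log(1/\epsilon)/\epsilon)$ since every node is awake throughout the vanilla stage, and a phase-stage contribution. For the latter, in any algorithm round $h$ of phase $i$ the probability that $v\in S_h^j$ for at least one $j$ is at most $p_{i(h)}(T-h+1) \le p_i T$, and conditional on first being sampled in phase $i$ the residual phase-stage awake time is at most $O(\log^{(i)} n/\epsilon)$. Summing the product over all rounds of all phases using $p_i = \Theta(1/(\log^{(i)} n)^4)$ yields an expected phase-stage awake time of $O(T/(\epsilon^2 \log^2 n))=o(1)$, so $\E[A_v]=O(\epsilon^{-2})$ for every node $v$.

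Finally, the variables $\{A_v\}_{v\in V}$ are mutually independent, since $A_v$ depends only on $v$'s own sampling coins for the sets $S_h^j$, and each $A_v$ is bounded by $T=O(\log\Delta/\epsilon)$. Hoeffding's inequality (\Cref{thm:hoeffding}) then gives $\tfrac{1}{n}\sum_v A_v \le O(\epsilon^{-3})$ with probability at least $1-2\exp(-\Omega(n/(\epsilon^4\log^2 n))) \ge 1-n^{-5}$. The main obstacle is the sampling bookkeeping: in each algorithm round $h$ a node may be included in up to $T-h+1$ independent sets $S_h^j$, inflating the per-round wake-up probability by a factor of $T$ over the raw $p_{i(h)}$, and one must verify that this inflation is absorbed by the aggressive $1/(\log^{(i)} n)^4$ decay of $p_i$ so that the phase-stage contribution remains negligible compared with the vanilla-stage $O(\log(1/\epsilon)/\epsilon)$.
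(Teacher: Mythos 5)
Your treatment of the approximation guarantee (via \Cref{thm:MatchingApprox}) and your concentration step are fine in spirit (the $A_v$ in the sampled stage do depend only on $v$'s own coins, so Hoeffding over nodes works, where the paper instead applies Hoeffding to the set sizes $|S_h^j|$ and counts deterministically). The genuine gap is in your bound on the expected phase-stage awake time, which is the heart of the corollary. First, your evaluation $O\big(T/(\epsilon^2\log^2 n)\big)$ keeps only the $i=1$ term of $\sum_i 1/(\log^{(i)} n)^2$; this sum is dominated by the \emph{last} sampled phases, where the stopping rule only guarantees $\log^{(i)} n \geq 1000/(\epsilon\log(1+\epsilon)) = \Theta(\epsilon^{-2})$, so $\sum_i 1/(\log^{(i)} n)^2 = \Theta(\epsilon^4)$, not $O(1/\log^2 n)$. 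Second, and more importantly, even with that sum evaluated correctly your accounting does not suffice: using the global round count $T=\Theta(\log_{1+\epsilon}\Delta)$ both for the number of sets $S_h^j$ per round $h$ and inside the wake-up probability gives an expected phase-stage awake time of order $T/\epsilon^2 \cdot \epsilon^4 = O(\epsilon\log\Delta)$ per node, which for large $\Delta$ is far above the $O(\epsilon^{-3})$ target, so the claimed conclusion would fail from your estimates. The missing observation (which is how the paper's proof goes) is that for a round $h$ in phase $i$ there are only $O\big(\log^{(i)} n/\log(1+\epsilon)\big)$ rounds left in the entire algorithm, because at the start of phase $i$ the active weight already exceeds $1/(\log^{(i-1)} n)^5$ and weights grow by $(1+\epsilon)$ per round; hence the number of relevant indices $j\geq h$, the number of rounds in phase $i$, and the residual awake time are all $O\big(\log^{(i)} n/\log(1+\epsilon)\big)$, so phase $i$ contributes only $O\big(p_i (\log^{(i)} n)^3/(\log(1+\epsilon))^3\big) = O\big(1/((\log(1+\epsilon))^3\log^{(i)} n)\big)$, and the stopping rule bounds $\sum_i 1/\log^{(i)} n$ by $O(\epsilon\log(1+\epsilon))$, giving $O(\mathrm{poly}(1/\epsilon))$ in total.

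Two smaller corrections: your deterministic bound of $O(\log(1/\epsilon)/\epsilon)$ for the vanilla stage is too strong, since at the switchover $\log^{(i)} n$ can be as large as $\approx 1000/(\epsilon\log(1+\epsilon))=\Theta(\epsilon^{-2})$, so the right deterministic bound is $O\big(\log^{(i)} n/\log(1+\epsilon)\big)=O(\epsilon^{-3})$ (which still fits the claimed budget); and the round complexity should be argued directly from $w_j=(1+\epsilon)^j/\Delta\le 1$, giving $\log_{1+\epsilon}\Delta$ rounds overall, rather than via an $O(\log n/\epsilon)$ bound on the sampled portion, which is not $O(\log\Delta)$ when $\Delta$ is small.
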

\begin{proof}
	Note that the approximation guarantee holds from \Cref{thm:MatchingApprox} with probability $1 - n^{-6}$, so we only need to know the second part of the statement, i.e. the complexity in the sleeping model.
	For the round complexity, we have that the algorithm trivially stops when edge weights $w_j$ are 1, so the round complexity is bounded by $\log_{1+\epsilon} \Delta = O(\log \Delta)$.
	The awake complexity is analyzed in two parts, first, we have the awake complexity that occurs due to nodes being awake in phases with sampling probability $p_i < 1$ and then we also have to bound the number of rounds where we just simulate the vanilla algorithm.

	For the first part, we have that for each set $S_h^{j}$ where $h \leq j$ the nodes in $S_h^j$ become active in round $h-1$ and stay active for the remainder of the algorithm.
	If $h$ is in phase $i$, the set $S_h^j$ has expected size $p_i \cdot n$, and we can use \Cref{thm:hoeffding} to conclude that with probability at least $n^{-10}$, we have $\Abs{S_h^j} \leq 2 \cdot p_i \cdot n$.
	Further, with probability at least $n^{-9}$ this holds for all rounds and all sets $S_h^j$.
	Note that from the start of phase $i$ to the end of the algorithm, there are at most $\frac{5 \log^{(i)} n}{\log (1 + \epsilon)}$ rounds.
	Thus, for a fixed value of $h$ in phase $i = i(h)$, we have that there are at most $\frac{5 \log^{(i)} n}{\log (1 + \epsilon)}$ many sets $S_h^j$.
	For each of these sets, all nodes are active for at most $\frac{10 \log^{(i)} n}{\log (1 + \epsilon)}$ rounds.
	And finally note that there are also at most $\frac{5 \log^{(i)} n}{\log (1 + \epsilon)}$ rounds in phase $i$.
	Putting all this together, we get a total awake complexity of
	\begin{align*}
		\sum_i \frac{250 (\log^{(i)} n)^3}{(\log (1 + \epsilon))^3} \cdot \frac{2 C \cdot n}{(\log^{(i)} n)^4}
		 & = \frac{500 \cdot C \cdot n}{(\log (1 + \epsilon))^3} \sum_i \frac{1}{\log^{(i)} n}                                                    \\
		 & \leq \frac{500 \cdot C \cdot n}{(\log (1 + \epsilon))^3} \cdot 2 \cdot \epsilon \cdot \frac{\log(1+\epsilon)}{1000} = O(n/\epsilon^3).
	\end{align*}
	Thus, the average awake complexity due to that part is $O(\epsilon^{-3})$.

	For the second part, we stop once we would start a phase $i$, where $\frac{1}{\log^{(i)} n} > \epsilon \cdot \frac{\log(1 + \epsilon)}{1000}$.
	As before, note that from the start of phase $i$ until we have that weights $w_j \geq 1$, there are at most $\frac{5 \log^{(i)} n}{\log (1 + \epsilon)}$ rounds.
	We can rewrite this as
	\[
		\frac{5000}{\epsilon (\log (1 + \epsilon))^2} = O(\epsilon^{-3}).
	\]
	Thus, the average awake complexity of the entire algorithm is $O(\epsilon^{-3}) + O(\epsilon^{-3}) = O(\epsilon^{-3})$, and the overall probability of failure is $n^{-6} + n^{-9} \leq n^{-5}$.
\end{proof}

\subsection{Vertex Cover}

In this section we consider the vertex cover problem, which is the dual of the matching problem discussed previously.
We will first show that we can also obtain a $(2 + \epsilon)$-approximation to the minimum vertex cover using the vanilla algorithm, and then give a corresponding proof for the algorithm with low awake complexity.
\begin{claim}
	Let $C^*$ be a vertex cover of minimum size in $G$, and let $C$ be the set of all vertices that are frozen in the vanilla algorithm.
	Then, $C$ is a valid vertex cover, and we have $\Abs{C} \leq (2 + 4 \epsilon) \Abs{C^*}$.
\end{claim}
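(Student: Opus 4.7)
The plan is to prove the two parts separately using a standard primal--dual style argument, treating $\{x_e\}$ from the vanilla fractional matching algorithm as the primal and $C$ as the dual vertex cover. The key observations are (i) every edge is frozen precisely because some endpoint became tight, so $C$ covers all edges, and (ii) every $v \in C$ has $c_v \geq 1-\eps$ at the end of the algorithm (since once $v$ became tight, all its incident edges were frozen and therefore their values never changed), so summing over $C$ relates $|C|$ to $\sum_e x_e$, which can then be upper bounded by $|C^*|$.

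First I would argue that $C$ is a valid vertex cover. The vanilla algorithm runs for $O(\log_{1+\eps} \Delta)$ rounds, at the end of which every edge is frozen (otherwise the weight of some active edge would exceed $1$). By the freezing rule, an edge $e$ becomes frozen only when at least one of its endpoints is tight, and being tight is exactly the condition under which a vertex joins $C$. Hence every edge has at least one endpoint in $C$.

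Next I would establish the approximation bound. For any $v \in C$, let $j_v$ be the round in which $v$ first became tight; all edges incident to $v$ were frozen at time $j_v$ and so their final values equal their values at time $j_v$. Therefore the final $c_v = \sum_{e \ni v} x_e \geq 1-\eps$. Summing and swapping the order of summation:
\begin{align*}
(1-\eps)\,|C| \;\leq\; \sum_{v \in C} c_v \;=\; \sum_{v \in C}\sum_{e \ni v} x_e \;\leq\; 2\sum_{e \in E} x_e,
\end{align*}
where the last step uses that each edge contributes to at most two vertex sums. Finally, since $\{x_e\}$ is a valid fractional matching (\Cref{clm:vanilla}) and $C^*$ is a vertex cover, for each edge $e$ at least one endpoint lies in $C^*$, so
\begin{align*}
\sum_{e \in E} x_e \;\leq\; \sum_{e \in E} x_e \cdot \big|\{v \in e : v \in C^*\}\big| \;=\; \sum_{v \in C^*} c_v \;\leq\; |C^*|.
\end{align*}
Combining the two displays gives $|C| \leq \frac{2}{1-\eps}\,|C^*| \leq (2+4\eps)\,|C^*|$ for $\eps \leq 1/2$.

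I do not expect any substantial obstacle: the whole argument is weak LP duality made concrete for this particular primal--dual pair, and the only algorithm-specific input is the observation that once a node becomes tight its incident edges are frozen forever, which is immediate from the vanilla algorithm's description. The one place to be mildly careful is guaranteeing that the final value $c_v$ of a tight node is indeed $\geq 1-\eps$ and not something smaller (it cannot be smaller because freezing preserves $x_e$), and that no node joins $C$ whose final $c_v$ has been inflated beyond $1$ (it has not, since the vanilla algorithm already maintains $c_v \leq 1$ as shown in \Cref{clm:vanilla}).
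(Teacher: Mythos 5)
Your proof is correct and is essentially the paper's argument: the paper phrases the same computation as a dollar-charging scheme (each frozen vertex distributes a dollar over its incident edges proportionally to $x_e$, using $c_v \geq 1-\eps$, then the mass is moved to $C^*$ endpoints using $c_v \leq 1$), which when written out is exactly your chain $(1-\eps)|C| \leq \sum_{v\in C} c_v \leq 2\sum_e x_e \leq 2\sum_{v\in C^*} c_v \leq 2|C^*|$. Your explicit note that a tight node's incident edges are frozen at that moment, so its final value stays at least $1-\eps$, is a point the paper uses implicitly and is correctly justified.
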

\begin{proof}
	First, let us argue that $C$ is indeed a vertex cover.
	As every edge has at least one frozen endpoint at the end of the algorithm, the set of all frozen vertices is a vertex cover.
	Now for the approximation guarantee we use a similar charging argument as in the matching case.
	Let us place one dollar on each vertex $v$ in $C$.
	Now, distribute this dollar to all incident edges, proportional to their fractional value $x_e$ at the end of the algorithm.
	Thus, each such edge receives at most $\frac{1}{1 - \epsilon} x_e$ dollars from $v$, as $v$ was a vertex with $c_v \geq 1 - \epsilon$.
	Since an edge can receive this amount from each endpoint, overall an edge receives at most $\frac{2}{1 - \epsilon} x_e$ dollars.
	Now we move the value from all edges to their endpoint that is in $C^*$, breaking ties arbitrarily.
	Since we have that each vertex had $c_v \leq 1$, each vertex from $C^*$ receives at most $\frac{2}{1 - \epsilon}$ dollars this way.
	Thus, we have $\Abs{C} \leq (2 + 4 \epsilon) \Abs{C^*}$.
\end{proof}

\begin{theorem}
	\label[theorem]{thm:VertexCover}
	Let $C$ be the set of all vertices that are frozen by the algorithm in \Cref{sec:ImpAvgComp}.
	Then, $C$ is a vertex cover, and with probability $1 - n^{-6}$ it is a $(2 + 100 \epsilon)$-approximation of the minimum vertex cover $C^*$, in graphs where $C^* \geq \Omega((\log n)^{18})$.
\end{theorem}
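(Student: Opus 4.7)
The approach parallels the matching case but reverses the direction of the charging. First, I would observe that $C$ is a valid vertex cover directly from the structure of the algorithm: the procedure (including its final vanilla tail) only terminates when every edge has been frozen, and an edge freezes exactly when at least one of its endpoints freezes, so every edge has at least one endpoint in $C$. This part is deterministic.

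For the approximation ratio, I would partition $C$ into tight vertices $C_{\text{tight}} = \{v \in C : c_v \geq 1 - 20\epsilon\}$ and light vertices $C_{\text{light}} = C \setminus C_{\text{tight}}$, where $c_v$ is the true value of $v$ at the moment it is frozen, computed using the fractional values $\tilde{x}_e$ before any heavy removal. By \Cref{lem:LightBad}, with probability at least $1 - n^{-7}$ we have $\Abs{C_{\text{light}}} \leq \epsilon \max\{(\log n)^{18}, \Abs{M}\}$, where $M$ is the fractional assignment before heavy removal. Since $\Abs{M^*} \leq \Abs{C^*}$ by weak LP duality and $\Abs{M} \leq (1+O(\epsilon))\Abs{C^*}$ via the bound established in the proof of \Cref{thm:MatchingApprox}, the assumption $\Abs{C^*} = \Omega((\log n)^{18})$ collapses the maximum to $O(\Abs{C^*})$ and gives $\Abs{C_{\text{light}}} \leq O(\epsilon)\Abs{C^*}$.

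For $\Abs{C_{\text{tight}}}$, I would replay the vanilla charging argument with the relaxed tightness threshold $1-20\epsilon$: place one dollar on each $v \in C_{\text{tight}}$ and distribute it across $v$'s incident edges proportional to $\tilde{x}_e$, so each such edge receives at most $\tilde{x}_e/(1-20\epsilon)$ from each endpoint and hence at most $2\tilde{x}_e/(1-20\epsilon)$ in total. Every edge then hands its money to one endpoint in $C^*$, which exists because $C^*$ covers. For any non-heavy $u \in C^*$ we have $c_u \leq 1$, so $u$ absorbs at most $2/(1-20\epsilon)$, summing to $2\Abs{C^*}/(1-20\epsilon)$. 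The only remaining concern is heavy vertices in $C^*$: they can absorb more than $2/(1-20\epsilon)$ each, but the aggregate excess is at most $\frac{2}{1-20\epsilon}\sum_{u \in \text{heavy}} c_u$, which is bounded by twice the total spoiled fractional value, hence by \Cref{lem:HeavyBad} at most $O\!\left(\epsilon \max\{(\log n)^{18}, \Abs{M}\}\right) \leq O(\epsilon)\Abs{C^*}$.

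Combining these pieces yields $\Abs{C} = \Abs{C_{\text{tight}}} + \Abs{C_{\text{light}}} \leq 2\Abs{C^*}/(1-20\epsilon) + O(\epsilon)\Abs{C^*} \leq (2 + 100\epsilon)\Abs{C^*}$ for $\epsilon \leq 1/1000$, while a union bound over \Cref{lem:HeavyBad}, \Cref{lem:LightBad}, and \Cref{thm:MatchingApprox} keeps the failure probability well within $n^{-6}$. The delicate step is the heavy-in-$C^*$ bookkeeping: individual heavy vertices can in principle attract arbitrarily much charge, and it is only because their total contribution coincides (up to a factor of two) with the spoiled-value quantity of \Cref{lem:HeavyBad} that the excess is controllable. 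This is precisely where the hypothesis $\Abs{C^*} = \Omega((\log n)^{18})$ enters --- it absorbs the additive $(\log n)^{18}$ slack terms from Lemmas \ref{lem:HeavyBad} and \ref{lem:LightBad} into an $O(\epsilon)\Abs{C^*}$ factor.
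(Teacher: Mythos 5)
Your proof is correct and follows essentially the same route as the paper: the same validity observation, the same use of \Cref{lem:LightBad} and \Cref{lem:HeavyBad} together with the bound $\Abs{M} \leq (1+2\epsilon)\Abs{M^*} \leq (1+O(\epsilon))\Abs{C^*}$ from the proof of \Cref{thm:MatchingApprox}, and the same dollar-charging argument with threshold $1-20\epsilon$. Your tight/light partition and direct bounding of the charge absorbed by heavy vertices of $C^*$ is just a slight reorganization of the paper's bookkeeping (which removes the light dollars and the spoiled value before redistributing), so no substantive difference.
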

\begin{proof}
	First, note that $C$ is a valid vertex cover, as every edge is incident to at least one vertex that was frozen.
	Since $C^*$ is a minimum vertex cover, we have that $\Abs{C^*} \geq \Abs{M^*}$ and $\Abs{C^*} \leq 2 \Abs{M^*}$ for a maximum matching $M^*$.
	Note that this also implies that $\Abs{M^*} \geq \Omega((\log n)^{18})$.
	Let $M$ be the fractional assignment computed by the algorithm, before the removal of all heavy nodes $v$ with $c_v > 1$ in the final step.
	As in the proof of \Cref{thm:MatchingApprox} we have that $\Abs{M} \leq (1 + 2 \epsilon) \Abs{M^*} \leq (1 + 2 \epsilon) \Abs{C^*}$ with probability $1 - n^{-7}$.
	Using \Cref{lem:LightBad}, we get that the number of nodes that finish with value $c_v < 1 - 20 \epsilon$ is at most $\epsilon \Abs{M} \leq 2 \epsilon \Abs{C^*}$ with the same probability.
	Further, we also have that the total amount of fractional value incident to nodes that have $c_v > 1$ is at most $\epsilon \Abs{M} \leq 2 \epsilon \Abs{C^*}$ by \Cref{lem:HeavyBad} with probability $1 - n^{-7}$ too.
	Thus, the overall failure probability is $3 n^{-7} \leq n^{-6}$.

	We can now do the same kind of charging argument as in the vanilla case.
	Let us place 1 dollar on every vertex in $C$.
	We remove this one dollar from all the light vertices with $c_v \leq 1 - 20 \epsilon$.
	As argued previously, this amounts to at most $2 \epsilon \Abs{C^*}$ many dollars being removed.
	Then, we distribute all remaining dollars proportional to the weight of the incident edges $x_e$.
	As we have that $c_v \geq 1 - 20 \epsilon$ for all vertices that do this, each edge receives value at most $\frac{2}{1 - 20 \epsilon} x_e$, since it can receive this value from both endpoints.
	Now, we would like to redistribute this again to the vertices that are in $C^*$, however, we might have nodes that have $c_v \gg 1$.
	As in the beginning, we remove all fractional value incident on these nodes with $c_v > 1$.
	This means we remove at most $\frac{2}{1 - 20 \epsilon} 2 \epsilon \Abs{C^*}$ dollars.
	With those vertices removed, we can now move the remaining dollars to vertices of $C^*$ such that every vertex in $C^*$ obtains at most $\frac{2}{1 - 20 \epsilon}$ dollars.
	Overall, this yields the following inequality:
	\begin{align*}
		\Abs{C} - 2 \epsilon \Abs{C^*} - \frac{4}{1 - 20 \epsilon} \epsilon \Abs{C^*} & \leq \frac{2}{1 - 20 \epsilon} \Abs{C^*},
	\end{align*}
	which implies
	\begin{align*}
		\Abs{C}  \leq \frac{2 + 6 \epsilon}{1 - 20\epsilon} \Abs{C^*} \leq (2 + 100 \epsilon) \Abs{C^*}.
	\end{align*}
\end{proof}
In the same way as for the matching, the algorithm can be implemented in the distributed sleeping model, which gives us the following:
\begin{corollary}
	We can compute a $(2 + 100 \epsilon)$-approximation of minimum vertex cover in any graph $G$ with minimum vertex cover size $\Omega((\log n)^{18})$, in the distributed sleeping model with awake complexity $O(\epsilon^{-3})$ with probability $1 - n^{-5}$ and a deterministic round complexity of $O(\log \Delta)$.
\end{corollary}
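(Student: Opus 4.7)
The plan is to assemble the statement from two ingredients already established earlier in the paper: Theorem \ref{thm:VertexCover} supplies the approximation guarantee, while Corollary \ref{cor:awakecomp} supplies the complexity analysis. Since the vertex cover algorithm is literally the same algorithm (Section \ref{sec:ImpAvgComp}) as the matching algorithm, with the only difference being which piece of local state each node outputs at the end, the plan is essentially to verify that this output step adds no overhead and then to union-bound the two high-probability events.

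First I would note that, as described in the distributed implementation in Section \ref{sec:ImpAvgComp}, every node $v$ already knows, at each round in which it is awake, whether its estimate $\tilde{c}_v$ has crossed $1-10\epsilon$; in particular, each node knows upon termination whether it ever got frozen. Thus the set $C$ of frozen vertices is computed locally with no extra communication, and the deterministic round complexity $O(\log \Delta)$ and the awake complexity $O(\epsilon^{-3})$ both transfer directly from the analysis in Corollary \ref{cor:awakecomp}.

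Next I would invoke Theorem \ref{thm:VertexCover}: under the assumption $|C^*| = \Omega((\log n)^{18})$, the set $C$ of frozen vertices is a valid vertex cover and satisfies $|C| \leq (2 + 100 \epsilon) |C^*|$ with probability at least $1 - n^{-6}$. Combining this with the concentration bound used in Corollary \ref{cor:awakecomp} for the sample sizes $|S_h^j|$ (which held with probability $1 - n^{-9}$, ensuring the awake budget $O(\epsilon^{-3})$), a union bound gives that all conclusions hold simultaneously with probability at least $1 - n^{-6} - n^{-9} \geq 1 - n^{-5}$, matching the claim.

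There is no real obstacle here; the corollary is essentially a repackaging. The only thing worth being careful about is that the final local post-processing step in Theorem \ref{thm:MatchingApprox} (zeroing out edges at heavy nodes) affects only the matching output, not the vertex cover output, so no additional rounds are spent beyond those already accounted for in Corollary \ref{cor:awakecomp}.
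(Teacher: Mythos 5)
Your proposal is correct and matches the paper's intended argument: the paper gives no separate proof for this corollary, stating only that it follows ``in the same way as for the matching,'' i.e., by combining the approximation guarantee of \Cref{thm:VertexCover} with the awake/round complexity analysis of \Cref{cor:awakecomp}, exactly as you do. Your additional observation that the frozen/not-frozen status is already local state (so the vertex cover output costs nothing extra) and your union bound of the failure probabilities are precisely the routine details the paper leaves implicit.
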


\subsection{Improved Approximation of Maximum Matching}

In this section, we present a generic and black-box way of transforming any algorithm that computes a constant approximation for the maximum matching problem into an algorithm that computes a $(1 + \epsilon)$-approximation, via a constant number of invocations of the former, for any constant $\epsilon>0$.
In terms of the method, our reduction is inspired by the classic work of Goldberg et al.\cite{goldberg1993sublinear} which provided a sublinear-time parallel algorithm for computing the maximum matching.

There is also an algorithm due to McGregor~\cite{mcgregor2005finding}, which provides a somewhat similar reduction from $(1+\epsilon)$-approximation of maximum matching to the \emph{maximal matching} problem (this was primarily presented for the semi-streaming model, but it can be adapted to the distributed models of computing). This reduction is randomized (even in bipartite graphs) and requires $(1/\eps)^{O(1/\eps)}$ invocations. However, this reduction is not directly sufficient for our setting, as we do not have an algorithm for maximal matching with $O(1)$ average awake complexity. We believe that the reduction of McGregor~\cite{mcgregor2005finding} can be extended to produce an alternative reduction from $(1 + \epsilon)$-approximation to constant approximation of maximum matching. Furthermore, subsequent to the conference version of our paper, such a reduction was also given by Fischer, Mitrovi{\'c}, and Uitto~\cite{fischer2021deterministicarxiv} in an updated version of their published work~\cite{fischer2022deterministic}. Their reduction has a better $\epsilon$-dependency of $\poly(1 / \epsilon)$ and it also works deterministically in general graphs.

However, we find our reduction significantly simpler than both of the above results. Our reduction is deterministic in bipartite graphs and requires only $\poly(1/\eps)$ invocations of the constant approximation subroutine. In general graphs, where it becomes randomized, it involves $2^{O(1/\eps)}$ invocations. 

Before we state the algorithm in its full generality, we will first make two simplifying assumptions:
The black-box algorithm computes a maximal matching, and the input graph is bipartite.
We will later discuss how to remove these assumptions.
\begin{definition}
	Given a graph $G$ and a matching $M$ a path $P = (v_0, v_1, \dots, v_{2\ell+1})$ is an \emph{augmenting} path with respect to $M$, if the following hold:
	for each $1 \leq i \leq \ell$ the edge $(v_{2i-1}, v_{2i}) \in M$ and both $v_0$ and $v_{2 \ell + 1}$ are free vertices, i.e. there are no vertices $u, w$ such that $(v_0, u) \in M$ or $(v_{2 \ell + 1}, w) \in M$.
\end{definition}
Augmenting paths are one of the standard tools in algorithms for maximum matching, this dates back to Berge's theorem\cite{berge1957two}, and it has been used extensively since the seminal work of Hopcroft and Karp~\cite{HopcroftKarp1973}.
The path has its name due to the fact that it is possible to use it to increase the size of $M$, or augment $M$.
Namely, if the edges on $P$ are $e_1, e_2, \dots, e_{2 \ell + 1}$, with $e_{2i} \in M$ for $1 \leq i \leq \ell$, then we can remove all edges $e_{2i}$ from $M$ and add all edges $e_{2j+1}$ for $0 \leq j \leq \ell$ to obtain a matching $M'$ with $\Abs{M'} = \Abs{M} + 1$.

Another property of augmenting paths used in \cite{HopcroftKarp1973} is the following:
If we consider a graph $G$ and a matching $M$ such that there are no augmenting paths of length at most $2\ell - 1$.
Let $S$ be a set of disjoint augmenting paths of length $2 \ell + 1$, such that $S$ is maximal, i.e. no augmenting path of length $2 \ell + 1 $ can be added to it without violating the disjointedness of all pairs of paths.
Then, augmenting along all paths in $S$ gives a matching for which no augmenting path of length at most $2 \ell + 1$ exists.
Thus, the minimum length of any augmenting path increased, which will be our notion of progress.
We will then combine it will the following result, allowing us to bound the quality of the computed matching.
\begin{lemma}[Hopcroft, Karp \cite{HopcroftKarp1973}]
	Let $M^*$ be a maximum matching in $G$ and $M$ be a matching.
	If we have that $\Abs{M^*} > (1 + \epsilon) \Abs{M}$ there is an augmenting path of length at most $4 \epsilon^{-1} + 1$.
	\label[lemma]{lem:AugPathApprox}
\end{lemma}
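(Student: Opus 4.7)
The plan is to employ the classical symmetric-difference argument of Hopcroft and Karp. First, I would examine the structure of the symmetric difference $M \triangle M^*$: each vertex has degree at most $2$ in it (being covered by at most one edge of $M$ and at most one edge of $M^*$), so this subgraph decomposes into vertex-disjoint simple paths and cycles whose edges strictly alternate between $M$ and $M^*$. Cycles and even-length paths contain equally many $M$- and $M^*$-edges, while odd-length paths contribute a surplus of one to whichever matching contributes the two boundary edges. The odd-length components whose endpoints are unmatched in $M$ (equivalently, whose boundary edges both come from $M^*$) are precisely the augmenting paths for $M$, and these are the only components that create a net gain. Hence the number of vertex-disjoint $M$-augmenting paths appearing inside $M \triangle M^*$ is at least $\Abs{M^*} - \Abs{M}$.

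Next, I would apply the hypothesis $\Abs{M^*} > (1 + \epsilon) \Abs{M}$ to conclude that $M \triangle M^*$ contains strictly more than $\epsilon \Abs{M}$ vertex-disjoint augmenting paths with respect to $M$. In particular, the $M$-edges sitting on these paths are all distinct.

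Finally, I would obtain the length bound by a counting contradiction. Suppose every augmenting path in $M \triangle M^*$ has length strictly greater than $4\epsilon^{-1} + 1$. Writing such a path's length as $2\ell + 1$, the assumption forces $\ell > 2\epsilon^{-1}$, i.e.\ each augmenting path uses more than $2\epsilon^{-1}$ edges of $M$. Summing over the strictly more than $\epsilon \Abs{M}$ disjoint augmenting paths guaranteed above would then account for strictly more than $\epsilon \Abs{M} \cdot 2 \epsilon^{-1} = 2 \Abs{M}$ distinct edges of $M$, which is impossible since $M$ has only $\Abs{M}$ edges. This contradiction forces the existence of at least one augmenting path of length at most $4 \epsilon^{-1} + 1$ inside $M \triangle M^*$, which is in particular an augmenting path in $G$.

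There is no real obstacle in this argument — it is essentially a textbook calculation — but the one point where care is needed is the parity bookkeeping that translates the length constraint $2\ell + 1 > 4\epsilon^{-1} + 1$ into the edge-count bound $\ell > 2\epsilon^{-1}$, and the observation that vertex-disjointness of the augmenting paths yields disjointness of their $M$-edges so that the summation above is valid.
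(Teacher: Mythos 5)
Your proposal is correct and follows essentially the same argument as the paper: decompose the symmetric difference $M \triangle M^*$ into alternating paths and cycles, observe that it contains at least $\Abs{M^*}-\Abs{M}$ vertex-disjoint augmenting paths, and derive a contradiction by counting the $M$-edges these paths would consume if all were longer than $4\epsilon^{-1}+1$. The only difference is cosmetic bookkeeping (you lower-bound the number of paths by $\epsilon\Abs{M}$ while the paper uses $\tfrac{\epsilon}{2}\Abs{M^*}$), which does not change the argument.
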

\begin{proof}
	Let us look at the graph $G' = (V(G), M^* \oplus M')$, which is the symmetric difference of $M^*$ and $M$.
	Note that $G'$ contains isolated vertices, cycles of even length, and paths of odd and even length.
	Further, note that any path of odd length must be an augmenting path, whereas all other structures contain the same number of edges for $M$ and $M^*$.
	Also note that augmenting along all these paths would turn $M$ into a matching of the same size as $M^*$, and augmenting along each such path increases the size of $M$ by one.
	Thus there have to be at least $\Abs{M^*} - \Abs{M} > \epsilon/2 \Abs{M^*}$ many such augmenting paths.
	Suppose for contradiction that all augmenting paths are longer than $4 \epsilon^{-1} + 1$.
	Then, each augmenting path would contain at least $2 \epsilon^{-1}$ many edges from $\Abs{M}$.
	As we have at least $\epsilon/2 \Abs{M^*}$ such paths, this would imply that $\Abs{M} > \Abs{M^*}$, a contradiction.
	Thus an augmenting path of length $4 \epsilon^{-1} + 1$ must exist.
\end{proof}

\paragraph{Bipartite Graphs}
We will first state the algorithm for bipartite graphs.
The general outline is to start with a matching and improving it step by step.
In a first step, we will try to make it maximal, i.e. augmenting along all paths of length 1.
For the subsequent steps, we will go from a graph where there is no augmenting path of length at most $2i-1$ to a graph where there is no augmenting path of length at most $2i+1$.
We first assume that we have access to an algorithm that computes a maximal matching, since it simplifies part of the description and argument, while preserving the main ideas.
\begin{lemma}
	Suppose there is an algorithm $\mathcal{A}$, which computes a maximal matching in a bipartite graph.
	Suppose that we are also given a bipartite graph $H = (L \cup R, E)$, and a matching $M$ in $H$ such that there is no augmenting path of length $\leq 2i - 1$ where $i\leq O(1/\epsilon)$.
	Then, by invoking $\mathcal{A}$ at most $\Theta(\epsilon^{-3})$ times, we can compute an induced subgraph $H'$ of $H$, together with a set $S$ of augmenting paths of length $2i+1$ which is maximal in $H'$ and such that $\Abs{V(H)} - \Abs{V(H')} \leq \epsilon^2 \Abs{M}$.
	\label[lemma]{lem:maximalImprovedStep}
\end{lemma}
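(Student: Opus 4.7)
The plan is to reduce the problem to a bounded sequence of maximal-matching computations on a layered auxiliary structure, mimicking the classic BFS-based approach for shortest augmenting paths but using $\mathcal{A}$ instead of explicit layer-by-layer BFS saturation. First I would do an alternating BFS from the $M$-free vertices in $L$, partitioning the relevant portion of $H$ into layers $L_0, L_1, \ldots, L_{2i+1}$ with the convention that $L_0 \subseteq L$ is the set of free left-vertices, transitions $L_{2j} \to L_{2j+1}$ use non-$M$ edges, and transitions $L_{2j+1} \to L_{2j+2}$ use $M$-edges. Because there is no augmenting path of length $\leq 2i-1$ by hypothesis, the layer $L_{2i+1}$ sits exactly at the free right-vertices of $R$. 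I would then prune each layer by a symmetric backward BFS from $L_{2i+1}$, keeping only vertices that lie on some alternating $L_0 \to L_{2i+1}$ chain, so that the resulting layered DAG has the property that every length-$(2i+1)$ alternating chain $v_0 v_1 \cdots v_{2i+1}$ with $v_j \in L_j$ is a candidate augmenting path.

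Next, I would run an outer loop of $T = \Theta(\epsilon^{-2})$ rounds. In each round, for $j = 0, 1, \ldots, i$, I invoke $\mathcal{A}$ once on the bipartite graph induced between the currently alive vertices of $L_{2j}$ and $L_{2j+1}$ (restricted to non-$M$ edges); the $M$-edges between $L_{2j+1}$ and $L_{2j+2}$ are forced and need no call. This is $i+1 = O(\epsilon^{-1})$ calls per round, for a total of $T(i+1) = \Theta(\epsilon^{-3})$ invocations, matching the claim. After one full pass through the $i+1$ layers of calls, I would trace which $L_0$-vertices admit a full chain of committed edges all the way to $L_{2i+1}$; those chains join the path set $S$ (and are retired, together with their vertices). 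The vertices that are \emph{stuck}, i.e.\ not extendable to a full chain, are divided into two groups: those whose obstruction can be certified as "locally blocked" by already-committed or newly-retired edges get deleted from $H$ into the set $V(H) \setminus V(H')$; the rest stay alive for the next outer round, where fresh calls to $\mathcal{A}$ may now commit different edges and potentially extend them.

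The main obstacle is the joint analysis of the two quantitative guarantees: the deletion bound $|V(H)| - |V(H')| \leq \epsilon^2 |M|$ and the maximality of $S$ in $H'$. My intended approach is a charging/potential-function argument: every time a vertex is moved into $V(H)\setminus V(H')$, I would charge it to a distinct $M$-edge that sits on the alternating path certifying its obstruction, and use the constant depth $2i+1 = O(\epsilon^{-1})$ of the layering to show that each $M$-edge receives at most $O(\epsilon^{-1})$ charges across all outer rounds; since $|M|$ upper-bounds the pool of chargeable $M$-edges, the total deletions are $O(\epsilon^{-1}) \cdot |M|$ per round and the $T = \Theta(\epsilon^{-2})$ rounds must be set up so that with the right quota (say, deleting an $\epsilon^{O(1)}$ fraction of stuck vertices per round) the cumulative total is $\leq \epsilon^2 |M|$. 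Maximality of $S$ in $H'$ follows because, were any length-$(2i+1)$ augmenting path to survive in $H'$ at termination, the pigeonhole principle over $T$ outer rounds combined with the maximality guarantee of $\mathcal{A}$ at each of the $i+1$ layer-calls would force at least one of these calls to have violated maximality, a contradiction. The delicate part of this plan that I would need to work through carefully is choosing the deletion quota so that the charging and the maximality arguments are simultaneously consistent with the same value of $T$.
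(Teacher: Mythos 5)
Your first step (the alternating-BFS layer graph $L_0,\dots,L_{2i+1}$) matches the paper, but the core mechanism after that diverges and, as written, does not work. The paper extends partial paths \emph{sequentially}: in each iteration it calls $\mathcal{A}$ once, on the bipartite graph between the current endpoints of the active partial paths and their out-neighbors; matched paths advance two hops (the forced $M$-edge is free), unmatched paths of positive length \emph{backtrack} and deactivate their last two vertices, and unmatched length-$0$ paths die. The whole analysis hinges on this structure: every active path causes at least one "update" per iteration, the total number of updates is $O(|M|)$, so after $\Theta(\epsilon^{-3})$ iterations at most $\epsilon^{3}|M|$ paths are still active, and deleting only \emph{those} (each of length $O(1/\epsilon)$) gives $|V(H)|-|V(H')|\le \epsilon^{2}|M|$. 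Your scheme instead runs $\mathcal{A}$ independently on every layer pair $(L_{2j},L_{2j+1})$ over all alive vertices and then hopes the per-layer matchings stitch into chains. There is no reason consecutive layer matchings compose: a vertex matched in layer $j$ may have its $M$-partner unmatched in layer $j+1$, so a round can produce essentially no complete chains, and you have no potential/progress argument forcing termination within $T=\Theta(\epsilon^{-2})$ rounds. Moreover your own deletion accounting gives $O(\epsilon^{-1})|M|$ deletions per round, which is off by orders of magnitude from the target $\epsilon^{2}|M|$; the proposed fix (a deletion "quota") is exactly the part you leave open, and tightening it works against the maximality requirement, since every stuck-but-undeleted vertex stays in $H'$ and must be shown not to lie on any surviving augmenting path.

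That last point is the second, and more serious, missing idea: maximality of $S$ in $H'$ is not a pigeonhole consequence of $\mathcal{A}$'s guarantees. The delicate case is a vertex that failed to extend because its potential partners were grabbed by paths that \emph{later backtracked}; those backtracked vertices remain in $H'$, so one must prove they can never reach $L_{2i+1}$ at all. The paper does this with an explicit induction maintaining two invariants (removing the current paths disconnects $L_0$ from $L_{2i+1}$, and the lower endpoint of every deleted $M$-edge has no path to $L_{2i+1}$), proved via a "blame" chain that advances one layer per step and must terminate at the unmatched last layer. Your sketch contains no analogue of this argument, and in your layer-parallel setup it is not even clear how to formulate it, since matched edges in intermediate layers need not correspond to any partial path from $L_0$. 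So the proposal is not a correct alternative proof: the call budget and the layer graph are right, but both quantitative conclusions of the lemma (the $\epsilon^{2}|M|$ vertex-deletion bound and the maximality of $S$ in $H'$) are unsupported by the argument as given.
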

\begin{proof}
	We will proceed with the proof in 3 steps:
	First, we will show how to find a subgraph of $H$ that captures all relevant information, then we discuss how the algorithm works on this subgraph, and finally we give an analysis for why the algorithm performs as stated.

	\paragraph{Layer graph}
	First, we create a \emph{layer graph} as follows:
	Layer $L_0$ contains all unmatched vertices in $L$.
	Layer $L_1$ contains all neighbors of vertices in $L_1$.
	Then, layer $L_2$ contains all vertices that are connected by an edge from $M$ to a vertex in $L_1$,
	Then, we create layer $L_3$ in the same way as $L_1$, only considering vertices that are not already contained in some layer, and so on.
	We stop at layer $L_{2i+1}$, in which we only include vertices from $R$ that are not incident to any edge in $M$.

	An alternative way of defining the same graph is as follows: First, we orient all edges that are not in $M$ from $L$ to $R$ and all edges in $M$ from $R$ to $L$.
	Then, we perform a breadth-first search (BFS) according to this orientation, for $2i+1$ steps, and starting from all unmatched vertices in $L$ simultaneously.
	We then discard any edge that are not between consecutive layers, any vertices that were not added to any layer during these $2i+1$ steps, and we discard also vertices that were added to $L_{2i+1}$ but are matched.

	The rest of the algorithm is performed only on this layer graph.
	Note that any augmenting path $P = (v_0, \dots, v_{2i+1})$ of length $2i+1$ is contained in this layer structure such that $v_j \in L_j$.
	This is since on one hand $P$ must be contained in the layer graph by construction, and if we assume that any $v_j \notin L_j$, then we also have that $v_{2i+1}$ is in a layer before $L_{2i+1}$ which would mean there exists an augmenting path of length at most $2i-1$.
	Thus we know that we only have to consider augmenting paths that respect the structure of the layer graph.

	\paragraph{Augmentation algorithm on the layer graph}
	The augmentation algorithm work as follows:
	We start with a set of paths $S_0$ which is all paths of length zero starting at the vertices of $L_0$.
	In the first step, we compute a matching between $S_0$ and all their neighbor vertices, which is precisely $L_1$.
	We expand the paths of all vertices that were matched by two hops, first to their matched neighbor and then along the edge in $M$ which connects this neighbor to $L_2$.
	All unmatched vertices are removed from further consideration in the algorithm.
	For future steps, the procedure is almost the same:
	in each step $j$ we try to expand the paths of $S_j$ by computing a matching between their endpoints and all their outgoing neighbors.
	However, if a path of length $>1$ is not matched, we do not delete it, but instead we ``backtrack'' and deactive its last two vertices, meaning they won't participate for the rest of this algorithm.
	Then, we add it to the sets of paths $S_{j+1}$ for the next iteration.
	All paths that are matched are then expanded as in the first step and added to $S_{j+1}$.
	If a path reaches $L_{2i+1}$, i.e. the last layer, it will stop participating in the algorithm and say that it becomes \emph{inactive}.
	Overall, we execute this process for $\Theta(\epsilon^{-3})$ iterations.
	Finally, we return the set of all paths that reached $L_{2i+1}$ as $S$, and remove all vertices on paths that are still active, to get the subgraph $H'$.
	However, all vertices that were deactivated during the algorithm as part of the backtracking, are still part of $H'$.

	\paragraph{Analysis} We claim that after $\Theta(\epsilon^{-3})$ iterations, at most $\epsilon^3 \Abs{M}$ active paths remain.
	We will show that there are at most $6 \Abs{M}$ updates that can occur (after the first iteration) and each active path causes at least one update per iteration.
	Note that in layers $L_1$ to $L_{2i}$ there are at most $2 \Abs{M}$ vertices, as all vertices of those layers are incident to edges from $M$.
	Further, each of these vertices can be updated at most twice, once when it is added to and once when it is potentially deleted from a path.
	The remaining $2 \Abs{M}$ are due to vertices from $L_0$, of which at most $\Abs{M}$ remain after the first iteration, and which can only be deleted once after, and from vertices of $L_{2i+1}$, which can only be updated when a path becomes inactive, which can happen at most once for each of the at most $\Abs{M}$ paths.
	Suppose we had more than $\epsilon^3 \Abs{M}$ active paths after $10 \epsilon^{-3}$ iterations.
	Each of those paths has caused at least one update for each iteration it was updated, thus overall there would be $10 \Abs{M}$ updates, which as we argued before is not possible.
	As every active path has at most $2i-1 \leq 1/\epsilon$ vertices, removing all active paths will remove at most $\epsilon^2 \Abs{M}$ vertices.

	Finally, we need to argue that $S$ is a maximal set of augmenting paths in $H'$.
	An alternative formulation of this maximality condition is to argue that removing all vertices on paths in $S$ will disconnect the first and last layers, i.e. $L_0$ and $L_{2i+1}$, in $H'$.
	We will prove this by showing that in each step of the algorithm, after the first one, we have two properties:
	(1) removing the paths from $S_j$ disconnects the layers $L_0$ and $L_{2i+1}$ in $H$, and
	(2) for all edges $e = (v_k, v_{k+1}) \in M$ that were deleted up to (and including) step $j$, there is no path from $v_{k+1}$ to $L_{2i+1}$ in $H \setminus S_{j}$.
	Let us start with the first step of the algorithm.
	Since we compute a maximal matching between $L_0$ and $L_1$, removing the endpoints of this matching will disconnect $L_0$ and $L_1$.
	The second property is trivially true, since there are no such edges yet.

	So suppose that properties (1) and (2) are satisfied after step $j$ of the algorithm, and we want to show that they hold after step $j+1$.
	We will start with property (2), which will then imply property (1).
	Consider an edge $e = (v_k, v_{k+1}) \in M$ that was removed in step $j+1$ of the algorithm.
	For each neighbor $u$ of $v_{k+1}$ in layer $L_{k+2}$, there are several reasons why $v_{k+1}$ was not matched to $u$.
	If $u$ is already part of an active path, or was matched in this round to another active path, we have that property (2) trivially holds.
	If $u$ (together with its incident edge from $M$) has been deleted in a previous iteration, then we can just use the inductive hypothesis and also immediately get property (2).
	So the remaining case is that $u$ was also deleted in iteration $j+1$ due to backtracking.
	However, in this case we can say we \emph{blame} the edge $(u, w) \in M$ for $v_{k+1}$ not being matched in step $j+1$.
	Note that $e$ can blame multiple edges.
	If a similar situation occurs for why $w$ was not extended, we also blame this new edge.
	Note that in each of theses blaming steps we advance one layer, and finally in layer $L_{2i+1}$ we cannot blame any more edges in $M$, since all vertices in $L_{2i+1}$ are unmatched.
	Thus, this procedure must end, or we would have an active path that could be extended to $L_{2i+1}$ contradicting the maximality of our computed matching.
	Now for property one, we have that for all paths that were extended, this trivially still holds, and for all other paths we have just shown property (2), so the set $S_{j+1}$ still disconnects $L_0$ and $L_{2i+1}$.
\end{proof}
We are now ready to extend the argument to the case when we only have an algorithm that computes a constant approximation of the maximum matching, and not a maximal one.
First, we will introduce an alternative notion of an approximation guarantee that an algorithm can have.
\begin{definition}
	We say that a matching $M$ is $\delta$-maximal in $G$, if the maximum matching size of $G \setminus V(M)$ is at most $\delta \cdot \Abs{M}$.
\end{definition}
In words, this means that removing the matching $M$ from $G$ gives a graph that does not contain a large matching.
We can also easily obtain such a $\delta$-maximal matching, given an algorithm for computing any constant approximation of maximum matching.
\begin{lemma}
	Given an algorithm $\mathcal{A}$ that computes a $c$-approximation of maximum matching, we can find a $\delta$-maximal matching $M$ using at most $O(1/c \log 1/\delta)$ invocations of $\mathcal{A}$.
	\label[lemma]{lem:deltaMax}
\end{lemma}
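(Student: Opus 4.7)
The plan is to compose $\mathcal{A}$ with itself in a straightforward iterative fashion. Starting with $M_0 = \emptyset$ and $G_0 = G$, at each step $i$ I would invoke $\mathcal{A}$ on the current residual graph $G_i$ to obtain a matching $N_i$, append it to the accumulated matching $M_{i+1} = M_i \cup N_i$, and set $G_{i+1} = G \setminus V(M_{i+1})$ to be the subgraph induced on the still-unmatched vertices. Since the $N_i$ are vertex-disjoint by construction, their union is always a valid matching. The final output $M$ is taken after $k = O((1/c) \log(1/\delta))$ iterations.

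For the analysis, let $\mu_i$ denote the maximum matching size in $G_i$. The single key observation is a ``consumption'' inequality: for any matching $M'$ in $G_{i+1}$, the set $M' \cup N_i$ is vertex-disjoint by the definition of $G_{i+1}$ and is therefore a matching in $G_i$, so $\mu_{i+1} + |N_i| \leq \mu_i$. Combined with the approximation guarantee $|N_i| \geq c \cdot \mu_i$ from $\mathcal{A}$, this yields the recurrence $\mu_{i+1} \leq (1-c)\mu_i$, and hence $\mu_k \leq (1-c)^k \mu_0 \leq e^{-ck}\mu_0$ by a simple induction.

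To translate this geometric decay into $\delta$-maximality, I would use the fact that $|M| \geq |N_0| \geq c \mu_0$. Thus $\mu_k \leq \delta |M|$ is implied by the stronger bound $e^{-ck}\mu_0 \leq \delta c \mu_0$, which is achieved once $ck \gtrsim \log(1/(c\delta))$, i.e., after $k = O((1/c)\log(1/\delta))$ iterations (treating the $\log(1/c)$ term as absorbed by the constants, as in the lemma). Each iteration makes exactly one call to $\mathcal{A}$, giving the claimed invocation count. I do not anticipate any real technical obstacle here: the only bookkeeping one has to be careful about is lower-bounding $|M|$ relative to $\mu_0$ so that the ratio $\mu_k/|M|$ can be controlled, but the trivial bound $|M| \geq c\mu_0$ already handles this cleanly.
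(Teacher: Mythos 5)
Your proposal is correct and follows essentially the same route as the paper: iterate $\mathcal{A}$ on the residual graph of unmatched vertices and observe that the residual maximum matching size decays geometrically, since each returned matching can be added disjointly to any residual matching. The only difference is the final normalization --- you lower-bound $\Abs{M} \geq c\mu_0$ via the first invocation, whereas the paper notes that $M$ plus a maximum matching of the residual graph is maximal and hence $\Abs{M} \geq \mu_0/3$, so your iteration count carries an extra additive $\log(1/c)$ term that is harmlessly absorbed into the constants when $c$ is a constant.
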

\begin{proof}
	We have $I=\Theta(1/c \log 1/\delta)$ iterations.
	In each iteration, we apply algorithm $\mathcal{A}$ on the subgraph induced by the remaining nodes, we add the matching that it computes to our output matching, and we remove all matched vertices from the graph, before proceeding to the next iteration.
	Let $|M^*|$ be the size of the maximum matching in the original graph.
	In the first iteration, algorithm $\mathcal{A}$ finds a matching with size $|M^*|/c$ and thus, the size of the maximum matching in the subgraph induced by the nodes that remain is at most $(1-1/c)|M^*|$.
	In the second iteration, algorithm $\mathcal{A}$ finds a matching whose size is at least a $1/c$ fraction of the maximum matching in the graph that remained after the first iteration.
	Hence, the size of the maximum matching in the graph that remains after the second iteration is at most $(1-1/c)^2|M^*|$.
	Similarly, and by simple induction, we see that after $i$ iterations, the size of the maximum matching in the graph that remains after the $i^{th}$ iteration is at most $(1-1/c)^i|M^*|$.
	Let $M$ be the output matching computed during $I$ iterations.
	We then know that $|M| + (1-1/c)^I|M^*| \geq |M^*|/2$, because the output matching along with a maximum matching of the remaining graph form a maximal matching of the entire graph and thus have size at least $|M^*|/2$.
	We conclude that $|M|\geq (1/2-(1-1/c)^I)|M^*|\geq |M^*|/3$.
	Therefore, $M$ is a $\delta$-maximal matching as $((1-1/c)^I \cdot |M^*|)/(|M^*|/3)=3(1-1/c)^I\leq 3 e^{-\Theta(\log 1/\delta)} \leq \delta$.
\end{proof}
Using this notion of $\delta$-maximality, we can now extend the result to algorithms that compute a $c$-approximation of maximum matching.
\begin{lemma}
	Suppose there is an algorithm $\mathcal{A}$, which computes a $c$-approximation of maximum approximation in bipartite graphs.
	Suppose we are also given a bipartite graph $H = (L \cup R, E)$, and a matching $M$ in $H$ such that there is no augmenting path of length $\leq 2i - 1$.
	Then, by invoking $\mathcal{A}$ at most $O(1 /c \cdot \epsilon^{-3} \cdot \log 1/\epsilon)$ times, we can compute a subgraph $H'$ of $H$, together with a set $S$ of augmenting paths of length $2i+1$ which is maximal in $H'$.
	Further, we have that $\Abs{V(H)} - \Abs{V(H')} \leq \epsilon^2 \Abs{M}$, and that the graph $H'' = (V(H'), E(H) \setminus E(H'))$ does not contain a matching of size $\epsilon^2 \Abs{M}$.
	\label[lemma]{lem:approxImproved}
\end{lemma}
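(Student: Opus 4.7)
The plan is to run the same layered augmentation algorithm as in \Cref{lem:maximalImprovedStep}, with two modifications: every invocation of the maximal matching subroutine is replaced by a single call to the procedure of \Cref{lem:deltaMax} with parameter $\delta=\Theta(\epsilon^{5})$, which produces a $\delta$-maximal matching at a cost of $O((1/c)\log(1/\delta))=O((1/c)\log(1/\epsilon))$ invocations of $\mathcal{A}$; and whenever such a $\delta$-maximal step leaves an edge between a still-unmatched active-path endpoint in $L_k$ and a still-available vertex in $L_{k+1}$, that edge is set aside into a collection $F$. Then $H'$ is defined by removing from $H$ the vertices that lie on paths still active at the end, and additionally removing the edges of $F$; the graph $H''=(V(H'),E(H)\setminus E(H'))$ then consists of precisely those residual edges in $F$ whose endpoints both survive into $V(H')$.

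The amortized bookkeeping of \Cref{lem:maximalImprovedStep} carries over essentially verbatim: every active path still either gets extended by two hops or is backtracked in each iteration, each of which constitutes at least one update to a layer vertex, and the global cap of $6|M|$ on the number of updates is insensitive to whether each matching is exactly maximal or only $\delta$-maximal. Hence after $\Theta(\epsilon^{-3})$ iterations at most $\epsilon^{3}|M|$ paths remain active, and since each active path has length at most $2i+1=O(1/\epsilon)$, the vertices removed at the end number at most $\epsilon^{2}|M|$. Multiplying the $\Theta(\epsilon^{-3})$ iterations by the $O((1/c)\log(1/\epsilon))$ invocations of $\mathcal{A}$ per iteration yields the claimed $O((1/c)\epsilon^{-3}\log(1/\epsilon))$ total.

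The main obstacle is to bound the matching size of $H''$ and to re-establish the maximality of $S$ in $H'$. By the defining property of $\delta$-maximality, the residual bipartite graph produced at each step has matching size at most $\delta$ times the matching found there, hence at most $\delta|M|$; summing over the $\Theta(\epsilon^{-3})$ steps and using subadditivity of matching size under graph union shows that the matching size of $H''$ is at most $\Theta(\delta\,\epsilon^{-3})|M|=\Theta(\epsilon^{2})|M|$, which is where the choice $\delta=\Theta(\epsilon^{5})$ is used. For the maximality of $S$, the blaming argument of \Cref{lem:maximalImprovedStep} is adapted by extending its case analysis: when one asks why $v_{k+1}$ failed to extend to a neighbor $u\in L_{k+2}$, a fourth possible answer is added to the original three, namely ``$(v_{k+1},u)$ is a residual edge that the $\delta$-maximal matching missed,'' and such an edge lies in $F$ and is therefore absent from $E(H')$, so it cannot carry an augmenting path through $H'$. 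With this single extra case, the inductive invariants (1) and (2) of the original proof propagate unchanged and show that removing $V(S)$ from $H'$ disconnects $L_0$ from $L_{2i+1}$, i.e., $S$ is maximal in $H'$.
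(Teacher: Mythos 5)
Your proposal matches the paper's own proof essentially step for step: replace the maximal-matching subroutine by the $\delta$-maximal one from \Cref{lem:deltaMax} with $\delta=\Theta(\epsilon^5)$, remove the residual edges left uncovered at each step so that they constitute $H''$, bound the matching in $H''$ by $\Theta(\epsilon^{-3})\delta\Abs{M}\leq \epsilon^2\Abs{M}$, and note that the vertex bound, the update-counting, and the maximality/blaming argument from \Cref{lem:maximalImprovedStep} go through since any edge that could have extended the matching has been removed from $H'$. The call count $\Theta(\epsilon^{-3})\cdot O((1/c)\log(1/\epsilon))$ is also exactly the paper's accounting, so this is correct and the same approach.
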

\begin{proof}
	First, we will use \Cref{lem:deltaMax} and assume that instead of algorithm $\mathcal{A}$ we have an algorithm $\mathcal{A}'$, which computes a $\delta$-maximal matching.
	This only incurs a $O( 1/c \cdot \log 1/\delta)$ overhead in the number of calls to algorithm $\mathcal{A}$.
	The algorithm for finding the augmenting paths is the same as in \Cref{lem:maximalImprovedStep}, except that we compute a $\delta$-maximal matching, instead of a (fully) maximal one.
	After computing a matching, we remove all edges that are between unmatched vertices (note that in the case of a maximal matching, there would be no removed edges).
	At the end of the algorithm, we do the same as in \Cref{lem:maximalImprovedStep}, except that $H'$ is not an induced subgraph anymore, since we also removed some edges between unmatched nodes.

	Note that $H''$ contains all edges that were removed due to their endpoints being unmatched.
	In each step of the algorithm, the largest matching in this set of edges is at most $\delta \Abs{M}$, thus over all iterations the matching in $H''$ can have size at most $\Theta(\epsilon^{-3}) \delta \Abs{M}$, which is at most $\epsilon^2 \Abs{M}$ for $\delta = \Theta(\epsilon^5)$.

	We have that $\Abs{V(H)} - \Abs{V(H')} \leq \epsilon^2 \Abs{M}$ still holds the same way as in the case for maximal matching, so the only thing we need to argue is that the computed set $S$ is maximal in $H'$.
	Note that this is also still true, since all edges that could be added to extend the computed matching to a maximal one are removed, thus still keeping $L_0$ and $L_{2i+1}$ disconnected.

	We compute a $\delta$ maximal matching in each of the $\Theta(\epsilon^{-3})$ iterations, which each time incurs $O(1/c \cdot \log 1 / \epsilon)$ calls to $\mathcal{A}$.
	Thus, overall, the algorithm $\mathcal{A}$ is invoked $O(1 /c \cdot \epsilon^{-3} \cdot \log 1/\epsilon)$ times.
\end{proof}
Finally, we can now repeatedly use this to find augmenting paths of length $2i+1$, augment them, and proceed with a graph that does not contain any more augmenting paths of length at most $2i+1$.
We do this until we reach a stage where there are no more augmenting paths of length $5 \epsilon^{-1}$, at which point we know from \Cref{lem:AugPathApprox} that the computed matching must be a $(1 + \epsilon)$-approximation.
\begin{theorem}
	Suppose we are given an algorithm $\mathcal{A}$ which computes a $c$-approximation of maximum matching in bipartite graphs.
	Suppose we are also given a bipartite graph $H$.
	Then, we can find a $(1 + 3\epsilon)$-approximation of maximum matching in $H$ using at most $O(1/c \cdot \epsilon^{-4} \cdot \log 1/\epsilon)$ calls to $\mathcal{A}$.
	\label[theorem]{thm:BipMatchApprox}
\end{theorem}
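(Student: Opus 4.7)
The plan is to iteratively apply \Cref{lem:approxImproved} with augmenting-path lengths $1, 3, 5, \ldots, 4\epsilon^{-1}+1$, maintaining a matching that has no short augmenting path in a shrinking residual subgraph. I would start with $M_0 = \emptyset$ and $\tilde H_{-1} := H$, and for $i = 0, 1, \ldots, K-1$ with $K = \lceil 2\epsilon^{-1} \rceil + 1$, invoke \Cref{lem:approxImproved} with input graph $\tilde H_{i-1}$ and matching $M_i$ to obtain a subgraph $\tilde H_i \subseteq \tilde H_{i-1}$ together with a maximal set $S_i$ of vertex-disjoint length-$(2i+1)$ augmenting paths in $\tilde H_i$. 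I would then set $M_{i+1} := M_i \triangle S_i$ and output $M := M_K$.

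By the classical Hopcroft--Karp observation, augmenting along a maximal vertex-disjoint set of length-$(2i+1)$ augmenting paths preserves the absence of shorter augmenting paths and eliminates those of length exactly $2i+1$; applied inside $\tilde H_i$, this yields the inductive invariant that $M_{i+1}$ has no augmenting path of length $\leq 2i+1$ in $\tilde H_i$. After all $K$ iterations, the final matching $M$ therefore has no augmenting path of length $\leq 4\epsilon^{-1}+1$ in $\tilde H_{K-1}$, so by \Cref{lem:AugPathApprox} the maximum matching in $\tilde H_{K-1}$ has size at most $(1+\epsilon)|M|$.

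For the invocation count, each call to \Cref{lem:approxImproved} uses $O(c^{-1} \epsilon^{-3} \log(1/\epsilon))$ invocations of $\mathcal{A}$, and $K = O(\epsilon^{-1})$ iterations yield the stated $O(c^{-1} \epsilon^{-4} \log(1/\epsilon))$ invocations overall. To bound how much the maximum matching can shrink when passing from $H$ to $\tilde H_{K-1}$, I would observe that at iteration $i$ the subroutine removes at most $\epsilon^2 |M_i|$ vertices from the residual and drops an edge set whose maximum matching is at most $\epsilon^2 |M_i|$; using $|M_i| \leq |M^*|$ and summing over the $O(1/\epsilon)$ iterations gives a total vertex deficit of $O(\epsilon)|M^*|$, and a total matching-equivalent edge deficit of $O(\epsilon)|M^*|$ (using that the matching size of a disjoint union of edge sets is sub-additive in the matching sizes of the parts). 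Combining gives $|M^*| \leq \mu(\tilde H_{K-1}) + O(\epsilon)|M^*| \leq (1+\epsilon)|M| + O(\epsilon)|M^*|$, which rearranges to $|M^*| \leq (1 + 3\epsilon)|M|$ for sufficiently small $\epsilon$.

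The main obstacle is the inductive bookkeeping of the residual graph $\tilde H_i$ alongside the matching progress: one must verify that the local guarantee of \Cref{lem:approxImproved} (no short augmenting paths in the internally computed $H'_i$) genuinely lifts to the global guarantee (no short augmenting paths in the cumulative residual seen across all iterations), and that the locally stated losses of removed vertices and edges combine additively into an $O(\epsilon)|M^*|$ global loss. A minor secondary subtlety is the degenerate base case $M_0 = \emptyset$, where $\epsilon^2|M_0| = 0$ makes the per-iteration loss bound vacuous but not harmful; this can be handled by first using \Cref{lem:deltaMax} with a constant $\delta$ to seed a matching of size $\Omega(|M^*|)$ at an additional $O(c^{-1})$ cost, after which the inductive analysis runs normally.
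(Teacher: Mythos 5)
Your proposal is correct and follows essentially the same route as the paper: iterate \Cref{lem:approxImproved} over augmenting-path lengths $1,3,\dots,4\epsilon^{-1}+1$, maintain the Hopcroft--Karp invariant, apply \Cref{lem:AugPathApprox} to the final residual subgraph, and charge the $O(\epsilon^{-1})$ iterations' vertex and edge losses of $O(\epsilon^{2})$ times a matching size. The only deviations are cosmetic: you bound the losses by $|M^*|$ and rearrange (the paper bounds them by the final $|M|$ and gets $1+7\epsilon$ despite stating $1+3\epsilon$), so like the paper your constant is really $1+O(\epsilon)$ and is fixed by rescaling $\epsilon$; your optional seeding via \Cref{lem:deltaMax} for the $M_0=\emptyset$ base case is a harmless extra.
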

\begin{proof}
	We start with an empty matching $M_0$ and use the algorithm from \Cref{lem:AugPathApprox} (which in turn uses $\mathcal{A}$) on $H_0 = H$ to find a set of augmenting paths of length 1 in a subgraph $H_1$.
	We augment along all these paths, i.e. add these edges to $M_0$ to create $M_1$.
	By the guarantees of \Cref{lem:AugPathApprox}, we get that after this augmentation there are no more augmenting paths of length 1 in $H_1$, so we continue to the next iteration.
	We invoke the algorithm from \Cref{lem:AugPathApprox} again, this time on the graph $H_1$, and setting $i = 1$.
	This yields another subgraph $H_2$ together with a set of augmenting paths.
	We again augment along all these paths, flipping their edges with respect to their membership in $M_1$, resulting in $M_2$.
	This leaves us with no more augmenting paths of length $3$ in $H_2$, and we continue like this for a total of $\ell = 2 \epsilon^{-1} + 1$ iterations, at which point we have a matching $M_\ell$ in a subgraph $H_\ell$ such that there are no augmenting paths of length $2\ell - 1 = 4 \epsilon + 1$ in $H_\ell$ with respect to $M_\ell$.
	By \Cref{lem:AugPathApprox}, we have that $M_\ell$ is a $(1 + \epsilon)$-approximation of maximum matching in $H_\ell$, but what can we say about the approximation quality of $M_\ell$ in $H$?
	We set $M = M_\ell$ and will show that $M$ is also a $(1 + 7\epsilon)$-approximation of maximum matching in $H$.

	In each step, we removed $\epsilon^2 \Abs{M}$ vertices, so over all $\ell = 2 \epsilon^{-1} + 1$ iterations we removed at most $3 \epsilon \Abs{M}$ vertices from $H$ to obtain $H_\ell$.
	However, we also removed some edges.
	In each iteration we removed a set of edges, which does not contain a matching of size larger than $\epsilon^2 \Abs{M}$.
	Thus, the matching in the union of all those edges can have size at most $\ell \cdot \epsilon^2 \Abs{M} \leq 3 \epsilon \Abs{M}$.
	Thus, the maximum matching $M^*$ in $H$ can have size at most $\Abs{M^*} \leq (1 + 7 \epsilon) \Abs{M}$.

	For the number of calls to algorithm $\mathcal{A}$, we have that each iteration yields $O(1/c \cdot \epsilon^{-3} \cdot \log 1/\epsilon)$ calls, so we have $O(1/c \cdot \epsilon^{-4} \cdot \log 1/\epsilon)$ over all iterations.
\end{proof}
\paragraph{General Graphs}
We will now show how to use the algorithm for bipartite graphs to also compute an improved matching in general graphs.
The main idea is to randomly bipartition the graph, and then try to augment the matching using just the edges that cross the bipartition.
\begin{theorem}
	\label[theorem]{thm:ImpMatchingApprox}
	Suppose we have an algorithm $\mathcal{A}$ that given a bipartite graph $H$ computes a $c$-approximation of the maximum matching in $H$.
	Then, for any graph $G$, we can find a matching that is a $(1 + \epsilon)$-approximation of the maximum matching in expectation, using at most $2^{O(1/\epsilon)}$ invocations of $\mathcal{A}$.
\end{theorem}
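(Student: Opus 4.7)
The plan is to reduce to the bipartite case by iterated uniform random $2$-coloring. In each iteration, assign every vertex to $L$ or $R$ independently with probability $1/2$, form the bipartite graph $H$ of crossing edges, freeze the vertices incident to non-crossing edges of the current matching $M$, and invoke the bipartite $(1+\Theta(\epsilon))$-approximation of \Cref{thm:BipMatchApprox} on $H$ restricted to the unfrozen vertices, starting from the crossing portion of $M$. Re-attaching the frozen non-crossing matching edges to the output yields the new matching $M_{\mathrm{new}}$. I would iterate $T = 2^{O(1/\epsilon)}$ times starting from $M_0 = \emptyset$.

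The core claim is a per-iteration progress lemma: if $\Abs{M} \leq \Abs{M^*}/(1+\epsilon)$ then
\[
\E[\Abs{M_{\mathrm{new}}}] - \Abs{M} \;\geq\; 2^{-O(1/\epsilon)} \cdot (\Abs{M^*} - \Abs{M}).
\]
The symmetric difference $M \oplus M^*$ contains exactly $\Abs{M^*} - \Abs{M}$ vertex-disjoint augmenting paths of $M$. A double-counting argument in the spirit of \Cref{lem:AugPathApprox} shows that at least half of them have length $\leq L := 4/\epsilon + 1$: if instead more than half were longer, each long path would contribute more than $2/\epsilon$ matching edges to $M$, forcing $\Abs{M} > (2/\epsilon) \cdot (\Abs{M^*} - \Abs{M})/2 = (\Abs{M^*} - \Abs{M})/\epsilon$, which contradicts the hypothesis. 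A fixed augmenting path with $k$ edges has all its edges crossing a uniformly random bipartition with probability $2^{-k}$, so in expectation $\Omega(2^{-L}) \cdot (\Abs{M^*} - \Abs{M}) = 2^{-O(1/\epsilon)}(\Abs{M^*} - \Abs{M})$ short augmenting paths land entirely in $H$. Whenever all edges of a short augmenting path $P$ are crossing, every internal vertex of $P$ is matched via an edge of $P$ itself, hence via a crossing matching edge, so no internal vertex of $P$ is frozen and $P$ genuinely belongs to the bipartite instance fed to \Cref{thm:BipMatchApprox}. Thus the maximum matching of that instance beats $\Abs{M}$ by at least the claimed amount in expectation, and \Cref{thm:BipMatchApprox} recovers it up to a $(1+\Theta(\epsilon))$ factor.

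Iterating the progress lemma $T = \log(1/\epsilon) \cdot 2^{O(1/\epsilon)} = 2^{O(1/\epsilon)}$ times shrinks $\Abs{M^*} - \E[\Abs{M_T}]$ below $\epsilon \Abs{M^*}/(1+\epsilon)$ by a routine geometric-decrement calculation (conditioning on $M_{t-1}$ before averaging over the fresh bipartition), yielding the advertised $(1+\epsilon)$-approximation in expectation. Each iteration invokes $\mathcal{A}$ at most $O(c^{-1} \epsilon^{-4} \log(1/\epsilon))$ times by \Cref{thm:BipMatchApprox}, so the grand total is $2^{O(1/\epsilon)}$ invocations.

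The principal obstacle is the progress lemma. Beyond the length-versus-count trade-off above, the delicate point is cleanly formulating the bipartite instance and verifying the translation both directions: the bipartite subroutine sees only $H$ together with the crossing portion of $M$, yet must benefit from augmenting paths coming from $M^*$ that it cannot explicitly see. The key observation that makes this go through is the implication "all edges of $P$ cross $\implies$ no internal vertex of $P$ is frozen", which is automatic but needs to be stated carefully. A second, minor concern is reconciling the per-iteration conditional progress bound with the unconditional expectation over $M_T$; this is handled by a standard potential argument on the (nonnegative) gap $\Abs{M^*} - \Abs{M_t}$.
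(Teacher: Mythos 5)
Your overall strategy is the same as the paper's: random bipartition, deletion (``freezing'') of the endpoints of non-crossing matching edges so that the bipartite output can be recombined with the untouched part of $M$, counting the short augmenting paths of $M\oplus M^*$ that survive the bipartition (each $k$-edge path with probability $2^{-k}$), and iterating $2^{O(1/\epsilon)}$ times. The structural observations you highlight (at least half the augmenting paths have length $O(1/\epsilon)$; a fully-crossing path has none of its internal vertices frozen) are exactly the ones the paper uses.

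The genuine gap is the accuracy with which you invoke \Cref{thm:BipMatchApprox}. You call it with parameter $\Theta(\epsilon)$ (your per-iteration count $O(c^{-1}\epsilon^{-4}\log(1/\epsilon))$ confirms this), and conclude that the subroutine ``recovers'' the improved bipartite optimum ``up to a $(1+\Theta(\epsilon))$ factor.'' But a $(1+\Theta(\epsilon))$-approximation of the bipartite optimum $M_H^*$ may fall short of $\Abs{M_H^*}$ by $\Theta(\epsilon)\Abs{M_H^*}$, which is of order $\epsilon\Abs{M^*}$, whereas the advantage of $M_H^*$ over the crossing part $M_H$ that you get from preserved augmenting paths is only $2^{-\Theta(1/\epsilon)}(\Abs{M^*}-\Abs{M})\le 2^{-\Theta(1/\epsilon)}\epsilon\Abs{M^*}$ in expectation. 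When the gap is near the threshold $\frac{\epsilon}{1+\epsilon}\Abs{M^*}$, the approximation slack exceeds the expected gain by a factor of roughly $2^{\Theta(1/\epsilon)}$, so your progress lemma $\E[\Abs{M_{\mathrm{new}}}]-\Abs{M}\ge 2^{-O(1/\epsilon)}(\Abs{M^*}-\Abs{M})$ does not follow (the guaranteed bound even allows the matching to shrink). Note you cannot fall back on $\E[\Abs{M_H^*}]\ge\Abs{M^*}/2$ either, since the frozen vertices also delete crossing edges of $M^*$; the preserved-augmenting-path accounting is really needed. Seeding the bipartite routine with $M_H$ does not rescue the coarse accuracy: a matching within a $(1+\Theta(\epsilon))$ factor of $\Abs{M_H^*}$ need not exploit any of the preserved paths. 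The fix is the paper's parameter choice: run \Cref{thm:BipMatchApprox} with accuracy $\Theta(\epsilon\cdot 2^{-\ell})$ for $\ell=\Theta(1/\epsilon)$, which costs $2^{O(1/\epsilon)}$ invocations per iteration and still gives the claimed $2^{O(1/\epsilon)}$ total. (Minor additional slips: the symmetric difference contains \emph{at least}, not exactly, $\Abs{M^*}-\Abs{M}$ augmenting paths, and it is the bipartite optimum \emph{plus the re-attached frozen edges} that beats $\Abs{M}$, not the bipartite instance alone.)
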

\begin{proof}
	To compute an initial matching $M_0$, we randomly partition the vertices of $G$ into two parts, delete all edges that are within the same partition, resulting in a graph $H$, and use $\mathcal{A}$ on $H$ to compute a $c$-approximation of the maximum matching in the resulting graph.
	Note that looking at a maximum matching $M^*$, each edge of $M^*$ is preserved in $H$ with probability $1/2$, so the expected size of the maximum matching in $H$ is at least $\Abs{M^*}/2$.
	Thus, in expectation $M_0$ is at least a $2c$-approximation of the maximum matching in $G$.

	Now we proceed in iterations $i = 1, 2, \dots$, where in iteration $i$ we start with a matching $M_i$.
	Suppose that $M_i$ is not a $(1 + \epsilon)$-approximation yet, i.e. we have that $\Abs{M^*} - \Abs{M_i} > \epsilon/2 \Abs{M^*}$.
	We randomly bipartition $G$ into two sets $L$ and $R$ and delete all edges that are within the same partition.
	Further, we also delete the endpoints of all edges in $M_i$ that are in the same partition.
	Let $H$ be the resulting graph.
	First, note that no matter how we change the matching in $H$, we can always add it back to $G$.
	That is, because we deleted all endpoints of matching edges that are in the same partition, thus those vertices cannot be matched in $H$.

	We will now argue that $H$ contains a sufficiently large potential for improvement:
	Since $M_i$ is not a $(1 + \epsilon)$-approximation, by a similar argument as in \Cref{lem:AugPathApprox} we have that there are at least $\epsilon / 4 \Abs{M^*}$ disjoint augmenting paths of length at most $\ell = 8 \epsilon^{-1} + 1$.
	Each of those paths is preserved with probability at least $2^{- \ell}$, i.e. all its edges are across the partition $L \cup R$.
	Thus, in expectation, we preserve at least $\epsilon / 4 \cdot 2^{- \ell} \Abs{M^*}$ many augmenting paths of length up to $\ell$.
	Let $M_H \subseteq M_i$ be the subset of $M_i$ that is preserved in $H$ and let $M_H^*$ be a maximum matching in $H$.
	Note that in expectation, $M_H^* \geq M^*/2$.
	Since for each preserved augmenting path, we can increase the size of $M_H$ by one, we have that (in expectation) $\Abs{M_H} + \epsilon / 4 \cdot 2^{- \ell} \Abs{M^*} \leq \Abs{M_H^*}$.
	We can now use \Cref{thm:BipMatchApprox} to compute a $(1 + \epsilon/8 \cdot 2^{-\ell} )$-approximation of $M_H^*$ in $H$, we call it $M_H'$.
	Comparing this to $M_H$ we get in expectation
	\begin{align*}
		\Abs{M_H'} - \Abs{M_H} & \geq (1 - \epsilon/8 \cdot 2^{-\ell} ) \Abs{M_H^*} - (1 - \epsilon/4 \cdot 2^{-\ell}) \Abs{M_H^*} \\
		                       & \geq \epsilon/8 \cdot 2^{-\ell} \Abs{M_H^*} \geq \epsilon/16 \cdot 2^{-\ell} \Abs{M^*}.
	\end{align*}
	This means that assuming the matching $M_i$ is not a $(1 + \epsilon)$-approximation, we can compute a matching $M_{i+1}$, that is by $\epsilon/16 \cdot 2^{-\ell} \Abs{M^*}$ larger than $M_i$.

	This can only happen at most $2^{O(1/\epsilon)}$ times, as after that we would have a matching of size larger than $M^*$.
	Thus, within $2^{O(1/\epsilon)}$ iterations we must reach a matching $M$, which is a $(1 + \epsilon)$-approximation to $M^*$ in expectation.
	We execute the algorithm $\mathcal{A}$ $2^{O(1/\epsilon)}$ time in each iteration, and there are $2^{O(1/\epsilon)}$ iterations, so the total number of calls to $\mathcal{A}$ is $2^{O(1/\epsilon)}$.
\end{proof}
\begin{remark}
	The augmentation also works if instead of a deterministic algorithm for computing a $c$-approximation of maximum matching we have an algorithm that outputs a $c$-approximation in expectation or with high probability.
\end{remark}

\subsection{Wrap Up}

\Cref{thm:mainMM} has two parts, one about matching, and about vertex cover.
For the vertex cover, the proof is given in \Cref{thm:VertexCover}.
For the matching, two more steps are necessary.
First, the algorithm only computes a fractional matching.
To get an integral matching, we can use \Cref{lem:rounding}, giving us a constant approximate integral matching (in expectation).
Finally, we can use this combination as a black-box in \Cref{thm:ImpMatchingApprox} to improve it to a $(1 + \epsilon)$-approximation.
For the node-averaged awake complexity, we have that it is constant for finding a constant approximation of maximum matching as shown in \Cref{cor:awakecomp}, the rounding takes a constant number of rounds in total, and then we only repeat it a constant number of times in the framework of \Cref{thm:ImpMatchingApprox}.

\section{Conclusion}

We showed that a maximal independent set can be computed with average awake complexity $O(1)$ while maintaining the round complexity of $O(\log n)$ of the standard Luby algorithm.
Further, we showed that the same bounds can be achieved to compute a $(1 + \epsilon)$-approximation of maximum matching as well as a $(2 + \epsilon)$-approximation of minimum vertex cover in expectation.

In future work, it would be interesting to investigate if one can obtain guarantees about the worst-case awake complexity that are better than the round complexity as well as if it is possible to compute a maximal matching with (any measure of) low awake complexity.

\section*{Acknowledgements}
The first author is grateful to Yuval Emek for discussions on an earlier attempt at the MIS problem.
The second author was supported by the Swiss National Foundation, under project number 200021\_184735.

\newpage
\bibliography{ref}
\bibliographystyle{alpha}

\end{document}